\documentclass[11pt,letterpaper]{article}
\usepackage[margin=.9in]{geometry}

\usepackage{graphicx}
\usepackage{epstopdf}
\usepackage{graphicx}
\usepackage{xcolor}
\usepackage[colorlinks=true,citecolor=blue,linkcolor=magenta]{hyperref}
\usepackage{mathtools}
\usepackage{amsfonts}
\usepackage{amsmath}
\usepackage{amssymb}
\usepackage{amsfonts}
\usepackage{amsthm}
\usepackage{tikz}
\usepackage{comment}
\usepackage{listings}
\usepackage{cleveref}
\usepackage{float}
\usepackage{authblk}

\newcommand{\h}{\mathcal{H}}
\newcommand{\comm}{{\textnormal{comm}}}
\newcommand{\unif}{{\textnormal{unif}}}

\allowdisplaybreaks

\DeclareMathOperator{\Span}{span}

\DeclareMathOperator{\Id}{Id}
\DeclareMathOperator{\SL}{SL}

\newtheorem{theorem}{Theorem}
\newtheorem{lemma}[theorem]{Lemma}
\newtheorem{corollary}[theorem]{Corollary}
\newtheorem{proposition}[theorem]{Proposition}

\newtheorem{problem}{Problem}

\newcommand{\End}{{\mathrm{End}}}

\newcommand{\msph}{\mathcal{M}_{\mathrm{sph}}}

\newcommand{\m}{\mathcal{M}}
\newcommand{\x}{\mathcal{X}}
\newcommand{\y}{\mathcal{Y}}
\newcommand{\z}{\mathcal{Z}}

\newcommand{\w}{\mathcal{W}}

\usepackage[backend=bibtex8,sorting=none,citestyle=numeric-comp,firstinits=true,doi=false,isbn=false,url=false,maxbibnames=6]{biblatex}
\renewbibmacro{in:}{}
\title{Randomness of exact unitary designs under symmetry}
\author[1]{Christopher Vairogs}
\author[2,3]{Felix Leditzky}
\affil[1]{Department of Physics, University of Illinois Urbana-Champaign, 1110 W Green St Loomis Laboratory, Urbana, IL 61801, USA.\thanks{\href{mailto:christopher.vairogs@gmail.com}{christopher.vairogs@gmail.com}}}
\affil[2]{Department of Mathematics, University of Illinois Urbana-Champaign, 1409 W Green St,
Urbana, IL 61801, USA.\thanks{\href{mailto:leditzky@illinois.edu}{leditzky@illinois.edu}}}
\affil[3]{Illinois Quantum Information Science and Technology Center (IQUIST), University of Illinois Urbana-Champaign, 1101 W Springfield Ave, Urbana, IL 61801, USA}
\date{}

\begin{document}

\maketitle

\begin{abstract}
      We define the unitary design strength of a finite unitary ensemble to be the maximal integer $t$ for which its $t$-th statistical moment is identical to that of the unitary group. In the presence of physical symmetry, not all operators of an ensemble are symmetry-compatible. This motivates us to define the symmetric design strength of a finite unitary ensemble to be the maximal integer $t$ for which the $t$-th moments of its symmetry-compatible sub-ensemble match those of the group of symmetry-compatible unitaries. We show that in sufficiently high dimension and for a broad family of symmetries, which includes the global on-site $\mathrm{U}(1)$ and $\mathrm{SU}(d)$ symmetries, there exist finite unitary ensembles of arbitrarily high design strength whose symmetric design strength is strictly higher. Ensembles with this property may be more versatile under symmetry constraints for protocols that rely on unitary randomization than they are without symmetry. Our results extend the previous work [Mitsuhashi and Yoshioka, PRX Quantum 4.4 (Nov.~2023)] showing that the symmetric design strength of the Clifford group is strictly upper bounded by its unitary design strength for symmetries that do not trivialize the symmetry-compatible unitaries. Due to the special significance of unitary designs with a group structure, we also derive a variety of bounds on the unitary and symmetric design strengths of finite unitary ensembles whose operators form a group. In particular, we provide a simple proof that an arbitrary ensemble of operators forming a finite group must have a symmetric design strength of at most two under the global on-site $\mathrm{U}(1)$ and $\mathrm{SU}(d)$ symmetries. Along the way, we show that there exist uniformly weighted groups of arbitrarily high symmetric design strength in arbitrary dimension and analyze their structure. 
\end{abstract}

\section{Introduction}
Randomness is a unifying notion throughout modern quantum information science. It serves not only as a key theoretical concept needed to study many fundamental problems throughout physics, featuring prominently in the topic areas of information scrambling~\cite{Landsman2019, Mi2021} and quantum chaos~\cite{Roberts2017, Leone2021}, but also as a resource for numerous quantum information processing protocols, such as decoupling~\cite{horodecki2007negative,hayden2008decoupling,dupuis2014decoupling}, shadow tomography~\cite{Elben2022, Huang2020} and randomized benchmarking~\cite{Wallman2014, Kliesch2021}. The essential tool used to study randomness in these contexts is that of a unitary $t$-design, which is an ensemble of unitary operators whose $t$-th statistical moments match those of the uniformly random Haar measure on the unitary group. In this way, the maximal $t$ for which an ensemble of operators is a unitary $t$-design is a reflection of their randomness, with higher values of $t$ corresponding to a greater degree of randomness.

Simultaneously, symmetry is a ubiquitous concept throughout all of physics. Quantum systems in realistic settings may be subject to constraints imposed by various symmetries. Consequently, unitary evolutions produced by some random process must respect the relevant symmetry constraints~\cite{Hulse2024, Marvian2022}. Since arbitrary unitary operators sampled from the full Haar measure may not be realizable under the symmetry constraints, a mathematical construct generalizing the ordinary notion of a unitary design is needed to capture the notion of randomness. To this end, a \textit{symmetric $t$-design} has been introduced~\cite{Mitsuhashi_2023, liu2024unitarydesignsrandomsymmetric, hearth2025, li2024_1, li2024_2} as an ensemble of operators that commutes with the relevant symmetry group and whose $t$-th statistical moments match those of the Haar measure over the group of symmetry-respecting unitary operators, i.e., those operators that commute with the symmetry group. Symmetric $t$-designs may retain the utility of unitary designs for many quantum information processing applications adapted to the symmetry-constrained setting. For instance, classical shadow tomography of symmetry-respecting observables is made viable through sampling from a symmetric 3-design, as it is for general observables via sampling from unitary 3-designs~\cite{sauvage2024, Hearth2024}. 

    It is of natural interest to understand how imposing a symmetry affects the degree of randomness of a unitary $t$-design. Given an ensemble of unitary operators, it is not immediately clear how the the \textit{unitary design strength}, defined as the maximal $t$ for which the ensemble forms a $t$-design, is related to its \textit{symmetric design strength}, defined as the maximal $t'$ for which the sub-ensemble consisting of those unitaries that are realizable under the relevant symmetry forms a symmetric $t'$-design. If one can find an ensemble whose symmetric design strength exceeds its unitary design strength, this would not only suggest an intriguing relationship between symmetry and randomness, but would also show that one can overcome the limitations of symmetry constraints in the context of randomized quantum protocols by simply sampling from the symmetric unitaries. 
    
    This problem was previously studied by Mitsuhashi and Yoshioka~\cite{Mitsuhashi_2023} in the context of the uniformly weighted multi-qubit Clifford group, which is well-known to be at most a 3-design but not a 4-design \cite{kueng2015qubit,webb2016clifford,zhu2017multiqubit,zhu2016}. Notably, \textcite{Mitsuhashi_2023} showed that there does not exist a symmetry group for which the subgroup of symmetry-respecting Clifford operators becomes a symmetric 4-design in a non-trivial way. Furthermore, for the physically-relevant global on-site $\mathrm{U}(1)$ and $\mathrm{SU}(q)$ symmetries, the symmetry-respecting Clifford operators only form a symmetric 1-design. Thus, it remains open whether there can exist a symmetry and an ensemble of unitary operators whose symmetric design strength can exceed its unitary design strength. 

    On a separate note, all unitary designs that are uniformly weighted finite groups (i.e., \textit{group designs}) have been fully characterized~\cite{Bannai, guralnick2005}. Such constructions are known to have a rich mathematical structure~\cite{Gross2007}, with applications in randomized benchmarking~\cite{Kliesch2021} and (thrifty) shadow estimation~\cite{Helsen2023}. It is also known that in Hilbert space dimension $d>2$, group $t$-designs for $t > 3$ cannot exist~\cite{Bannai,kaposi2026}. However, this does not preclude the possibility that the subset of unitary operators in a finite group design that are realizable under the relevant symmetry form a symmetric $t$-design for $t>3$, thereby breaking the $t= 4$ barrier. 

    In this work, we investigate how symmetry influences the degree of randomness of general finite unitary ensembles. 
    Our main results, which are discussed in more detail in \Cref{sec:main-results}, can be explained on a high level as follows:
    First, moving beyond the Clifford group paradigm of~\cite{Mitsuhashi_2023}, we prove the existence of unitary designs that become strictly more random as symmetric designs after restricting to the symmetry-respecting subset for a broad range of symmetries. In particular, we show that for arbitrary positive integers $t'\geq t$, there exists a finite unitary ensemble with unitary design strength of $t$ and a symmetric design strength of $t'$ for symmetries complying with an easily satisfiable representation-theoretic condition in all sufficiently high dimensions. Such symmetries include the global on-site $\mathrm{U}(1)$ and $\mathrm{SU}(q)$ symmetries, as well as subsystem permutation symmetry. 
    
    Next, we specialize to unitary ensembles that have the structure of a uniformly weighted finite group. We argue that in arbitrary dimension $d$ and for arbitrary $t> 0$, there are symmetry groups for which symmetric $t$-designs that are uniformly weighted finite groups exist, unlike in the asymmetric case. However, for $t> 5$, the elements of such groups are necessarily diagonal with respect to a basis determined by the symmetry. Thus, uniformly weighted groups that are
    symmetric $t$-designs for $t>5$ have a more elementary structure.
    Restricting to special cases of symmetry groups, we also extend a result of~\cite{Mitsuhashi_2023} by showing that for a global on-site $\mathrm{U}(1)$ and $\mathrm{SU}(q)$ symmetry, any finite unitary ensemble for which the operators form a group can be at most a symmetric $2$-design. We derive an analogous statement for symmetries described by finite groups as well. Finally, we show that for any $t>1$ and in arbitrary dimension, there exist a symmetry group and a uniformly weighted finite group $G$ of unitaries such that $G$ forms at most a unitary 1-design but its symmetry-respecting subset forms a symmetric $t$-design; moreover, for $t\leq 5$, there is no bound on the dimension for which the operators in this group may be taken to have a non-diagonal structure. Thus, the degree of randomness of a finite unitary design may be augmented under symmetry even when it simultaneously has a finite group structure.

\section{Definitions and Preliminaries}\label{sec:defs-and-prelims}
In all that follows, let $U(\mathcal{H})$ and $\End(\h)$ denote the unitary group and algebra of linear operators over a finite-dimensional Hilbert space $\mathcal{H}$. In our work, symmetries are described by subgroups $S\leq U(\h)$. For any subset $\mathcal{X}\subset U(\mathcal{H})$ and subgroup $S \leq U(\mathcal{H})$, define the set
\begin{equation}\label{eq:symmetrization-def}
    \mathcal{X}_S \coloneqq \{g \in \mathcal{X}: [g, s] = 0~\text{for all}~s\in S\};
\end{equation}
the set $\x_S$ is known as the \textit{symmetry-respecting} or \textit{symmetry-compatible} subset of $\x$ for symmetry group $S$. Given an arbitrary group $K$, \textit{finite} subset $\mathcal{X}\subset K$, and probability mass function (pmf)  $p\colon \x \to [0,1]$, we call the ordered pair $(\x, p)$ a \textit{weighted $K$-subset}. For any operator $x\in \End(\h)$, let $\overline{x}$ denote its complex conjugation with respect to the computational basis. If $K$ is a closed subgroup of $U(\h)$, we say that a weighted $K$-subset $(\x, p)$ is a $t$-\textit{design in $K$} if 
\begin{equation}\label{eq:moment-equality}
    \sum_{g\in \mathcal{X}} p(g) g^{\otimes t} \otimes \overline{g}^{\otimes t} = \int_{K} dk~k^{\otimes t}\otimes \overline{k}^{\otimes t},
\end{equation}
where the integration is with respect to the (normalized) Haar measure over $K$, which exists due to the compactness of $K$.
For such $(\x, p)$, we will sometimes simply refer to $(\x, p)$ as a \textit{unitary $t$-design} or simply \textit{$t$-design} when $K = U(\h)$. Note that if $(\x, p)$ is a $t$-design in  $K$, then it is a $t'$-design in $K$ for all positive integers $t'\leq t$. We will also refer to the LHS of~\eqref{eq:moment-equality} as the \textit{$t$-th moment} of $(\x, p)$. In our work, we will denote the uniform pmf over a set of operators by ``$\mathrm{unif}$", so that $(\x, \mathrm{unif})$ refers to a uniformly weighted $K$-subset when $\x \subset K$.

One may verify that for any subgroup $S\leq U(\h)$, the set $U(\h)_S$, whose definition is provided by~\eqref{eq:symmetrization-def}, is indeed a closed subgroup of $U(\h)$, so that the notion of a weighted $U(\h)_S$-subset $(\x, p)$ that is a $t$-design in $U(\h)_S$ is well-defined.
For such $(\x, p)$, we will sometimes refer to $(\x$, p) as an \textit{$S$-symmetric $t$-design under $p$} or simply $S$-\textit{symmetric $t$-design}. If a weighted $U(\h)$-subset $(\x, p)$ is a $t$-design but not a $(t+1)$-design in $U(\h)$, then we write 
\begin{equation}
    t_{\mathrm{max}}(\x, p) \coloneqq t.
\end{equation}
If a weighted $U(\h)$-subset $(\x, p)$ is such that $(\x_S, p_S)$ is a $t$-design but not a $(t+1)$-design in $U(\h)_S$, then we write
\begin{equation}
    t_{\mathrm{max}}^{(S)}(\x, p) \coloneqq t.
\end{equation}
We refer to $t_{\max}(\x, p)$ (resp. $t_{\max}^{(S)}(\x, p)$) as the \textit{unitary design strength} or simply \textit{design strength} (resp. \textit{$S$-symmetric design strength} or simply \textit{symmetric design strength}) of a weighted $U(\h)$-subset $(\x, p)$. Any weighted $U(\h)$-subset (resp.~$U(\h)_S$-subset), which is necessarily finite per our definition, will have a finite design strength (resp.~symmetric design strength), respectively~\cite{liu2024unitarydesignsrandomsymmetric}.

Given a finite subset $\mathcal{X} \subset U(\mathcal{H})$ with pmf $p\colon \x\to [0,1]$ and a subgroup $S \leq U(\mathcal{H})$ for which $\sum_{g \in \x_S} p(g) > 0$, we construct the pmf $p_S\colon \x_S \to [0,1]$ by defining
\begin{equation}\label{eq:ps-def}
    p_S(g) \equiv \frac{p(g)}{\sum_{g' \in \mathcal{X}_S} p(g')}.
\end{equation}
The probability $p_S(g)$ may be thought of as the conditional probability of sampling $g$ from the distribution $p$ given that $g$ lies within the subset $\mathcal{X}_S$.

Let $S$ be a closed subgroup of $U(\mathcal{H})$. In this case, elementary representation theory gives us powerful tools for working with $S$-symmetric designs. The space $\mathcal{H}$ carries natural representations $\xi$ and $\sigma$ of $S$ and $U(\h)_S$ given simply by $\xi(s) = s$ and $\sigma(u) = u$, respectively.
Under the representations $\xi$ and $\sigma$, the space $\mathcal{H}$ decomposes as
\begin{align}
    \mathcal{H} &\cong \bigoplus_{\lambda = 1}^n V_\lambda \otimes W_\lambda \label{eq:decomposition}\\
    \xi(s) &\cong\label{eq:s-action} \bigoplus_{\lambda = 1}^n \xi_\lambda(s) \otimes I_{W_\lambda}\\
    \sigma(u) &\cong\label{eq:u-h-s-action} \bigoplus_{\lambda = 1}^n I_{V_\lambda} \otimes \sigma_\lambda(u)
\end{align}
where the $V_\lambda$ carry the inequivalent irreducible representations $\xi_\lambda$ of $S$ that appear in $\mathcal{H}$ and the $W_\lambda$ are their corresponding multiplicity spaces; here, the $W_\lambda$ carry representations $\sigma_\lambda$ of $U(\h)_S$, though the $\sigma_\lambda$ need not be irreducible. We note that~\eqref{eq:u-h-s-action} follows from~\eqref{eq:decomposition}-\eqref{eq:s-action} and Schur's Lemma. 
By the isomorphisms of representations described in~\eqref{eq:decomposition}-\eqref{eq:u-h-s-action}, there exists an orthonormal basis 
        \begin{equation}
            \beta \coloneqq \{|\lambda, i, j\rangle: \lambda \in [n], i\in [\dim V_\lambda], j \in [\dim W_\lambda]\}
        \end{equation}
        of $\h$ such that for all $\lambda \in [n], j \in [\dim W_\lambda]$, the subspaces
        \begin{equation}
            V_{\lambda, j} \coloneqq \mathrm{span}\left\{ |\lambda, i, j\rangle: i \in [\dim V_\lambda]\right\} 
        \end{equation}
        are irreducible under $\xi$ and for all $\lambda \in [n], i \in [\dim V_\lambda]$, the subspaces
        \begin{equation}\label{eq:multiplicity-copies}
            W_{\lambda, i} \coloneqq \mathrm{span}\left\{ |\lambda, i, j\rangle: j \in [\dim W_\lambda] \right\}
        \end{equation}
        are invariant under $\sigma$. 
        Furthermore, $V_{\lambda, i} \cong V_\lambda$ and $W_{\lambda, j} \cong W_\lambda$ as $S$ and $U(\h)_S$-representations, respectively. 

        Another key tool we will use in our analysis is the \textit{commutant} of a set of operators. For any collection $\mathcal{X}$ of linear operators over a Hilbert space $\mathcal{H}$, we define the \textit{commutant} of $\x$ by
        \begin{equation}\label{eq:comm-def}
            \comm(\mathcal{X}) \coloneqq \{L \in \End(\mathcal{H}):  [L, O] = 0 \text{ for all } O \in \mathcal{X}\}.
        \end{equation}
        It may be checked using elementary arguments that if $G \leq H $ are compact subgroups of $U(\mathcal{H})$ for some Hilbert space $\mathcal{H}$, then the equality of Haar integrals
        \begin{equation}
            \int_H dh \ h^{\otimes t} \otimes \overline{h}^{\otimes t} = \int_G dg \ g^{\otimes t} \otimes \overline{g}^{\otimes t} \label{eq:integrals}
        \end{equation}
        holds if and only if~\cite{Gross2007}
        \begin{equation}\label{eq:comm-equality}
            \comm(\{h^{\otimes t}: h\in H\}) = \comm(\{g^{\otimes t}: g\in G\}).
        \end{equation}

\section{Main results}
\label{sec:main-results}

Since not all unitaries from a weighted $U(\h)$-susbet $(\x, p)$ may be implemented in a symmetric setting, a straightforward strategy for achieving an $S$-symmetric design with the same or greater design strength would be to simply implement those operators that are symmetry-compatible. To be more precise, the process of implementing the ensemble $(\x, p)$ may simply be thought of as sampling from a  collection of $|\x|$ labels encoding the unitaries in $\x$ that are distributed according to $p$ and performing the unitary evolution corresponding to the sampled label in a quantum circuit. On the other hand, implementing just the symmetry-compatible operators can be modeled by sampling a label from the collection of $|\x|$ labels, checking whether the corresponding unitary is symmetry-respecting, performing the unitary evolution if it is symmetry-respecting, discarding the sample if it is not symmetry-respecting, and repeating the procedure until the sampled label produces a symmetry-respecting unitary. Then the implemented unitaries from $\x_S$ will follow the conditional distribution $p_S$. However, it is not immediately clear whether the mathematical framework of $S$-symmetric designs even allows for the $S$-symmetric design strength of $(\x_S, p_S)$ to match or exceed the unitary design strength of $(\x, p)$. To this end, we pose our first problem:

\begin{problem}\label{prob-1}
Does there exist a symmetry group $S\leq U(\h)$ and a weighted $U(\h)$-subset $(\x, p)$ such that \begin{align}
    t_{\max}^{(S)}(\x, p) \geq t_{\max}(\x, p) > 0?
\end{align}
\end{problem}

While variations of this problem involving approximate designs and approximate symmetric designs (\textit{i.e.}, weighted subsets for which the moment equality in~\eqref{eq:moment-equality} holds only approximately) may be formulated, our work deals primarily with the exact case. Beyond general mathematical interest, we specialize to the exact case because properties arising in the exact case inform us about those arising in the approximate case. Furthermore, exact designs are preferable in protocols which demand enough repeated uses of a design for significant error to accumulate~\cite{nakata2021}. 

Problem~\ref{prob-1} should be viewed as a generalization of a problem studied in a prior work by Mitsuhashi and Yoshioka~\cite{Mitsuhashi_2023}. Recall that the $N$-qubit Clifford group $\mathcal{C}_N$ is defined as $\mathcal{C}_N\coloneqq \{U \in U((\mathbb{C}^2)^{\otimes N}): UPU^\dagger \in \mathcal{P}_N\}$, where $\mathcal{P}_N \coloneqq \{\pm 1, \pm i\}\cdot \{I, X, Y, Z\}^{\otimes N}$ is the group generated by Pauli matrices $I, X, Y, Z$ on each of the $N$ qubits. Mitsuhashi and Yoshioka generalize to the symmetric setting the well-known theorem that the $N$-qubit Clifford group satisfies $t_{\max}(\mathcal{C}_N, \mathrm{unif}) = 3$. In particular, they show that if $S\leq U((\mathbb{C}^2)^{\otimes N})$ is a symmetry group for which $U((\mathbb{C}^2)^{\otimes N})_S \neq \{e^{i\theta}I: \theta \in \mathbb{R}\}$, i.e., the symmetric unitaries are not trivialized, then  
\begin{equation}\label{eq:mitsuhashi-yoshioka-1}
    t_{\max}^{(S)}(\mathcal{C}_N, p) \leq t_{\max}^{(S)}(\mathcal{C}_N, \mathrm{unif}) \leq 3 = t_{\max}(\mathcal{C}_N, \mathrm{unif}), 
\end{equation}
for any pmf $p$ over $\mathcal{C}_n$ and $\unif$ denoting the uniform distribution. Moreover, in the practical cases of global on-site $U(1)$ and $SU(2)$ symmetries, it was shown that
\begin{equation}
    t_{\max}^{(S)}(\mathcal{C}_N, \mathrm{unif}) = 1.
\end{equation}
Thus, by implementing just the symmetry-compatible Clifford gates, we can at most hope to retain 3-designness, and are limited to 1-designness in the physically relevant examples of global on-site $U(1)$ and $SU(2)$ symmetries. However, this does not preclude the possibility of achieving a greater symmetric design strength by implementing the symmetry-comptatible operators of some other design. 

We answer Problem~\ref{prob-1} in the affirmative by demonstrating that for a broad family of symmetry groups, there exists a weighted $U(\h)$ subset whose symmetric design strength may exceed their unitary design strength by an arbitrary amount. This suggests that symmetry groups for which there exist ensembles of operators whose symmetric design strength exceeds or matches their unitary design strength are not aberrations, but rather generic. More precisely, we prove the following theorem.
\begin{theorem}\label{thm:main-thm-1}
    Let $t' \geq t> 0$ be arbitrary. There exists a $d_0>0$ such that for any Hilbert space $\h$ of dimension $\dim \h \geq d_0$, and for any closed subgroup $S \leq U(\h)$ for which the natural representation of $S$ on $\h$ has a multiplicity-free irreducible subrepresentation, there exists a weighted $U(\h)$-subset $(\x, p)$ satisfying
    \begin{align}
        t_{\max}^{(S)}(\x, p) \geq t' > t = t_{\max}(\x, p). 
    \end{align}
    In particular, if $1\leq t \leq 3$, then we may take $d_0 = 3$ in the preceding statement.
\end{theorem}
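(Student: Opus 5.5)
We build $(\x,p)$ as a convex mixture of two pieces. One piece is an exact $t$-design that contains no symmetric elements, apart from possibly phases. The other piece consists of symmetric unitaries and is arranged so that $(\x_S,p_S)$ is an $S$-symmetric $t'$-design, while the total mixture is still a unitary $t$-design but not a $(t+1)$-design.

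\textbf{Step 1: symmetric part.} Let $V_{\lambda_0}$ be the multiplicity-free irrep, so $\dim W_{\lambda_0}=1$ and $V_{\lambda_0,1}$ is a subspace $\h_0\subset\h$. Then $U(\h)_S$ acts on $\h_0$ by phases, and $U(\h)_S\cong U(1)\times U(\h_0^\perp)_{S}$. Since $U(\h)_S$ is compact, there exists a finite $S$-symmetric $t'$-design $(\y,q)$ by the standard existence result for exact designs in compact groups (Carathéodory applied to the moment map, as in \cite{liu2024unitarydesignsrandomsymmetric}). Pick such a design.

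\textbf{Step 2: unitary part.} We need a finite unitary $t$-design $(\z,r)$ that is not a $(t+1)$-design and satisfies $\z_S=\emptyset$. Begin with any exact unitary $t$-design $(\z_0,r_0)$. Replace it by its rotated copy $\{VgV^\dagger\}$, or by the product ensemble $\{gh\}$, where $h$ ranges over a design and is suitably perturbed. For generic $V$ the conjugates commute with $S$ only if they lie in a measure-zero set. Indeed, a non-scalar $g$ has $VgV^\dagger\in U(\h)_S$ only for $V$ in a proper subvariety, so generically $\z_S$ contains only scalars. Scalars can be removed by taking the design modulo phases, since the moments are phase-invariant.

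\textbf{Step 3: mixing.} Set $\x=\y\cup\z$ and $p=(1-\epsilon)r+\epsilon q$. Then $\x_S=\y$ up to scalars, and $p_S=q$, so $t^{(S)}_{\max}\ge t'$. The $t$-th moment of $(\y,q)$ does not in general equal the Haar $t$-th moment on $U(\h)$. To fix this, I would choose $\z$ as a ``complementing'' ensemble. Form the signed measure $\mu=(\text{Haar}-\epsilon\,q)/(1-\epsilon)$ on $U(\h)$. For small $\epsilon$, its $t$-th moment lies in the relative interior of the convex hull of the moments $\{g^{\otimes t}\otimes\bar g^{\otimes t}\}$, because the Haar moment is interior in the affine hull. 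Carathéodory then yields finitely many non-symmetric $g$, chosen generically so that none commutes with $S$ and hence avoiding the measure-zero symmetric set, which realize exactly that moment.

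\textbf{Step 4: failing the $(t+1)$-design condition.} We must ensure the mixture is not a $(t+1)$-design. The Carathéodory choice has free parameters, and $(t+1)$-designness is a closed algebraic condition, so a generic choice violates it. The exception is when every choice is forced to be a $(t+1)$-design, which cannot happen since the set of realizing ensembles is open in the relevant affine space. This dimension and genericity argument is where I expect the main obstacle, together with showing that the Haar moment is interior even after the symmetric points are excluded. The latter needs $\dim\h$ large relative to $t,t'$ so that the symmetric set does not span the moment affine hull; this gives $d_0$.

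\textbf{Case $t\le3$, $d_0=3$.} For $t\le 3$ I would instead take $\z$ to be a Clifford-type or group $t$-design that is not a $4$-design. Examples are the qudit Clifford group in prime-power dimension, or known finite $3$-design groups in each $d\ge3$, such as unitary reflection groups and $t$-designs from \cite{Bannai,guralnick2005}. Conjugate it generically to avoid symmetric elements, and then apply the complementing argument above.
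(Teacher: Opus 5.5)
Your route differs from the paper's. The paper never mixes in a separate non-symmetric ensemble. It sets $\x=\z^{-1}\z\w$ with $\z=\y u_{\mu_1}\y\cdots u_{\mu_k}\y$, where $(\y,r)$ is a $(t+1)$-design in $K=U(W_{\lambda_0,1})\times U(W_{\lambda_0,1}^\perp)$, $(\w,q)$ is a symmetric $t'$-design, and each $u_\mu$ is a zero of $Z_\mu$ but not of $Z_{\mu_{\mathrm{nz}}}$. Symmetric $t'$-designness follows from $\x_S=(\z^{-1}\z)_S\w$ and Lemma~\ref{lemma:design-with-non-design}. The $t$-design property follows from Lemma~\ref{lemma:irrep-average-formula}. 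Failure at $t+1$ follows from the explicit average $|c|^2|v_{\mu_{\mathrm{nz}}}\rangle\langle v_{\mu_{\mathrm{nz}}}|\neq 0$, which uses $U(\h)_S\subset K$. Multiplicity-freeness is used only to get $\dim W_{\lambda_0,1}=1$, so the $Z_\mu$ are one-variable polynomials. $d_0$ comes from the resultant argument that excludes common zeros of $Z_s$ and $Z_{t+1}$. The paper's warning that mixtures give only approximate designs is about mixing with an independent $t$-design. Your complement, chosen by Carath\'eodory to hit the corrected moment exactly, escapes it. Your argument is existential where the paper's is structured, but once repaired it is shorter and stronger.

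Three repairs are needed.

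(1) Step 4 as written (``closed algebraic condition'', ``open in the relevant affine space'') is not an argument. Aim the Carath\'eodory step at a $(t+1)$-th moment instead:
\begin{itemize}
\item Let $M_k(g)=g^{\otimes k}\otimes\overline{g}^{\otimes k}$, let $C_k=\mathrm{conv}\,M_k(U(\h))$, and let $b_k$ be the Haar moment.
\item Let $L$ be the linear map contracting one $\h\otimes\overline{\h}$ pair against the maximally entangled vector, so $L(M_{t+1}(g))=M_t(g)$ for all $g$.
\item Let $\chi$ be the real character of the self-dual weight $\nu_*=(t+1,0,\dots,0,-t-1)\in\m(d,t+1)\setminus\m(d,t)$, and set $v=\int\chi(g)M_{t+1}(g)\,dg$. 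Schur orthogonality gives $v\neq 0$ and $Lv=0$. Since $\chi$ is real with $\int\chi=0$, $v$ is a direction of $\mathrm{aff}\,C_{t+1}$.
\item Because $b_{t+1}$ lies in the relative interior of $C_{t+1}$ (Haar has full support), so does $y=\frac{b_{t+1}-\epsilon M_{t+1}(q)}{1-\epsilon}+\delta v$ for small $\epsilon,\delta>0$.
\item By density and Carath\'eodory, $y=\sum_i r_i M_{t+1}(g_i)$ with every $g_i\notin U(\h)_S$.
\end{itemize}
The mixture then has $(t+1)$-th moment $b_{t+1}+(1-\epsilon)\delta v\neq b_{t+1}$ and $t$-th moment $b_t$, while $(\x_S,p_S)=(\y,q)$.

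(2) Your account of $d_0$ is wrong. Excluding symmetric points costs nothing. A proper closed subgroup of the connected group $U(\h)$ has empty interior, so $U(\h)\setminus U(\h)_S$ is dense and its moment image has the same relative interior. Properness holds for every $d\ge 2$ under the hypothesis: otherwise $S$ is scalar and $\h$ is a single isotypic block of multiplicity $d$. So your argument gives $d_0=2$. It never uses multiplicity-freeness, only $U(\h)_S\neq U(\h)$, which would settle existentially the extension to general closed symmetry groups that the paper leaves open.

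(3) Drop the $t\le 3$ paragraph. Finite group $3$-designs exist only in special dimensions, with none for $d=3$, so that step fails as stated; it is also unnecessary. Step 2 is likewise superseded. Note that ``modulo phases'' does not remove scalars, since scalars lie in $U(\h)_S$; Carath\'eodory over $U(\h)\setminus U(\h)_S$ excludes them automatically.
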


It should be noted that the restriction to symmetry groups possessing multiplicity-free irreps is largely a technical restriction and is imposed to ease calculations in the derivation of our result. Moreover, a variety of physically-relevant symmetry groups do satisfy this condition. In particular, we have the following corollary of Theorem~\ref{thm:main-thm-1}:
\begin{corollary}\label{cor:symmetry-boosted-designs}
        Let $N\geq 1$, $d\geq 2$. Let $\h_1 = (\mathbb{C}^d)^{\otimes N} = \h_3$, and $\h_2 = (\mathbb{C}^2)^{\otimes N}$. Define subgroups $S_i \leq U(\h_i), i = 1,2,3$ by 
        \begin{align}
            S_1 &\coloneqq \left\{u^{\otimes N}: u \in \mathrm{SU}(d)\right\} \cong \mathrm{SU}(d)  \\
            S_2 &\coloneqq \left\{(e^{i\theta Z})^{\otimes N}: \theta \in \mathbb{R}\right\} \cong \mathrm{U}(1)\\
            S_3 &\coloneqq \left\{P_d(\pi): \pi \in S_N\right\} \cong S_N, 
        \end{align}
        where $Z$ denotes the Pauli $Z$ operator on $\mathbb{C}^2$ and $P_d(\pi)$ denotes the qudit permutation operator associated with $\pi\in S_N$. 
        Let $t' \geq t \geq 1$ be arbitrary integers. For $i = 1,2$ and all $N$  sufficiently large, there exists a weighted $U(\h_i)$-subset $(\x^{(i)}, p^{(i)})$ such that 
        \begin{equation}\label{eq:t-max-ineqs-2-main}
            t_{\max}^{(S_i)}(\x^{(i)}_{S_i}, p^{(i)}_{S_i}) \geq t' \geq t = t_{\max}(\x^{(i)}, p^{(i)}).  
        \end{equation}
        In particular, for $i = 1,2$, $N\geq 2$, and $1\leq t \leq 3$, there exists a weighted $U(\h_i)$-subset $(\x, p)$ for which~\eqref{eq:t-max-ineqs-2-main} holds. For $i= 3$ and all $N$ that are sufficiently large multiples of $d$, there exists a weighted $U(\h_i)$-subset $(\x, p)$ such that~\eqref{eq:t-max-ineqs-2-main} holds. In particular, this inequality is achieved if $1\leq t\leq 3$ and $N$ is any positive multiple of $d$. 
\end{corollary}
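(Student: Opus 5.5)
The corollary follows from Theorem~\ref{thm:main-thm-1}. For each $i$, we only need to check one hypothesis: the natural representation of $S_i$ on $\h_i$ contains a multiplicity-free irreducible subrepresentation. We must also check that $\dim\h_i$ meets the threshold $d_0$ of the theorem. Since $\dim\h_i$ grows without bound in $N$ (it equals $d^N$ or $2^N$), the threshold holds for all sufficiently large $N$. When $1\le t\le 3$ we have $d_0=3$, which is met as soon as $\dim\h_i\geq 3$. This holds for $N\ge 2$ when $i=1,2$, and for every positive multiple $N$ of $d$ when $i=3$, because $N\geq d\geq 2$ already gives $d^N\geq 4$. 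The inequality in~\eqref{eq:t-max-ineqs-2-main} is then simply the theorem's conclusion, which is slightly stronger.

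\emph{Case $i=2$ ($\mathrm{U}(1)$).} The operator $(e^{i\theta Z})^{\otimes N}$ acts on the computational basis state $|x\rangle$ by the phase $e^{i\theta(N-2|x|)}$, where $|x|$ is the Hamming weight. The isotypic components are therefore the Hamming-weight sectors, and the sector of weight $k$ has dimension $\binom{N}{k}$. The sector with $k=0$ (spanned by $|0\cdots0\rangle$) is one-dimensional. Its character $e^{iN\theta}$ is distinct from the characters of the other sectors. Hence this irrep is multiplicity-free for every $N\ge1$.

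\emph{Case $i=1$ ($\mathrm{SU}(d)$).} By Schur--Weyl duality, $(\mathbb{C}^d)^{\otimes N}\cong\bigoplus_{\lambda} \mathbb{S}_\lambda(\mathbb{C}^d)\otimes \mathrm{Sp}_\lambda$. The sum runs over partitions $\lambda\vdash N$ with at most $d$ rows. The multiplicity of $\mathbb{S}_\lambda$ equals $\dim \mathrm{Sp}_\lambda$, the number of standard Young tableaux of shape $\lambda$. The one-row partition $\lambda=(N)$ gives the symmetric subspace, and $\dim\mathrm{Sp}_{(N)}=1$. We also need distinct $\lambda$ with at most $d$ rows to give inequivalent $\mathrm{SU}(d)$-irreps. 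Two such partitions give the same $\mathrm{SU}(d)$-irrep only if they differ by full columns of height $d$, and this would change $|\lambda|$; so for fixed $N$ they are inequivalent. Hence the symmetric subspace is a multiplicity-free irreducible subrepresentation. This holds for all $N\geq 1$, and it is nontrivial: the theorem only needs existence.

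\emph{Case $i=3$ ($S_N$).} We use the dual side of Schur--Weyl duality. The $S_N$-irrep $\mathrm{Sp}_\lambda$ occurs with multiplicity $\dim \mathbb{S}_\lambda(\mathbb{C}^d)$, computed as a $\mathrm{U}(d)$-representation. Taking $N=md$ and $\lambda=(m^d)$, the rectangular partition with $d$ rows of length $m$, the $\mathrm{U}(d)$-irrep $\mathbb{S}_\lambda(\mathbb{C}^d)=\det^{m}$ is one-dimensional. So $\mathrm{Sp}_{(m^d)}$ appears exactly once. This is why the statement requires $N$ to be a multiple of $d$.

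The main obstacle is this last case. There we must be careful that multiplicities are governed by $\mathrm{U}(d)$-dimensions, not $\mathrm{SU}(d)$-dimensions. We must also check that no other $\lambda\vdash N$ with at most $d$ rows yields an $S_N$-irrep equivalent to $\mathrm{Sp}_{(m^d)}$. This is immediate, since distinct partitions of $N$ give inequivalent Specht modules. With all three verifications in hand, Theorem~\ref{thm:main-thm-1} applies directly and gives the corollary.
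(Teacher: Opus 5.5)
Your proposal is correct and follows the paper's proof. Both reduce the corollary to Theorem~\ref{thm:main-thm-1} by exhibiting, via Schur--Weyl duality, a multiplicity-one irrep in each case: the symmetric-subspace irrep $V_{(N)}$ for $\mathrm{SU}(d)$, the Hamming-weight-zero sector for $\mathrm{U}(1)$, and the rectangular Specht module for $\mathfrak{S}_N$ when $d \mid N$. You also make explicit two points the paper leaves implicit, namely that distinct $\lambda\vdash N$ give inequivalent $\mathrm{SU}(d)$-irreps and that the dimension thresholds are met; your warning about $\mathrm{U}(d)$- versus $\mathrm{SU}(d)$-dimensions is unnecessary, since restricting $\mathbb{S}_\lambda(\mathbb{C}^d)$ to $\mathrm{SU}(d)$ does not change its dimension.
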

We prove Theorem~\ref{thm:main-thm-1} and Corollary~\ref{cor:symmetry-boosted-designs} in Section~\ref{sec:symmetry-boosted-design-proofs}.

It should be noted that in general, it is not difficult to find an ensemble whose symmetric strength exceeds its unitary design strength. One may check that for symmetry group $S$ such that $U(\h)_S\neq U(\h)$, any $t$-design in $U(\h)_S$ will not even be a 1-design in $U(\h)$. However, this fact does not at all establish whether there exists a general unitary $t$-design whose strength is boosted under symmetry for even low $t>0$. If we wish to generalize the Mitsuhashi and Yoshioka result~\eqref{eq:mitsuhashi-yoshioka-1} beyond the Clifford group, such a case must be handled. Naively, one may try to construct a symmetry-boosted unitary design by taking any $U(\h)$-subset $(\y, q)$ for which $t_{\max}(\y, q) = t >0$, exhibiting a construction of an $S$-symmetric $t'$-design $(\z, r)$ for some $t'>t$, and appending the symmetric design to the unitary design by defining $(\x, p) \coloneqq  (\y \cup \z, (1-\varepsilon)q + \varepsilon r)$ for some $\varepsilon>0$. If one can show that we can find a $\y$ as above for which we have $\y\cap \z = \varnothing$, then $(\x_S, p_S) = (\z, r)$. In that case, choosing $\varepsilon$ to be sufficiently small will yield moments of $(\x, p)$ that approximate those of $(\y, q)$. Then $(\y, q)$ will be an approximate $t$-design, while it seems implausible that $(\x, p)$ would be an approximate $(t+1)$-design. However, such an argument, even after verification of its technical details, can at best only establish the existence of a symmetry-boosted \textit{approximate} design. In this way, the construction we employ in the proof of Theorem~\ref{thm:main-thm-1} plays a non-trivial role.

At a high level, we circumvent the aforementioned difficulties and prove Theorem~\ref{thm:main-thm-1} as follows. First, we exhibit a construction of an $S$-symmetric $t'$-design $(\y, q)$ and a weighted $U(\h)$-subset $(\z, r)$. Then we define $(\x, p) = (\y\z, q\star r)$, where $\star$ denotes a convolution of functions over $U(\h)$. With some elementary algebra, one may show that $(\x_S, p_S) = (\z, r)$, so that $t_{\max}^{(S)}(\x, p)\geq t'$. The left-invariance of the Haar measure then implies that $(\x, p)$ is a $t$-design if $(\z, r)$ is a $t$-design. However, demanding that $t_{\max}(\z, r) \leq t$ alone is not sufficient to ensure that $t_{\max}(\x, p) \leq t$, as may be seen from the constructions for unitary designs considered in~\cite{kaposi2026}. To avoid this issue, the majority of the proof is devoted to building $(\y, q)$ using convolutions of $(\z, q)$ with particular zeros of certain zonal spherical functions to ensure that $(\y, q)$ is a $t$-design that indeed yields $t_{\max}(\x, p) = t$. In doing so, we make use of the ideas presented by \textcite{bannai2020} in their construction of explicit unitary $t$-designs.

While our proof is constructive, it should also be noted that the unitary ensembles we construct are not necessarily practical. Nevertheless, our result concretely demonstrates that the mathematical structure of symmetric designs allows for the possibility of symmetry-boosted unitary designs for familiar symmetries. Identifying practical unitary ensembles whose symmetric design strength may exceed their unitary design strength is a topic for further research.

On a separate note, it is worth considering how symmetric design strength behaves for unitary ensembles with a group structure. It is known that if $\dim \h>2$, and $G$ is a finite subgroup of $U(\h)$, then 
\begin{equation}\label{eq:group-design-barrier}
    t_{\max}(G, p) \leq 3
\end{equation}
for any pmf $p$ over $G$~\cite{bannai2020}. In this way, a group structure throttles the randomness of designs. To understand how group structure limits symmetric design strength, we pose the following problem:

\begin{problem}
Are there universal bounds on $t_{\max}^{(S)}(G, \mathrm{unif})$ for finite subgroups $G$ of $U(\h)$ and if so, what are they? 
\end{problem}

We address this problem in the following theorem.

\begin{theorem}\label{thm:main-thm-2}
    The following hold:
    \begin{enumerate}
        \item If $S$ is a closed subgroup of $U(\h)$ whose natural representation on $\h$ is multiplicity-free, then for any $t>0$, there exists a finite group $H \leq U(\h)_S$ such that $t_{\max}^{(S)}(H, \mathrm{unif}) \geq t$. Consequently, for any finite-dimensional Hilbert space $\h$ and any $t>0$, there exists a symmetry group $S\leq U(\h)$ and finite group $H\leq U(\h)_{S}$ for which $t_{\max}^{(S)}(H, \mathrm{unif}) \geq t$.

        \item Let $G\leq U(\h)$ be finite and let $S\leq U(\h)$ be closed. If $t_{\max}^{(S)}(G, p) > 5$ for some pmf $p$ over $G$, then every element of $U(\h)_S$ is diagonal in the irrep basis of $S$ and, hence, $U(\h)_S$ is abelian.

        \item For any finite-dimensional Hilbert space $\h$, there exists a symmetry group $S\leq U(\h)$ and finite group $H\leq U(\h)_S$ for which $t_{\max}^{(S)}(H, \mathrm{unif}) = 5$ and $U(\h)_S$ is non-abelian.
    \end{enumerate}
\end{theorem}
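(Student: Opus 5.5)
For \textbf{part 1}, I will use that multiplicity-freeness means $\dim W_\lambda = 1$ for every $\lambda$ in~\eqref{eq:decomposition}. By~\eqref{eq:u-h-s-action}, $U(\h)_S$ is then the $n$-torus $T = \{\bigoplus_\lambda e^{i\theta_\lambda} I_{V_\lambda}\}$. Since $H \le U(\h)_S$ forces $H_S = H$, I only need a finite subgroup of $T$ that is a $t$-design in $T$.

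In the basis $\beta$, every entry of $k^{\otimes t}\otimes\overline{k}^{\otimes t}$ for $k\in T$ is a character $e^{i m\cdot\theta}$ with $m\in\mathbb{Z}^n$ and $|m_\lambda|\le t$. Its Haar average is $\delta_{m,0}$. So it suffices that the average over $H$ of every such character with $m\neq 0$ vanishes.

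I will take $H = \{\bigoplus_\lambda \zeta^{a_\lambda} I_{V_\lambda} : a\in \mathbb{Z}_M^n\}$ with $\zeta = e^{2\pi i/M}$ and $M>t$. The average of $\zeta^{m\cdot a}$ over $a$ is zero unless $m\equiv 0 \pmod M$, and $|m_\lambda|\le t<M$ then forces $m=0$. For the ``consequently'' clause, I will take $S$ to be the group of all diagonal unitaries in some fixed basis. Its natural representation is a sum of $\dim\h$ inequivalent characters, hence multiplicity-free.

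For \textbf{part 2}, suppose $(G_S,p_S)$ is a $6$-design in $U(\h)_S$. The group $U(\h)_S$ acts on each subspace $W_{\lambda,i}$ of~\eqref{eq:multiplicity-copies} through $\sigma_\lambda$, and $\sigma_\lambda(U(\h)_S)=U(W_\lambda)$, since $U(\h)_S\cong\prod_\lambda U(W_\lambda)$. Compressing both sides of~\eqref{eq:moment-equality} to the block $W_{\lambda,i}^{\otimes 6}\otimes \overline{W_{\lambda,i}}^{\otimes 6}$ shows that the pushforward $(\sigma_\lambda(G_S),\sigma_\lambda{}_*p_S)$ is a $6$-design in $U(W_\lambda)$, carried by a finite group.

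If $\dim W_\lambda\ge 3$, this contradicts~\eqref{eq:group-design-barrier}. If $\dim W_\lambda = 2$, it contradicts the known classification of finite group designs in $U(2)$ (Bannai et al.), by which such designs have strength at most $5$. I expect this step to be the main obstacle: it requires citing the $U(2)$ bound correctly and checking that it holds for arbitrary pmfs, not just uniform ones. The bound follows because weighted group designs reduce to statements about invariants of degree $(6,6)$. Hence every $\dim W_\lambda=1$, so $U(\h)_S$ consists of matrices that are scalar on each isotypic block. These are diagonal in the irrep basis $\beta$, and $U(\h)_S$ is abelian.

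For \textbf{part 3}, let $d=\dim\h\ge 2$. Fix a basis and let $S$ be the group of diagonal unitaries whose first two entries are equal. Then $U(\h)_S \cong U(2)\times U(1)^{d-2}$, which is non-abelian. Let $H_1 = \langle 2I, \zeta_M I\rangle\le U(2)$, where $2I$ is the binary icosahedral group and $M$ is large. With uniform weight $H_1$ is a unitary $5$-design in $U(2)$; the central phases kill the unbalanced moments. Let $H_2=\mathbb{Z}_M^{d-2}$, embedded as diagonal roots of unity, and set $H=H_1\times H_2$.

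Every entry of the $5$-th moment is a product of a monomial of bidegree at most $(5,5)$ in the $U(2)$ block and a character of the torus part. Both the Haar measure and the uniform measure on $H$ are product measures, so these expectations factor, and each factor matches by the argument of part 1 and by the $5$-design property of $H_1$. Thus $t^{(S)}_{\max}(H,\unif)\ge 5$, and part 2 gives equality because $U(\h)_S$ is non-abelian.
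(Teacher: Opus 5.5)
Your proposal is correct and follows essentially the paper's route: a finite group of block phases for part 1; for part 2, a reduction of $(G_S,p_S)$ to finite-group designs on each multiplicity space followed by the Bannai et al.\ classification (your compression argument is a direct, weighted version of Lemma~\ref{thm:ordinary-designs-thm}); and for part 3, a block-diagonal $\SL(2,5)$-times-phases construction with part 2 supplying the upper bound. The step you flag, passing from an arbitrary pmf to the uniform one on the finite group $\sigma_\lambda(G_S)$, is exactly Lemma~\ref{lemma:weighted-t-des}, and your restriction to $\dim\h\ge 2$ in part 3 is necessary, since $U(1)$ is abelian.
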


Theorem~\ref{thm:main-thm-2} shows that in arbitrary dimension, there are no bounds per se on $t_{\max}^{(S)}(G, \mathrm{unif})$ for finite unitary subgroups $G$. However, finite unitary subgroups possessing symmetric design strength beyond $t=5$ do not have a particularly interesting structure, since they can only consist of diagonal unitaries. Nevertheless, at a symmetric design strength of $t=5$ in dimension $d>2$, a finite subgroup of $U(\h)_S$ may possess a non-diagonal structure, unlike in the asymmetric case. It is also worth considering symmetric design strength of finite groups in a few notable special cases:

\begin{theorem}\label{thm:main-thm-3}
    The following hold:
    \begin{enumerate}
        \item Let $G\leq U(\h)$ be finite and let $S\leq U(\h)$ be closed. Let $q\geq 2, N\geq 4$. If $\h = (\mathbb{C}^2)^{\otimes N}$ or $\h = (\mathbb{C}^q)^{\otimes N}$ and 
    \begin{align}
        S &= \left\{(e^{i\theta Z})^{\otimes N}: \theta \in \mathbb{R}\right\} \cong \mathrm{U}(1)
    \intertext{or}
        S &= \left\{u^{\otimes N}: u \in \mathrm{SU}(q)\right\} \cong \mathrm{SU}(q),
    \end{align}
    then $t_{\max}^{(S)}(G, p) \leq 2$ for any pmf $p\colon  G\to [0,1]$.

    \item  Let $G\leq U(\h)$ be finite and assume that $S\leq U(\h)$ is such that $|S| < \dim \h/2$. Then $t_{\max}^{(S)}(G, p) \leq 3$ for any pmf $p\colon  G\to [0,1]$.

    \item Let $G\leq U(\h)_S$ be finite and let $\SL(2, 5)$ denote the multiplicative group of $2\times 2$ determinant-one matrices with elements from the finite field of order 5. If $4\leq t_{\max}^{(S)}(G, p) \leq 5$ for some pmf $p$ over $G$, then 
        \begin{equation}
            G \leq \left\{ \bigoplus_{\lambda = 1}^n I_{V_\lambda} \otimes s_\lambda: s_\lambda \in \left\langle \exp\left(\frac{2\pi i}{k_\lambda} \right) I_{W_\lambda} \right\rangle \rho_\lambda(\SL(2, 5)) \right\}
        \end{equation}
        for some integers $k_1,\dots, k_n$ and irreducible representations $\rho_\lambda: \SL(2, 5) \to U(W_\lambda), \lambda \in [n]$ of either degree one or degree two; furthermore, for every $s_\lambda \in \langle \exp(2\pi i/k_\lambda) I_{W_\lambda}\rangle \rho_\lambda(\SL(2, 5))$, there exists an element $g\in G$ such that $\sigma_\lambda(g) = s_\lambda$.
    \end{enumerate}
\end{theorem}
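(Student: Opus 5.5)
The plan is to reduce all three parts to statements about finite subgroups of $U(W_\lambda)$ acting on the multiplicity spaces. Using the decomposition \eqref{eq:decomposition}--\eqref{eq:u-h-s-action}, every $g\in G_S$ acts as $\bigoplus_\lambda I_{V_\lambda}\otimes\sigma_\lambda(g)$, and $U(\h)_S\cong\prod_\lambda U(W_\lambda)$. The first step is to observe that if $(G_S,p_S)$ is a $t$-design in $U(\h)_S$, then each projection $\sigma_\lambda(G_S)$, with the pushforward pmf, is a unitary $t$-design in $U(W_\lambda)$. To see this, restrict the moment operator \eqref{eq:moment-equality} to the block $W_\lambda^{\otimes t}\otimes\overline{W_\lambda}^{\otimes t}$; equivalently, apply \eqref{eq:comm-equality} to the subspace $W_{\lambda,i}^{\otimes t}$. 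A finite group that is a weighted $t$-design is also a uniform $t$-design, because the Haar moment is a projector and a convex combination of the unitaries $g^{\otimes t}\otimes\overline g^{\otimes t}$ equals that projector only if each fixes its range. Hence the closed group $\sigma_\lambda(G_S)$ is itself a group $t$-design in $U(W_\lambda)$. We can then import the classification of group designs \cite{Bannai,guralnick2005,bannai2020}: for $\dim W_\lambda\ge3$ a group design has strength at most $3$, and for $\dim W_\lambda=2$ it has strength at most $5$, with strength $4$ or $5$ forcing the image, up to scalars, to be the binary icosahedral group $\SL(2,5)$ in its degree-two irreps.

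\textbf{Part 1.} In the $\mathrm{U}(1)$ and $\mathrm{SU}(q)$ cases with $N\ge4$, the symmetric design strength is at most $\max_\lambda$ of the group-design strength on $W_\lambda$. For $\mathrm{U}(1)$ the blocks are the Hamming-weight sectors, so $W_\lambda$ has dimension $\binom N k\ge4\ge3$ for some $k$, giving a bound of $3$ from this block alone. To reach the bound $2$, I would use more than the block-wise statement: a $t$-design in $\prod_\lambda U(W_\lambda)$ must also reproduce the cross-block commutant. In particular, for $t=3$ it must match the invariants coupling two sectors, such as $W_\lambda^{\otimes2}\otimes W_\mu$. Concretely, I would take the sector $k=1$ (dimension $N$) and the sector $k=2$ (dimension $\binom N2$) and show that a finite group whose joint action is a $3$-design on $W_1\oplus W_2$ yields a finite $3$-design in $U(W_1)\times U(W_2)$ with independent factors. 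The product of Haar measures has a commutant that strictly excludes any "diagonal-correlated" finite image. I would try to use the fact that the determinants, or the central characters of the two factors, must be independent to third order, which is impossible for a finite group because it would require $\det$ on $W_1$ and on $W_2$ to be jointly uniform on the torus up to degree $3$ while lying in finite cyclic groups with a fixed relation. If that fails, I would fall back on Mitsuhashi--Yoshioka style explicit commutant elements, namely a partial-transpose/swap operator across two sectors. The $\mathrm{SU}(q)$ case is analogous, using Schur--Weyl duality: the $W_\lambda$ are $S_N$ irreps, and for $N\ge4$ there are at least two of dimension at least $3$. I expect this cross-block third-moment argument to be the main obstacle.

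\textbf{Part 2.} If $|S|<\dim\h/2$, then $\sum_\lambda\dim V_\lambda\dim W_\lambda=\dim\h$ and $\sum_\lambda(\dim V_\lambda)^2\le|S|$. Hence some $\lambda$ has $\dim W_\lambda>2\dim V_\lambda\ge2$, that is, $\dim W_\lambda\ge3$, and the group-design bound of $3$ for that block finishes the argument.

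\textbf{Part 3.} Strength at least $4$ forces every block to have dimension at most $2$. The classification then shows that each $\sigma_\lambda(G)$ is contained in the stated scalar extension of $\rho_\lambda(\SL(2,5))$, or is a finite scalar group when $\dim W_\lambda=1$. The surjectivity claim follows because the image of $G$ in each block must itself be the full design group.
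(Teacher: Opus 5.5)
The genuine gap is in Part~1. Your reduction to the blocks $W_\lambda$ is sound. Compressing the moment operator to $W_{\lambda,i}^{\otimes t}\otimes\overline{W_{\lambda,i}}^{\otimes t}$, and your projector argument for replacing $p_S$ by the uniform weight, are valid substitutes for Lemmas~\ref{thm:ordinary-designs-thm} and~\ref{lemma:weighted-t-des}. With that reduction, Parts~2 and~3 go through essentially as in the paper.

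For Part~1, note that the symmetric strength is bounded by the \emph{minimum} over blocks of what each block admits, not the maximum. Used this way, your block-wise step gives only $t_{\max}^{(S)}\le 3$, and your cross-block third-moment refinement cannot improve it. Cross-block invariants are never an obstruction for finite groups. If every $W_\lambda$ carries a finite group $3$-design $H_\lambda$, then $\{\bigoplus_\lambda I_{V_\lambda}\otimes g_\lambda : g_\lambda\in C_\lambda H_\lambda\}$ with $C_\lambda=\langle e^{2\pi i/4}I_{W_\lambda}\rangle$ is a finite subgroup of $U(\h)_S$. By Proposition~\ref{prop:symmetric-design-construction} and Lemma~\ref{lemma:weighted-t-des} it is a uniform $3$-design in $U(\h)_S$, because the independent phases kill every unbalanced cross term. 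Hence an $S$-symmetric group $3$-design exists exactly when each block admits a group $3$-design on its own.

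The determinant heuristic also fails on its own terms. Third moments are degree-$(3,3)$ polynomials in the matrix entries, so they cannot see $\det$ on a block of dimension $\ge 4$. Moreover, nothing forces a relation between block determinants for an arbitrary finite $G$. The Mitsuhashi--Yoshioka fallback relies on Pauli structure specific to the Clifford group, so it does not apply here either.

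The missing idea is that the classification of unitary $3$-groups \cite{Bannai} says much more than ``strength $\le 3$ in dimension $\ge 3$''. A finite group $3$-design can exist only in a sparse set of dimensions; the paper uses $1,2,6,12,18$ and powers of $2$. It then suffices to exhibit one block whose dimension is not on this list. The paper does this with two Schur--Weyl multiplicities, $\dim W_{(N-1,1)}=N-1$ and $\dim W_{(N-2,2)}=N(N-3)/2$:
\begin{itemize}
\item If $N\notin\{7,13,19\}$, admissibility of $N-1\ge 3$ forces $N-1=2^a$. Then $N(N-3)/2=(2^a+1)(2^{a-1}-1)$ is odd and larger than $1$, so it is not admissible.
\item If $N\in\{7,13,19\}$, then $N(N-3)/2\in\{14,65,152\}$, which again is not admissible.
\end{itemize}

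For the $\mathrm{U}(1)$ sectors, the analogous pair $\binom{N}{1},\binom{N}{2}$ works for $N\ge 5$. At $N=4$, however, the sector dimensions are $1,4,6,4,1$. Dimension $4$ carries the two-qubit Clifford $3$-design and $6$ is on the admissible list, so that case needs separate justification. By the product construction above, if a group $3$-design exists in dimension $6$, the bound $2$ actually fails there and no argument can rescue it. The paper's written proof only treats the Schur--Weyl case.
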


Complete proofs of Theorems~\ref{thm:main-thm-2} and~\ref{thm:main-thm-3} may be found in Section~\ref{sec:group-thms}. The proof method of these theorems proceeds by noting that given a finite subgroup $G$, the symmetry-respecting subset $G_S$ will also be a finite group. Since $S$ is closed, the Hilbert space $\h$ decomposes into irreducible representations of $S$, each of which occurs with a certain multiplicity. Since $G$ commutes with $S$, Schur's Lemma implies that $G$ can act non-trivially only on the multiplicity spaces, and since $G_S$ is a group, its components over the multiplicity spaces are also groups. We then argue that if $G_S$ is a design, its components over the multiplicity spaces are also designs, and hence, subject to the restrictions imposed by the classification of finite group designs~\cite{bannai2020, guralnick2005}. In this way, we obtain our restrictions on the symmetric design strengths of finite groups.
    
The first part of Theorem~\ref{thm:main-thm-3} may be regarded as an extension of a result of Mitsuhashi and Yoshioka~\cite{Mitsuhashi_2023}, which shows that $t_{\max}^{(S)}(\mathcal{C}_N, \mathrm{unif}) \leq 1$ for any symmetry group of the form $S = \{u^{\otimes N}: u\in H\}$, where $H$ is a Lie subgroup of $U(\mathbb{C}^2)$. Our bound is weaker by one moment than the Mitsuhashi and Yoshioka bound and is specialized to the case of $\mathrm{U}(1)$ and $\mathrm{SU}(q)$ symmetries, but applies more generally to all finite unitary subgroups beyond the multi-qubit Clifford group. We find it interesting that this extension may be obtained from the short and simple argument outlined above relying on the classification of finite group designs~\cite{bannai2020,guralnick2005}, whereas the bound on the symmetric design strength of the Clifford group from~\cite{Mitsuhashi_2023} was obtained through a long sequence of detailed Pauli matrix calculations. The second part of Theorem~\ref{thm:main-thm-3} essentially says that if the symmetry group is finite and has an order that is small relative to $\dim \h$, then the $t=3$ barrier~\eqref{eq:group-design-barrier} on unitary design strength for finite groups in $\dim \h>2$ cannot be circumvented. Finally, the third part of Theorem~\ref{thm:main-thm-3} suggests that a finite subgroup of $U(\h)_S$ with a symmetric design strength of either $t = 4$ or $t=5$ is essentially built out of copies of the group $\SL(2, 5)$ and relative phases, and therefore has a highly restricted structure.

Finally, we revisit Problem~\ref{prob-1} in the context of unitary designs that simultaneously have a group structure. We note that such designs play a key role in randomized benchmarking protocols~\cite{Wallman2014} and the construction of explicit unitary designs~\cite{Gross2007, Roy_2009, Bannai2019, kaposi2026}. We pose the following problem:

\begin{problem} 
Does there exist a symmetry group $S\leq U(\h)$ and a finite subgroup $G \leq U(\h)$ such that
\begin{equation}
    t_{\max}^{(S)}(G, \mathrm{unif}) \geq t_{\max}(G, \mathrm{unif}) > 0?
\end{equation}
\end{problem}

We again answer affirmatively:
\begin{theorem}\label{thm:main-thm-4}
    Let $t>1$. Then for any finite-dimensional Hilbert space $\h$, there exists a finite group $G\leq U(\h)$ and symmetry group $S\leq U(\h)$ such that $U(\h)_S \neq \{e^{i\theta} I: \theta \in \mathbb{R} \}$ and \begin{equation}\label{eq:group-t-max-statement}
        t_{\max}^{(S)}(G, \mathrm{unif})\geq t > t_{\max}(G, \mathrm{unif}) = 1.
    \end{equation}
    Moreover, if $\dim \h$ is even and $t\leq 5$, then there exists a finite group $G\leq U(\h)$ and symmetry group $S\leq U(\h)$ such that~\eqref{eq:group-t-max-statement} holds and $G_S$ is non-abelian.
\end{theorem}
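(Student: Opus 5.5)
Theorem~\ref{thm:main-thm-4} asks for a single finite group $G\le U(\h)$ that is a unitary $1$-design but not a $2$-design, and whose symmetric part $G_S$ is a good symmetric design. I would build $G$ as a product $G = H\cdot P$ of two finite groups. The factor $H\le U(\h)_S$ carries the symmetric design strength, via Theorem~\ref{thm:main-thm-2}(1). The factor $P$ is a Pauli/Weyl--Heisenberg-type group that supplies $1$-designness and is disjoint from $U(\h)_S$ apart from phases.

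Concretely, let $d=\dim\h$ and fix a basis $\{|k\rangle\}_{k\in\mathbb{Z}_d}$. Take $S=\{D\}$ generated by a diagonal unitary with distinct eigenvalues, e.g.\ $D=Z=\sum_k \omega^k|k\rangle\langle k|$ with $\omega=e^{2\pi i/d}$. Then $S$ is multiplicity-free and $U(\h)_S$ is the diagonal torus, which is not trivial when $d\ge 2$. By Theorem~\ref{thm:main-thm-2}(1) there is a finite group $H\le U(\h)_S$ with $t_{\max}^{(S)}(H,\unif)\ge t$. Concretely, $H$ can be the group of diagonal matrices with entries in the $m$-th roots of unity for $m$ large, which is a $t$-design in the torus once $m>t$. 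Let $W=\langle X,Z\rangle$ be the Weyl--Heisenberg group, and set $G=\langle H, X\rangle$. Since conjugation by $X$ permutes diagonal entries cyclically, $X$ normalizes $H$, so $G=H\rtimes\langle X\rangle$ and every element is uniquely of the form $hX^a$ with $h\in H$, $a\in\mathbb{Z}_d$.

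The steps, in order, are as follows.

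First, $G_S=H$. The element $hX^a$ commutes with $Z$ iff $X^a$ is diagonal, iff $a=0$. Hence $t_{\max}^{(S)}(G,\unif)=t_{\max}^{(S)}(H,\unif)\ge t$.

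Second, $G$ is a $1$-design. Since $G\supseteq W$ up to phases (provided $H$ contains $Z$, i.e.\ $d\mid m$), $G$ acts irreducibly on $\h$. A uniformly weighted finite group is a $1$-design iff it acts irreducibly, via the commutant criterion \eqref{eq:comm-equality}.

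Third, $G$ is not a $2$-design. For $t=2$ the commutant criterion says that a group $2$-design must act irreducibly on $\mathrm{Sym}^2\h$ and $\wedge^2\h$, and this fails here. Every element of $G$ is monomial, so $G$ preserves the span of $\{|k\rangle\otimes|k\rangle\}$ inside $\mathrm{Sym}^2\h$. For $d\ge2$ this is a proper nonzero subspace, so $\mathrm{Sym}^2\h$ is reducible. (For $d=2$ I would double-check this, but the monomial argument still works: the span of $|00\rangle,|11\rangle$ is proper in the $3$-dimensional $\mathrm{Sym}^2$.)

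For the non-abelian refinement with $d$ even and $t\le5$, I would replace the torus symmetry by $S=\langle Z\otimes I_2\rangle$ acting on $\h\cong\mathbb{C}^{d/2}\otimes\mathbb{C}^2$, with the $Z$ on $\mathbb{C}^{d/2}$ having distinct eigenvalues. Then each irrep of $S$ has a $2$-dimensional multiplicity space, and $U(\h)_S=\bigoplus_\lambda U(2)$. I would take $H=\{\bigoplus_\lambda \zeta_\lambda\rho(g_\lambda)\}$, where $\rho(\SL(2,5))\le U(2)$ is the binary icosahedral group, which is a unitary $5$-design in $U(2)$, and the $\zeta_\lambda$ are independent roots of unity. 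Independence across blocks (a full direct product) should give a design in the product group $\prod_\lambda U(2)$. Then take $G=\langle H, X\otimes I_2\rangle$. The three steps above go through verbatim: $X\otimes I_2$ normalizes $H$ by permuting blocks, $G$ remains block-monomial and hence fails to be a $2$-design, and $G_S=H$ is non-abelian.

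The main obstacle I expect is verifying that a full direct product of block designs is a design in $\prod_\lambda U(W_\lambda)$ up to order $t$. Cross-terms between blocks in $g^{\otimes t}\otimes\bar g^{\otimes t}$ involve mixed moments of orders $\le t$ in each factor, and these factorize because the group is a direct product. Alongside this, one must check that the phases are taken with large enough order. For the $1$-design step, I also have to make sure irreducibility of $G$ holds in the block setting, which needs $H$ to act irreducibly within each block and $X$ to permute the blocks transitively.
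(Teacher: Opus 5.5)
Your first part is correct and is essentially the paper's construction specialized to one-dimensional blocks. In both, a group of monomial matrices with phases of order $>t$ is used: the block-diagonal part $H$ is the symmetric design, and the block permutations give irreducibility. The differences are cosmetic:
\begin{itemize}
\item the paper permutes blocks by all of $\mathfrak{S}_n$ rather than by cyclic shifts;
\item it takes the torus of block phases as $S$, which has the same commutant as your $\langle Z\rangle$;
\item it gets the design property of $G_S$ from $H\le G_S$ plus Lemma~\ref{lemma:supergroup-design}, instead of your direct identification $G_S=H$;
\item it exhibits the commutant element $\bigoplus_\lambda\mathbb{F}_\lambda$ rather than an invariant subspace of $\mathrm{Sym}^2\h$. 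For one-dimensional blocks that operator is exactly the projector onto $\mathrm{span}\{|kk\rangle\}$.
\end{itemize}
The product-design step you flag is the paper's Proposition~\ref{prop:block-diag-designs}. Factorization across blocks is not the whole story there. The unbalanced moments $\mathbb{E}\,g^{\otimes r}\otimes\bar g^{\otimes s}$ with $r\neq s$ do not vanish over $\rho(\SL(2,5))$ alone; for $r=2$, $s=0$ the average is the singlet projector. It is the phases of order $>t$ that kill these terms.

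The non-abelian refinement, however, does not go through verbatim. Once the blocks contain non-diagonal $\rho(g)$, $\mathrm{span}\{|kk\rangle\}$ is no longer invariant. You must use $\bigoplus_\lambda \mathrm{Sym}^2 W_\lambda$ instead. It is invariant because each $g\in G$ maps $W_\lambda\otimes W_\lambda$ onto some $W_{\lambda'}\otimes W_{\lambda'}$ by an operator of the form $V\otimes V$. It has dimension $3d/2$, so it is proper in $\mathrm{Sym}^2\h$ (of dimension $d(d+1)/2$) only when $d\ge 4$.

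At $d=2$ there is a single block, $X$ acts trivially, and $G=H$ is a unitary $5$-design, so $t_{\max}(G,\mathrm{unif})\ge 5$. No choice of construction can repair this. For $\dim\h=2$, $U(\h)_S$ is one of three things:
\begin{itemize}
\item $U(2)$, in which case $G_S=G$ and $t_{\max}^{(S)}=t_{\max}$;
\item a maximal torus, in which case $G_S$ is abelian;
\item the scalars, which the theorem excludes.
\end{itemize}
So the ``moreover'' clause is false in dimension $2$. The paper's proof has the same defect: with $K=2$ it forces $n=1$, and then $L=\mathbb{F}$ lies in the commutant of $U(\h)$. You should state the refinement for even $d\ge 4$.

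Finally, for irreducibility in the block case you need the $W_\lambda$ to be pairwise inequivalent $H$-irreps, not merely irreducible. This is automatic from the direct-product structure of $H$.
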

We prove Theorem~\ref{thm:main-thm-4} in Section~\ref{sec:proof-of-grp-des}. While this result is not as strong as Theorem~\ref{thm:main-thm-1} in that it only applies to unitary 1-designs, it concretely demonstrates that there is a family of group designs in unboundedly high dimension whose strength is boosted by some symmetry.

\section{Proofs of Theorem~\ref{thm:main-thm-1} and Corollary~\ref{cor:symmetry-boosted-designs}}\label{sec:symmetry-boosted-design-proofs}

\subsection{Setup and construction for symmetric designs}

We first establish a basic construction of $S$-symmetric $t$-designs for any $t>0$. While constructions from local random gates were considered in~\cite{hearth2025,liu2024unitarydesignsrandomsymmetric}, they had infinite cardinality, whereas we focus on finite ensembles in this paper. Let $\h$ be a $d$-dimensional Hilbert space and let $S\leq U(\h)$ be closed. Assume the notation of Section~\ref{sec:defs-and-prelims}, in particular the decompositions~\eqref{eq:decomposition}-\eqref{eq:u-h-s-action}.
\begin{proposition}\label{prop:symmetric-design-construction}
    Let $C_\lambda \coloneqq \langle \exp(2\pi i/(t+1))I_{W_\lambda}\rangle$, let $\mu_\lambda$ denote the uniform distribution over $C_\lambda$, and let $(H_\lambda, \nu_\lambda)$ be a $t$-design in $U(W_\lambda)$. Define
    \begin{equation}\label{eq:sym-design-construction}
        H \coloneqq \left\{ \bigoplus_{\lambda = 1}^n I_\lambda \otimes g_\lambda: g_\lambda \in C_\lambda H_\lambda \right\}
    \end{equation}
    and construct a probability distribution $p$ over $H$ by defining
    \begin{equation}
        p\left(\bigoplus_{\lambda = 1}^n I_{V_\lambda} \otimes g_\lambda \right) = \prod_{\lambda = 1}^n (\mu_\lambda \star \nu_\lambda)(g_\lambda).
    \end{equation}
    Then the weighted $U(\h)$-subset $(H, p)$ is a $t$-design in $U(\h)_S$. 
\end{proposition}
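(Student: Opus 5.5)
Our plan is to compute both sides of the moment equality \eqref{eq:moment-equality} directly in the block decomposition \eqref{eq:decomposition}--\eqref{eq:u-h-s-action}. Every $u\in U(\h)_S$ has the form $\bigoplus_\lambda I_{V_\lambda}\otimes u_\lambda$ with $u_\lambda\in U(W_\lambda)$, and conversely every such operator commutes with $S$. Hence $U(\h)_S\cong \prod_\lambda U(W_\lambda)$, and under this isomorphism the Haar measure on $U(\h)_S$ is the product of the Haar measures on the $U(W_\lambda)$. In particular $H\subset U(\h)_S$, and $p$ is the product measure $\bigotimes_\lambda(\mu_\lambda\star\nu_\lambda)$. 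One should note that $p$ is well defined as a pmf: distinct tuples $(g_\lambda)$ give distinct operators, and each $\mu_\lambda\star\nu_\lambda$ is a pmf on $C_\lambda H_\lambda$.

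Next, expand $u^{\otimes t}\otimes\overline{u}^{\otimes t}$ for $u=\bigoplus_\lambda I_{V_\lambda}\otimes u_\lambda$. Projecting onto the summands $\h$ decomposes into, each tensor factor contributes a choice of label $\lambda_1,\dots,\lambda_t$ (for the $u$ factors) and $\lambda'_1,\dots,\lambda'_t$ (for the conjugate factors). The block of $u^{\otimes t}\otimes \overline u^{\otimes t}$ between these summands is, up to identities on the $V$'s, $\bigotimes_k u_{\lambda_k}\otimes\bigotimes_k\overline{u_{\lambda'_k}}$. Regrouping by $\lambda$, this is the tensor product over $\lambda$ of $u_\lambda^{\otimes a_\lambda}\otimes \overline{u_\lambda}^{\otimes b_\lambda}$, where $a_\lambda=\#\{k:\lambda_k=\lambda\}$ and $b_\lambda=\#\{k:\lambda'_k=\lambda\}$. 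Since both $p$ and the Haar measure are product measures, the expectation of each block factorizes as
\[
\prod_\lambda \mathbb{E}\bigl[u_\lambda^{\otimes a_\lambda}\otimes\overline{u_\lambda}^{\otimes b_\lambda}\bigr].
\]
It therefore suffices to show, for each $\lambda$ and all $a,b\le t$, that $\mu_\lambda\star\nu_\lambda$ and Haar on $U(W_\lambda)$ have equal $(a,b)$-moments.

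Now split into two cases. If $a=b$, then the $C_\lambda$ phase cancels, so the $(a,a)$-moment of $\mu_\lambda\star\nu_\lambda$ equals that of $\nu_\lambda$. This equals the Haar $(a,a)$-moment because $\nu_\lambda$ is a $t$-design and $a\le t$ (designs are also lower designs). If $a\neq b$, then $0<|a-b|\le t<t+1$, so averaging the phase $\omega^{a-b}$ over $\omega\in C_\lambda$, i.e.\ over the $(t+1)$-th roots of unity, gives $0$. On the Haar side, invariance of Haar measure under $u\mapsto e^{i\theta}u$ gives $e^{i\theta(a-b)}M=M$ for all $\theta$, so $M=0$. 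Both sides therefore agree in every case, and summing over all blocks gives the moment equality.

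The main point requiring care is the mixed case $a\neq b$. Here the $t$-design property of $\nu_\lambda$ gives no information, and the cyclic phase group $C_\lambda$ is exactly what makes these unbalanced moments vanish; the order $t+1$ is chosen so that $|a-b|\le t$ is never a multiple of it. The remaining bookkeeping, namely the block expansion and the product structure of the Haar measure on $\prod_\lambda U(W_\lambda)$, is routine.
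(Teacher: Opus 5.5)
Your proof is correct and follows essentially the same route as the paper's (Proposition~\ref{prop:block-diag-designs} in Appendix~\ref{app:block-diag-designs}). You expand the $t$-th moment blockwise, use the $(t+1)$-th roots of unity to kill the unbalanced blocks, apply the $t$-design property of each $(H_\lambda,\nu_\lambda)$ on the balanced ones, and note that unbalanced Haar moments vanish. The only difference is organizational: you factorize each block over $\lambda$ before averaging the phases, which is the per-block form of the paper's joint identity in Lemma~\ref{lemma:exponentials-app}, since the condition $\mathbf{p}\in S_t(\mathbf{q})$ is equivalent to your $a_\lambda=b_\lambda$ for all $\lambda$.
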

The idea behind Proposition~\ref{prop:symmetric-design-construction} is that by combining mixtures of operators from unitary $t$-designs in groups $U(W_\lambda)$ with certain relative phases over the multiplicity spaces, we may form $t$-designs in $U(\h)$. It is crucial that the relative phases appear in~\eqref{eq:sym-design-construction} via $C_\lambda$, for $(H, p)$ will not necessarily be a $t$-design in $U(\h)_S$ without them. Note that we could also form $t$-designs in $U(\h)_S$ by considering mixtures of operators from so-called \textit{strong} unitary $t$-designs in $U(W_\lambda)$, as in~\cite{bannai2020}, though we opt to consider a construction built out of the the more familiar notion of a unitary $t$-designs. The proof of Proposition~\ref{prop:symmetric-design-construction} follows from Proposition~\ref{prop:block-diag-designs} in Appendix~\ref{app:block-diag-designs}, which itself amounts to direct computation.

    \subsection{Overview of zonal spherical functions}

    Let us now fix a positive integer $t$. The space $\h^{\otimes t} \otimes \h^{\otimes t}$ carries a representation $\pi_t$ of $U(\h)$ given by
    \begin{equation}\label{eq:tensor-power-rep}
        \pi_t(u) = u^{\otimes t} \otimes \overline{u}^{\otimes t}
    \end{equation}    
    for $u\in U(\h)$. Let $\m(d)$ denote the collection of all $d$-tuples $(\mu_1, \dots, \mu_d)$ of integers such that $\mu_1 \geq \dots \geq \mu_d$.  The isomorphism classes of the finite-dimensional unitary irreps of $U(\h)$ are in one-to-one correspondence with elements of $\mathcal{M}(d)$, which specify their highest weights. In other words, we may associate a unitary $U(\h)$-irrep $(V_\mu, \zeta_\mu)$ with each $\mu \in \mathcal{M}(d)$ such that every $U(\h)$-irrep is isomorphic to some $(V_\mu, \zeta_\mu)$, and two irreps $(V_\mu, \zeta_\mu)$ and $(V_{\mu'}, \zeta_{\mu'})$ are isomorphic if and only if $\mu = \mu'$. Let $\mathcal{M}(d, t)$ denote the set of all $\mu \in \mathcal{M}$ such that
    \begin{equation}
        \sum_{\stackrel{i \in [d]:}{\mu_i > 0 }} \mu_i = \sum_{\stackrel{i \in [d]:}{\mu_i < 0}} |\mu_i| \leq t. 
    \end{equation}
    Every irrep into which the representation $\pi_t$ from~\eqref{eq:tensor-power-rep} decomposes is isomorphic to $\zeta_\mu$ for  some $\mu \in \mathcal{M}(d, t)$ and for every $\mu \in \mathcal{M}(d, t)$, there exists some subrepresentation of $\pi_t$ that is isomorphic to $\zeta_\mu$~\cite{Roy_2009}.

    We now briefly introduce the concept of a Gelfand pair following the formulation provided in~\cite{bannai2020}. Given a compact Hausdorff group $G$ and a closed subgroup $K \leq G$, we say that $(G, K)$ is a \textit{Gelfand pair} if for every unitary irrep $(V, \zeta)$ of $G$, the subspace
    \begin{equation}
        V^K \coloneqq \{|v\rangle \in V: \zeta(k)|v\rangle = |v\rangle \text{ for all }k\in K\}
    \end{equation}
    is at most one-dimensional. If $V^K$ is one-dimensional, then the irrep $(V, \zeta)$ is said to be \textit{$K$-spherical}. Given an $m$-dimensional subspace $W$ of $\h$, let us identify $U(W) \times U(W^\perp)$ with its canonical embedding into $U(\h)$: 
    \begin{equation}
        \left\{ u \oplus v: u\in U(W), v\in U(W^\perp) \right\}.
    \end{equation}
    In this case, it is known that $(U(\h), U(W) \times U(W^\perp))$ is a Gelfand pair~\cite{bannai2020}. 
    For $m\leq \lfloor d/2\rfloor$, let $\msph(d, m)$ denote the set of $\mu \in \mathcal{M}(d)$ of the form
    \begin{equation}
        (\mu_1, \dots, \mu_m, 0, \dots, 0, -\mu_m, \dots, -\mu_1) 
    \end{equation}
    and let $\msph(d, m, t) \coloneqq \msph(d, m) \cap \m(d, t)$, that is, we require $\sum_{j=1}^m \mu_j \leq t$. 
    The $U(\h)$-irrep $(V_\mu, \zeta_\mu)$ is $U(W) \times U(W^\perp)$-spherical for every $\mu \in \msph(d, m)$ and every $U(W)\times U(W^\perp)$-spherical irrep of $U(\h)$ must be isomorphic to $(V_\mu, \zeta_\mu)$ for some $\mu \in \msph(d, m)$~\cite{bannai2020}. Thus, $\msph(d, m, t)$ indexes the collection of all inequivalent $U(W) \times U(W^\perp)$-spherical irreps appearing as subrepresentations of $\pi_t$.

    Given a Gelfand pair $(G, K)$ and $K$-spherical irrep $(V, \zeta)$ of $G$, we define the \textit{normalized zonal spherical function} for $(V, \zeta)$ to be the matrix coefficient $Z_\zeta^{(G, K)}\colon G \to \mathbb{C}$ given by
    \begin{equation}\label{eq:zonal-spher-func-def}
        Z_\zeta^{(G, K)}(g) = \langle v|\zeta(g)|v\rangle,
    \end{equation}
    where $|v\rangle$ is a $K$-invariant unit vector in $V$; note that the function $Z_\zeta^{(G, K)}$ does not depend on the choice of $|v\rangle$. For the remainder of this paper, we will denote the zonal spherical function of the Gelfand pair $(U(\h), U(W)\times U(W^\perp))$ corresponding to the irrep $(V_\mu, \zeta_\mu)$ by $Z_\mu$.

    \subsection{Properties of unitary designs}

    Given an arbitrary group $K$ and weighted $K$-subsets $(\x, p), (\y, q)$, let $p\star q\colon \x\y\to [0,1]$ denote the \textit{convolution} $p\star q$ of $p$ and $q$ by 
    \begin{equation}
        (p\star q)(g) = \sum_{\stackrel{(x, y) \in \x \times \y\colon}{xy = g}} p(x)q(y).
    \end{equation} 
    Note that the convolution $p\star q$ is a pmf over $K$. 
    Since the convolution of pmf's is associative, we may define the convolution $p_1\star \dots\star p_k$ for weighted $K$-subsets $(\x_1, p_1), \dots, (\x_k, p_k)$ without ambiguity. In this case, it is straightforward to check (see Appendix~\ref{app:associativity} for more details) that 
    \begin{equation}
        (p_1 \star \dots \star p_k)(g) = \sum_{\stackrel{(x_1, \dots, x_k) \in \x_1\times \dots \times \x_k\colon}{x_1\dots x_k = g}} p_1(x)\dots p_k(x).
    \end{equation}

    We will use the following observation throughout our work. 
    \begin{lemma}\label{lemma:design-with-non-design}
        Given some closed subgroup $K \leq U(\h)$, suppose that $(\x, p)$ is a $t$-design in $K$, and let $(\y, q)$ be an arbitrary weighted $K$-subset. Then $(\x\y, p \star q)$ and $(\y\x, q \star p)$ are $t$-designs in $K$. 
    \end{lemma}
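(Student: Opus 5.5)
We compute the $t$-th moment of $(\x\y, p\star q)$ directly and reduce it to the Haar integral using invariance of the Haar measure on $K$; the statement for $(\y\x, q\star p)$ is symmetric. The one identity we need is that $g\mapsto g^{\otimes t}\otimes\overline{g}^{\otimes t}=\pi_t(g)$ is multiplicative: since complex conjugation is entrywise with respect to a fixed basis, $\overline{xy}=\overline{x}\,\overline{y}$, so $\pi_t(xy)=\pi_t(x)\pi_t(y)$.

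First, we expand the convolution. For each $g\in\x\y$, $(p\star q)(g)$ is the sum of $p(x)q(y)$ over the pairs $(x,y)$ with $xy=g$. Grouping terms by $g$ gives
\begin{equation*}
\sum_{g\in\x\y}(p\star q)(g)\,\pi_t(g)=\sum_{(x,y)\in\x\times\y}p(x)q(y)\,\pi_t(x)\pi_t(y)=\Big(\sum_{x\in\x}p(x)\pi_t(x)\Big)\Big(\sum_{y\in\y}q(y)\pi_t(y)\Big).
\end{equation*}
This is a finite rearrangement and needs no further justification.

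Next, we use the hypothesis that $(\x,p)$ is a $t$-design in $K$ to replace the first factor by $\int_K dk\,\pi_t(k)$. For each fixed $y\in\y\subset K$, right-invariance of the Haar measure on the compact group $K$ gives
\begin{equation*}
\int_K dk\,\pi_t(k)\pi_t(y)=\int_K dk\,\pi_t(ky)=\int_K dk\,\pi_t(k).
\end{equation*}
Since $\sum_y q(y)=1$, the whole expression equals $\int_K dk\,\pi_t(k)$. This is exactly the moment equality~\eqref{eq:moment-equality} for $(\x\y,p\star q)$. Note also that $\x\y\subset K$ is finite and $p\star q$ is a pmf, so $(\x\y,p\star q)$ is a weighted $K$-subset.

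For $(\y\x,q\star p)$ the same expansion gives the product with the factors in the opposite order. The argument then goes through with left-invariance in place of right-invariance. Compact groups are unimodular, so the Haar measure is bi-invariant and both invariances are available.

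There is no real obstacle here. The only points needing care are the multiplicativity of $\pi_t$, which requires conjugation to be taken in a fixed basis, and the bi-invariance of the Haar measure on $K$.
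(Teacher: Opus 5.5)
Your proposal is correct and follows essentially the same route as the paper. Like the paper, you expand the convolution into a double sum over $\x\times\y$, factor using the multiplicativity of $g\mapsto g^{\otimes t}\otimes\overline{g}^{\otimes t}$, replace the $t$-design factor by the Haar integral, and absorb each $y$ through invariance of the Haar measure on $K$. The paper handles the reversed ordering $(\y\x, q\star p)$ with ``by a similar argument,'' which your remark on bi-invariance makes explicit.
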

    \begin{proof}
        Note that 
        \begin{align}
            \sum_{g\in \x\y} (p\star q)(g) g^{\otimes t} \otimes \overline{g}^{\otimes t} 
            &= \sum_{g\in \x\y} \sum_{\stackrel{(x,y) \in \x\times\y:}{xy = g}} p(x)q(y) (xy)^{\otimes t} \otimes (\overline{xy})^{\otimes t} \\
            &= \sum_{x\in \x} \sum_{y\in \y} p(x)q(y) (xy)^{\otimes t} \otimes (\overline{xy})^{\otimes t} \\
            &= \sum_{y\in \y} q(y) \left(\sum_{x\in \x} p(x) x^{\otimes t} \otimes \overline{x}^{\otimes t} \right) (y^{\otimes t} \otimes \overline{y}^{\otimes t}) \\
            &= \sum_{y\in \y} q(y) \left(\int_K dk \ k^{\otimes t} \otimes \overline{k}^{\otimes t}\right)(y^{\otimes t}\otimes \overline{y}^{\otimes t}) \\
            &= \sum_{y\in \y} q(y) \int_K dk \ (ky)^{\otimes t} \otimes (\overline{ky})^{\otimes t} \\
            &= \int_K dk \ k^{\otimes t} \otimes \overline{k}^{\otimes t},
        \end{align}
        so that $(\x\y, p\star q)$ is a $t$-design in $K$. By a similar argument, $(\y\x, q \star p)$ is also a $t$-design in $K$.
    \end{proof}

    We will also use some other useful characterizations of the $t$-design property that have been employed extensively in~\cite{bannai2020, Roy_2009}:
    \begin{lemma}[{\cite{bannai2020,Roy_2009}}]
    \label{lemma:t-des-equiv-properties}
        The following hold:
        \begin{enumerate}
            \item Let $K\leq U(\h)$ be a closed subgroup. A weighted $K$-subset $(\x, p)$ is a $t$-design in $K$ iff
            \begin{equation}
                \sum_{g\in \x} p(g) \zeta_\nu(g) = \int_K dk \ \zeta_\nu(k) 
            \end{equation}
            for all $\nu \in \m(d, t)$.
            \item A weighted $U(\h)$-subset $(\x, p)$ is a $t$-design in $U(\h)$ if and only if 
            \begin{equation}
                \sum_{g\in \x} p(g) \zeta_\nu(g) = 0
            \end{equation}
            for all $\nu \in \m(d, t)\setminus\{\mathbf{0}\}$.
        \end{enumerate}
    \end{lemma}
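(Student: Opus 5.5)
The plan is to reduce the operator identity \eqref{eq:moment-equality} to a family of identities, one for each irreducible constituent of the representation $\pi_t$ from \eqref{eq:tensor-power-rep}, and then use the fact recalled above from~\cite{Roy_2009}. That fact says the irreps occurring in $\pi_t$ are, up to isomorphism, exactly the $\zeta_\nu$ with $\nu\in\m(d,t)$, where $d=\dim\h$. Throughout, I read $\zeta_\nu(g)$ as the operator on $V_\nu$, so each condition in the lemma is an operator identity.

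\textbf{Part 1.} Since $\pi_t$ is a finite-dimensional unitary representation of the compact group $U(\h)$, there is a unitary $T$ such that
\[
T\,\pi_t(u)\,T^\dagger=\bigoplus_{\nu\in\m(d,t)}\zeta_\nu(u)\otimes I_{m_\nu}\qquad\text{for all }u\in U(\h),
\]
with every multiplicity $m_\nu\geq 1$. I restrict this decomposition to $u\in K$; no decomposition into $K$-irreps is needed, since $T$ does not depend on $u$.
\begin{itemize}
\item Both sides of \eqref{eq:moment-equality} are linear in $\pi_t$: the left side is $\sum_{g\in\x}p(g)\pi_t(g)$ and the right side is $\int_K dk\,\pi_t(k)$.
\item Conjugating both sides by $T$ turns each into a block-diagonal operator whose $\nu$-block is $\bigl(\sum_g p(g)\zeta_\nu(g)\bigr)\otimes I_{m_\nu}$, respectively $\bigl(\int_K dk\,\zeta_\nu(k)\bigr)\otimes I_{m_\nu}$.
\item Two block-diagonal operators are equal iff all their blocks are equal. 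Because $m_\nu\geq1$, the $\nu$-blocks agree iff $\sum_g p(g)\zeta_\nu(g)=\int_K dk\,\zeta_\nu(k)$.
\item Since every $\nu\in\m(d,t)$ occurs in $\pi_t$, equality of the full moments is equivalent to equality for every $\nu\in\m(d,t)$.
\end{itemize}
This gives the first claim.

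\textbf{Part 2.} Specialize to $K=U(\h)$ and compute $\int_{U(\h)}du\,\zeta_\nu(u)$.
\begin{itemize}
\item By left-invariance of Haar measure, $P_\nu:=\int du\,\zeta_\nu(u)$ satisfies $\zeta_\nu(v)P_\nu=P_\nu$ for all $v$. Hence $P_\nu$ is the orthogonal projector onto the $U(\h)$-invariant vectors of $V_\nu$.
\item Since $\zeta_\nu$ is irreducible, the invariant subspace is either $0$ or all of $V_\nu$. The latter happens only if $\zeta_\nu$ is the trivial one-dimensional irrep, i.e.\ $\nu=\mathbf 0$.
\item So $P_\nu=0$ for $\nu\in\m(d,t)\setminus\{\mathbf 0\}$, while $P_{\mathbf 0}=1$.
\item For $\nu=\mathbf 0$ the condition $\sum_g p(g)\zeta_{\mathbf 0}(g)=1$ holds automatically, because $p$ is a pmf.
\end{itemize}
Combining these with Part 1 yields the second claim.

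\textbf{Main obstacle.} The only nontrivial input is the representation-theoretic fact that $\pi_t$ contains each $\zeta_\nu$ with $\nu\in\m(d,t)$ and no others. This is stated earlier in the paper (citing~\cite{Roy_2009}), so I will take it as given. Everything else is bookkeeping with block-diagonal operators and Schur-type invariance.
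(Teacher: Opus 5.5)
Your proof is correct. Part 1 conjugates $\pi_t$ by a fixed unitary into $\bigoplus_{\nu\in\m(d,t)}\zeta_\nu\otimes I_{m_\nu}$ with every $m_\nu\geq 1$ and compares blocks, and Part 2 uses that $\int_{U(\h)}du\,\zeta_\nu(u)=0$ for nontrivial irreps, with the $\nu=\mathbf{0}$ condition holding automatically since $\sum_g p(g)=1$. The paper gives no proof of this lemma and simply cites \cite{bannai2020,Roy_2009}; your argument is essentially the standard representation-theoretic one behind those references, resting on the same input that $\pi_t$ contains exactly the irreps indexed by $\m(d,t)$.
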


    We will need one more technical lemma, which we will use to construct $t$-designs in $U(\h)$. The following lemma uses the central idea of~\cite{bannai2020}, though we state it in a slightly different form.
    \begin{lemma}\label{lemma:irrep-average-formula}
        Let $t\geq 0$. Given a nontrivial $m$-dimensional subspace $W \subset \h$ with $m \leq \lfloor d/2\rfloor$, let us define $K = U(W) \times U(W^\perp)$. For all $\mu \in \msph(d, m, t)$, let $Z_\mu$ be the zonal spherical function for $(V_\mu, \zeta_\mu)$ with respect to the Gelfand pair $(U(\h), K)$. Let $(\y, r)$ be a $t$-design in $K$, which exists by Proposition~\ref{prop:block-diag-designs} in \Cref{app:block-diag-designs}. Let $u_1, \dots, u_k \in U(\h)$ be arbitrary unitaries, define the set 
        \begin{equation}
            \z \coloneqq \y u_1 \y u_2 \dots \y u_{k}\y \subset U(\h),
        \end{equation}
        and endow it with the probability distribution
        \begin{equation}
            \eta \coloneqq r \star 1_{u_1} \star r \star 1_{u_2} \star \dots \star 1_{u_k} \star r,
        \end{equation}
        where $1_{u_i}$ denotes the pmf over the singleton set $\{u_i\}$. Then for all $\nu \in \m(d, t)$, we have
        \begin{align}
            \sum_{g\in \z} \eta(g) \zeta_\nu(g) 
            &= \begin{cases} 
               0, & \nu \in \m(d, t) \setminus \msph(d,m, t)\\
               \left(\prod_{i=1}^k Z_\nu(u_{i})\right) |v_\nu\rangle \langle v_\nu|, & \nu \in  \msph(d,m, t).\label{eq:projector-piecewise}
            \end{cases}
        \end{align}
    \end{lemma}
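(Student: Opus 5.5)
We want to compute $A_\nu \coloneqq \sum_{g\in\z}\eta(g)\zeta_\nu(g)$ for $\nu\in\m(d,t)$. The plan is to factor this average and apply the design property of $(\y,r)$ to each factor.

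\textbf{Factorization.} Expanding the iterated convolution by the formula for $p_1\star\dots\star p_k$ and using that $\zeta_\nu$ is a homomorphism, we get
\begin{equation}
A_\nu = R_\nu\,\zeta_\nu(u_1)\,R_\nu\,\zeta_\nu(u_2)\cdots\zeta_\nu(u_k)\,R_\nu,
\qquad R_\nu \coloneqq \sum_{y\in\y} r(y)\zeta_\nu(y).
\end{equation}
This is the same manipulation as in the proof of Lemma~\ref{lemma:design-with-non-design}: the sum over $\z$ weighted by $\eta$ becomes a sum over tuples $(y_0,\dots,y_k)\in\y^{k+1}$ weighted by $\prod r(y_j)$.

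\textbf{The factor $R_\nu$.} Since $(\y,r)$ is a $t$-design in $K$ and $\nu\in\m(d,t)$, part 1 of Lemma~\ref{lemma:t-des-equiv-properties} gives $R_\nu=\int_K dk\,\zeta_\nu(k)$. By invariance of the Haar measure this integral is the orthogonal projector $P_\nu$ onto the $K$-fixed subspace $V_\nu^K$; it is idempotent, self-adjoint, and its image is fixed by $K$.

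Because $(U(\h),K)$ is a Gelfand pair, $V_\nu^K$ has dimension at most one. By the characterization of spherical irreps quoted above, it is one-dimensional exactly when $\nu\in\msph(d,m)$. Intersecting with $\m(d,t)$ shows that it is one-dimensional exactly when $\nu\in\msph(d,m,t)$.

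\textbf{Case $\nu\notin\msph(d,m,t)$.} Here $P_\nu=0$, so $A_\nu=0$.

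\textbf{Case $\nu\in\msph(d,m,t)$.} Here $P_\nu=|v_\nu\rangle\langle v_\nu|$ for a $K$-invariant unit vector $|v_\nu\rangle$. Substituting into the factorization, the product telescopes:
\begin{equation}
A_\nu = |v_\nu\rangle\left(\prod_{i=1}^k\langle v_\nu|\zeta_\nu(u_i)|v_\nu\rangle\right)\langle v_\nu|.
\end{equation}
By the definition~\eqref{eq:zonal-spher-func-def}, each factor $\langle v_\nu|\zeta_\nu(u_i)|v_\nu\rangle$ equals $Z_\nu(u_i)$. This yields~\eqref{eq:projector-piecewise}.

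\textbf{Expected obstacle.} The only delicate step is justifying that $\int_K\zeta_\nu$ is exactly the projector onto $V_\nu^K$, and that this space is nonzero precisely for $\nu\in\msph(d,m)$. The first fact is standard; the second is the cited classification from~\cite{bannai2020}. One must also handle the edge case $k=0$, which is consistent with the formula since the empty product equals $1$.
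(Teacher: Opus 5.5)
Your proposal is correct and matches the paper's proof step for step. You expand the iterated convolution into the factors $\sum_{y} r(y)\zeta_\nu(y)$ interleaved with the $\zeta_\nu(u_i)$, and replace each factor by the Haar projector $\Pi_\nu^K$ using part 1 of Lemma~\ref{lemma:t-des-equiv-properties}. You then use the Gelfand-pair and spherical-irrep classification to see that this projector is $0$ or $|v_\nu\rangle\langle v_\nu|$, after which the product telescopes to $\prod_i Z_\nu(u_i)$. Your remark on the $k=0$ case is a harmless addition that the paper leaves implicit.
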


    \begin{proof}
        For all $\nu \in \m(d)$, recall that 
        \begin{equation}
            V_\nu^K \coloneqq \{|v\rangle \in V: \zeta_\nu(k)|v\rangle = |v\rangle \text{ for all }k\in K\}
        \end{equation}
        and let $\Pi_\nu^K \colon V_\nu \to V_\nu$ denote the orthogonal projector onto $V_\nu^K$. Since $(U(\h), K)$ is a Gelfand pair, we have $\dim V_\nu^K \leq 1$ for all $\nu \in \m(d)$. If $\dim V_\nu^K = 1$, let $|v_\nu\rangle$ be a $K$-invariant unit vector in $V_\nu$. With this notation, we may write 
        \begin{equation}
            \Pi_\nu^K =
            \begin{cases} 
               0 & \nu \in \m(d, t)\setminus \msph(d,m, t)\\
               |v_\nu\rangle\langle v_\nu| & \nu \in \msph(d,m, t) 
            \end{cases}
        \end{equation}
        in light of our earlier discussion on spherical irreps. For all $\nu \in \m(d, t)$, we compute:
        \begin{align}
            &\sum_{g\in \z} \eta(g) \zeta_\nu(g) \notag\\
            &= \sum_{y_1 \in \y} \dots \sum_{y_{k+1} \in \y} r(y_1) \dots r(y_{k+1}) \zeta_\nu(y_1 u_{1} y_2 u_{2} \dots y_ku_{k} y_{k+1}) \\
            &= \left(\sum_{y_1 \in \y} r(y_1) \zeta_\nu(y_1) \right) \zeta_\nu(u_1 )\left(\sum_{y_2 \in \y} r(y_2) \zeta_\nu(y_2) \right) \zeta_\nu(u_2) \dots \\
            &\ \ \ \ \ \ \ \ \ \dots \left(\sum_{y_k \in \y} r(y_k) \zeta_\nu(y_k) \right) \zeta_\nu(u_k) \left(\sum_{y_{k+1} \in \y} r(y_{k+1}) \zeta_\nu(y_{k+1}) \right) \\
            &= \left(\int_K dk \ \zeta_\nu(y_k) \right) \zeta_\nu(u_1 )\left(\int_K dk \ \zeta_\nu(y_k)  \right) \zeta_\nu(u_2) \dots \left(\int_K dk \ \zeta_\nu(y_k)  \right) \zeta_\nu(u_k) \left(\int_K dk \ \zeta_\nu(y_k) \right) \\
            &= \Pi_\nu^K \zeta_\nu(u_1 )\Pi_\nu^K \zeta_\nu(u_2) \dots \Pi_\nu^K \zeta_\nu(u_k) \Pi_\nu^K \label{eq:product-of-projectors}
        \end{align}
        The first equality above follows from the fact that $1_{u_i}(u_i) = 1$ and the properties of the $\star$-product (see the computation in Appendix~\ref{app:associativity}), and the third equality follows from the first part of Lemma~\ref{lemma:t-des-equiv-properties}. 
        The expression in \eqref{eq:product-of-projectors} is 0 if $\nu \in \m(d, t) \setminus \msph(d, m, t)$, and equal to
        \begin{align}
            |v_\nu\rangle\langle v_\nu|\zeta_\nu(u_1)|v_\nu\rangle\langle v_\nu|\zeta_\nu(u_2) |v_\nu\rangle \dots \langle v_\nu|\zeta_\nu(u_k)|v_\nu\rangle\langle v_\nu| = \left(\prod_{i=1}^k Z_\nu(u_i)\right) |v_\nu\rangle \langle v_\nu|
            \label{eq:projector-piecewise}
        \end{align}
        if $\nu \in  \msph(d, m, t)$, concluding the proof.
    \end{proof}

    \subsection{Embedding symmetric designs in unitary designs}   
    
    In the following, we assume the notation of Section~\ref{sec:defs-and-prelims}. In particular, recall that $S\leq U(\h)$ is a closed subgroup and that $\rho$ and $\sigma$ denote the natural representations of $S$ and $U(\h)_S$ on $\h$ given by $s\mapsto s$ and $u\mapsto u$, respectively. Recall that the irrep basis of $S$ is an orthonormal basis 
        \begin{equation}
            \beta \coloneqq \{|\lambda, i, j\rangle: \lambda \in [n], i\in [\dim V_\lambda], j \in [\dim W_\lambda]\}
        \end{equation}
        of $\h$ such that for all $\lambda \in [n], j \in [\dim W_\lambda]$, the subspaces
        \begin{equation}
            V_{\lambda, j} \coloneqq \mathrm{span}\left\{ |\lambda, i, j\rangle: i \in [\dim V_\lambda]\right\} 
        \end{equation}
        are irreducible under $\rho$ and for all $\lambda \in [n], i \in [\dim V_\lambda]$, the subspaces
        \begin{equation}\label{eq:multiplicity-copies}
            W_{\lambda, i} \coloneqq \mathrm{span}\left\{ |\lambda, i, j\rangle: j \in [\dim W_\lambda] \right\}
        \end{equation}
        are invariant under the natural representation $\sigma$ of $U(\h)_S$ on $\h$, and  $V_{\lambda, i} \cong V_\lambda$ and $W_{\lambda, j} \cong W_\lambda$ as $S$ and $U(\h)_S$-representations, respectively. 
    
    \begin{proposition}\label{thm:zonal-spher-fncs-zeros}
        Let $1\leq t \leq t'$. Suppose that $U(\h)_S \neq U(\h)$. Let $\lambda_0\in [n]$ be such that $\dim W_{\lambda_0}$ is the smallest of all the multiplicity space dimensions. Let us define $K = U(W_{\lambda_0, 1}) \times U(W_{\lambda_0, 1}^\perp)$ and recall that $(U(\h), K)$ is a Gelfand pair. For all $\mu \in \msph(d, \dim W_{\lambda_0})$, let $Z_\mu$ be the zonal spherical function for $(V_\mu, \zeta_\mu)$ with respect to the Gelfand pair $(U(\h), K)$. Assume that there exists a $\mu_{\mathrm{nz}} \in \msph(d, m, t+1)\setminus \{\mathbf{0}\}$ and collection of unitaries $\{u_\mu\}_{\mu \in \msph(d, m, t)\setminus \{\mathbf{0}\}}\subset U(\h)$ such that $Z_\mu(u_\mu) = 0$ while $Z_{\mu_{\mathrm{nz}}}(u_\mu)\neq 0$ for all $\mu \in \msph(d, m, t)\setminus\{\mathbf{0}\}$. Then there exists a weighted $U(\h)$-subset $(\x, p)$ such that 
        \begin{equation}
            t_{\max}^{(S)}(\x_S, p_S) \geq t' \geq t = t_{\max}(\x, p).
        \end{equation}
    \end{proposition}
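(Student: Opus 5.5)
The plan is to combine three ingredients: a finite $S$-symmetric $t'$-design $(H,q)$ from Proposition~\ref{prop:symmetric-design-construction}; a weighted $U(\h)$-subset $(\z,\eta)$ built with Lemma~\ref{lemma:irrep-average-formula} that is a $t$-design but not a $(t+1)$-design; and their convolution $(\x,p)\coloneqq(H\z,\,q\star\eta)$. Throughout, write $m=\dim W_{\lambda_0}$, so that $W\coloneqq W_{\lambda_0,1}$ has dimension $m$. The first observation is that $m\le\lfloor d/2\rfloor$, because $U(\h)_S\ne U(\h)$ forces at least two multiplicity blocks, or $\dim V_{\lambda_0}\ge 2$. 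The second observation is that every element of $U(\h)_S$ preserves $W_{\lambda_0,1}$ and hence its orthogonal complement. Therefore $U(\h)_S\le K$, and in particular $H\subset K$.

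\textbf{Step 1 (the auxiliary $K$-design).} I would take a $t$-design $(\y_0,r_0)$ in $K$, which exists by Proposition~\ref{prop:block-diag-designs}. I would replace it by $\y\coloneqq\y_0^{-1}\y_0$ with weight $r\coloneqq r_0^{-1}\star r_0$. By Lemma~\ref{lemma:design-with-non-design} this is still a $t$-design in $K$. It contains $I$ and is closed under inverses, with $r(y^{-1})=r(y)$. Next, enumerate $\msph(d,m,t)\setminus\{\mathbf 0\}=\{\mu_1,\dots,\mu_k\}$ and use the palindromic word
\begin{equation*}
\z\coloneqq \y u_{\mu_1}\y\cdots\y u_{\mu_k}\y u_{\mu_k}^{-1}\y\cdots\y u_{\mu_1}^{-1}\y,
\end{equation*}
with the corresponding convolution pmf $\eta$. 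Since $Z_\mu(u^{-1})=\overline{Z_\mu(u)}$, Lemma~\ref{lemma:irrep-average-formula} gives the following. For every $\nu\in\m(d,t)\setminus\{\mathbf 0\}$, the average of $\zeta_\nu$ over $(\z,\eta)$ is either $0$ (when $\nu$ is non-spherical) or contains the factor $|Z_\nu(u_\nu)|^2=0$. Hence $(\z,\eta)$ is a $t$-design by Lemma~\ref{lemma:t-des-equiv-properties}. For $\mu_{\mathrm{nz}}$, the average equals $\prod_i|Z_{\mu_{\mathrm{nz}}}(u_{\mu_i})|^2\,|v\rangle\langle v|\neq0$. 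Moreover, choosing every $y$-letter equal to $I$ shows $I\in\z$ with $\eta(I)>0$, since the $u$'s cancel in pairs.

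\textbf{Step 2 (unitary design strength of $\x$).} By Lemma~\ref{lemma:design-with-non-design}, $(H\z,q\star\eta)$ is a $t$-design in $U(\h)$. For the $(t+1)$-st moment, the average of $\zeta_{\mu_{\mathrm{nz}}}$ factors as
\begin{equation*}
\Big(\sum_h q(h)\zeta_{\mu_{\mathrm{nz}}}(h)\Big)\Big(\prod_i|Z_{\mu_{\mathrm{nz}}}(u_{\mu_i})|^2\Big)|v\rangle\langle v|.
\end{equation*}
Applied to the $K$-invariant vector $|v\rangle$, the first factor acts trivially because $H\subset K$. The result is therefore nonzero. By part 2 of Lemma~\ref{lemma:t-des-equiv-properties}, since $\mu_{\mathrm{nz}}\in\m(d,t+1)\setminus\{\mathbf 0\}$, this gives $t_{\max}(\x,p)=t$.

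\textbf{Step 3 (symmetric part).} Since $H\subset U(\h)_S$, an element $hz$ is symmetric iff $z$ is. Hence $\x_S=H\z_S$, and $p$ restricted to $\x_S$ is $q\star(\eta|_{\z_S})$. After normalization, $p_S=q\star\eta_S$, where $\z_S\ni I$ is nonempty with positive weight. Applying Lemma~\ref{lemma:design-with-non-design} inside the group $U(\h)_S$ then shows that $(\x_S,p_S)$ is a $t'$-design in $U(\h)_S$.

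The main obstacle is the bookkeeping in Step 3. One must check that the convolution weight factors correctly when several pairs $(h,z)$ give the same product: this works because the fibres over symmetric products consist only of pairs with $z$ symmetric. One must also guarantee that $\x_S$ is nonempty, which is exactly what the inverse-closed, identity-containing choice of $\y$ and the palindromic word are designed to ensure.
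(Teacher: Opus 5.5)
Your construction is essentially the paper's: a finite $t'$-design in $U(\h)_S$ convolved with a word in a $K$-design interleaved with the zeros $u_\mu$, symmetrized so that $I$ occurs with positive weight. Your palindromic word over $\y=\y_0^{-1}\y_0$ plays the role of the paper's $\z^{-1}\z$. Step 3 and the $t$-design half of Step 1 are correct.

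The gap is the claim about $\mu_{\mathrm{nz}}$ at the end of Step 1. You take $(\y_0,r_0)$ to be only a $t$-design in $K$. But Lemma~\ref{lemma:irrep-average-formula} computes the $\zeta_\nu$-average only for $\nu\in\m(d,t)$, because its proof replaces each inner sum $\sum_{y}r(y)\zeta_\nu(y)$ by the projector $\Pi^K_\nu$, which requires the $K$-design property at the level of $\nu$. However, $\mu_{\mathrm{nz}}$ necessarily lies in $\msph(d,m,t+1)\setminus\msph(d,m,t)$: if it were one of the $\mu_i$, then $Z_{\mu_{\mathrm{nz}}}(u_{\mu_{\mathrm{nz}}})=0$ would contradict the hypothesis. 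So with a mere $t$-design in $K$, the operator $\sum_y r(y)\zeta_{\mu_{\mathrm{nz}}}(y)$ need not equal $|v\rangle\langle v|$. The formula $\prod_i|Z_{\mu_{\mathrm{nz}}}(u_{\mu_i})|^2|v\rangle\langle v|$ is then unjustified, your argument no longer shows that the $\mu_{\mathrm{nz}}$-average is nonzero, and $t_{\max}(\x,p)=t$ is not established. The repair is what the paper does: take $(\y_0,r_0)$ to be a $(t+1)$-design in $K$, which exists by Proposition~\ref{prop:block-diag-designs}. It is in particular a $t$-design, and $\y_0^{-1}\y_0$ remains a $(t+1)$-design in $K$ by Lemma~\ref{lemma:design-with-non-design}.

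With that change your argument is complete, and your Step 2 is slightly more robust than the paper's. You get $\zeta_{\mu_{\mathrm{nz}}}(h)|v\rangle=|v\rangle$ directly from $H\subset U(\h)_S\le K$. The paper instead replaces the average over its symmetric design by $\int_{U(\h)_S}\zeta_{\mu_{\mathrm{nz}}}$ via Part~1 of Lemma~\ref{lemma:t-des-equiv-properties}. That step tacitly needs $t'\ge t+1$, so it does not literally cover the permitted case $t'=t$. You also check explicitly that $m\le\lfloor d/2\rfloor$, which the paper leaves implicit.
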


    \begin{proof}
        
        Let $(\w, q)$ be a $t'$-design in $U(\h)_S$ and let $(\mathcal{Y}, r)$ be a $(t+1)$-design in $K$, which exists by Proposition~\ref{prop:block-diag-designs} in Appendix~\ref{app:block-diag-designs}.
        Let us enumerate the elements of $\msph(d, m, t)\setminus\{\mathbf{0}\}$ as $\mu_1,\dots,\mu_k$, with $k = |\msph(d, m, t)\setminus \{\mathbf{0}\}|$. Define the set
        \begin{equation}\label{eq:z-set-def}
            \mathcal{Z} \coloneqq \mathcal{Y}u_{\mu_1}\mathcal{Y}u_{\mu_2}\dots \mathcal{Y}u_{\mu_k}\mathcal{Y} \subset U(\h)
        \end{equation}
        and endow it with the distribution 
        \begin{equation}\label{eq:eta-dist-def}
            \eta\coloneqq r\star 1_{\mu_1} \star r \star 1_{\mu_2} \star \dots \star r \star 1_{\mu_k} \star r,
        \end{equation}
        where $1_{\mu_i}$ denotes the distribution over the singleton set $\{u_{\mu_i}\}$. Define $\z^{-1}$ to be the set of inverses of elements from $\z$ and define the probability distribution $\eta_{\mathrm{inv}}$ over $\z^{-1}$ by $\eta_{\mathrm{inv}}(z) = \eta(z^{-1})$. Finally, construct the set 
        \begin{equation}
            \x \coloneqq \z^{-1}\z \w
        \end{equation}
        and endow it with the distribution
        \begin{equation}
            p \coloneqq \eta_{\mathrm{inv}} \star \eta \star q.
        \end{equation}
        We involve the inverse set $\z^{-1}$ and the inverse distribution $\eta_{\mathrm{inv}}$ simply to ensure that $\x_S$ is nonempty and to facilitate calculation with $p_S$.
        
        First, we establish that $\mathcal{X}_S$ is an $S$-symmetric $t'$-design under the distribution $p_S$. For the sake of convenience, let us write $\x= \x'\w$ and $p = p' \star q$, where $\mathcal{X}' \coloneqq \z^{-1}\z$ and $p' \coloneqq  \eta_{\mathrm{inv}}\star \eta$ is a distribution over $\mathcal{X}'$. First, let $z_0\in \z$ be such that $\eta(z_0) > 0$. Thus, we have $I = z_0^{-1}z_0 \in \x'$, so that 
        \begin{equation}
            \sum_{x' \in \x'_S} p'(x') \geq p'(I) = \sum_{\stackrel{(z_1, z_2) \in (\z_{\mathrm{inv}}, \z):}{z_1z_2 = I}} \eta_{\mathrm{inv}}(z_1)\eta(z_2) \geq \eta_{\mathrm{inv}}(z_0^{-1})\eta(z_0) = \eta(z_0)^2 > 0.  \label{eq:XS'-non-empty-well-defined} 
        \end{equation}
        Thus, $p'$ endows $\x'_S$ with a non-zero probability weight, so that $p'_S$ is well-defined via~\eqref{eq:ps-def}. Suppose $g \in \mathcal{X}_S$. We may write $g = xw$ for some $x\in \mathcal{X}'$, $w\in \w$. Recalling that $\w \subset U(\h)_S$, the fact that both $xw$ and $w$ commute with every element of $S$ implies that $x$ commutes with all of $S$. Hence, $\mathcal{X}_S = \mathcal{X}'_S \w$. We also have
        \begin{align}
            p_S(g) &\coloneqq p(g)\left(\sum_{g' \in \mathcal{X}_S} p(g') \right)^{-1} \\
            &= \left(\sum_{\stackrel{(x,w)\in \mathcal{X}'\times \w:}{xw = g}} p'(x)q(w) \right) \left(\sum_{g' \in \mathcal{X}_S}  \sum_{\stackrel{(x, w)\in\mathcal{X}'\times \w:}{xw = g'}} p'(x)q(w)\right)^{-1} \\
            &= \left(\sum_{\stackrel{(x,w)\in \mathcal{X}'\times \w:}{xw = g}} p'(x)q(w) \right) \left(\sum_{g' \in \mathcal{X}_S}  \sum_{\stackrel{(x, w)\in\mathcal{X}'_S\times \w:}{xw = g'}} p'(x)q(w)\right)^{-1} \label{eq:X-XS-1}\\
            &= \left(\sum_{\stackrel{(x,w)\in \mathcal{X}'\times \w:}{xw = g}} p'(x)q(w) \right) \left(\sum_{x\in \mathcal{X}_S'} \sum_{w\in \w} p'(x)q(w)\right)^{-1} \label{eq:XS-XS'W} \\
            &= \left(\sum_{\stackrel{(x,w)\in \mathcal{X}'\times \w:}{xw = g}} p'(x)q(w) \right) \left(\sum_{x\in \mathcal{X}_S'} p'(x)\right)^{-1} \\
            &= \sum_{\stackrel{(x,w)\in \mathcal{X}'\times \w:}{xw = g}} p'(x) \left(\sum_{x'\in \mathcal{X}_S'} p'(x')\right)^{-1} q(w) \\
            &= \sum_{\stackrel{(x,w)\in \mathcal{X}'\times \w:}{xw = g}} p'_S(x) q(w) \\
            &=
            \sum_{\stackrel{(x, w)\in \mathcal{X}'_S \times \w:}{xw = g}} p'_S(x) q(w) \label{eq:X-XS-2}  \\
            &= (p'_S \star q)(g). 
        \end{align}
        The equalities in \eqref{eq:X-XS-1} and \eqref{eq:X-XS-2} follow again from the fact that if $xw$ and $w$ commute with all of $S$, then so must $x$. Equality \eqref{eq:XS-XS'W} follows from $\x_S = \x_S' \w$.
        Since $p_S = p'_S \star q$ and $\mathcal{X}_S = \mathcal{X}'_S \w$, we may conclude using Lemma~\ref{lemma:design-with-non-design} and the fact that $(\w, q)$ is a $t'$-design in $U(\h)_S$ that $(\x_S, p_S)$ is also a $t'$-design in $U(\h)_S$.

        We now argue that $(\x, p)$ is a $t$-design in $U(\h)$. By assumption, $u_{\mu}$ is a zero of $Z_{\mu}$ for all $\mu \in \msph(d, m, t)\setminus\{\mathbf{0}\}$, so that $\prod_{i=1}^k Z_\nu(u_{\mu_i}) = 0$ for all $\nu \in \msph(d, m, t)\setminus\{\mathbf{0}\}$. It then follows from Lemma~\ref{lemma:irrep-average-formula} that 
        \begin{equation}
            \sum_{g\in \z} \eta(g) \zeta_\nu(g) = 0
        \end{equation}
        for all $\nu \in \m(d, t)\setminus\{\mathbf{0}\}$. Part 2 of Lemma~\ref{lemma:t-des-equiv-properties} then allows us to conclude that $(\z, \eta)$ is a $t$-design in $U(\h)$. We now deduce from Lemma~\ref{lemma:design-with-non-design} that $(\x, p) = (\z^{-1}\z\w, \eta_{\mathrm{inv}}\star \eta \star q)$ is a $t$-design in $U(\h)$. 
        
        Finally, we argue that $(\x, p)$ is not a $(t+1)$-design in $U(\h)$. By assumption, we have $Z_{\mu_{\mathrm{nz}}}(u_{\mu_i}) \neq 0$ for \textit{all} $1\leq i \leq k$. Therefore, Lemma~\ref{lemma:irrep-average-formula} yields \begin{equation}
            \sum_{g\in \z} \eta(g) \zeta_{\mu_{\mathrm{nz}}}(g) = c |v_{\mu_{\mathrm{nz}}}\rangle\langle v_{\mu_{\mathrm{nz}}}|,
        \end{equation}
        for some $c\neq 0$ and unit vector $|v_{\mu_{\mathrm{nz}}}\rangle \in V_{\mu_{\mathrm{nz}}}^K$, so that
        \begin{equation}
            \sum_{g\in \z^{-1}} \eta_{\mathrm{inv}}(g) \zeta_{\mu_{\mathrm{nz}}}(g) = \sum_{g\in \z} \eta(g) \zeta_{\mu_{\mathrm{nz}}}(g^{-1}) = \Bigg(\sum_{g\in \z} \eta(g) \zeta_{\mu_{\mathrm{nz}}}(g)\Bigg)^\dagger = \overline{c} |v_{\mu_{\mathrm{nz}}}\rangle\langle v_{\mu_{\mathrm{nz}}}|,
        \end{equation}
        where $\overline{c}$ denotes the complex conjugate of $c$.
        
        Let $\Pi_{\mu_{\mathrm{nz}}}^{U(\h)_S}\colon V_{\mu_{\mathrm{nz}}}\to V_{\mu_{\mathrm{nz}}}$ denote the orthogonal projector onto the subspace of $V_{\mu_{\mathrm{nz}}}$ that is invariant under $U(\h)_S$. Then
        \begin{align}
            \sum_{g\in \x} p(g) \zeta_{\mu_{\mathrm{nz}}}(g) &= \sum_{z_1 \in \z^{-1}} \sum_{z_2 \in \z} \sum_{w\in \w} \eta_{\mathrm{inv}}(z_1)\eta(z_2)q(w) \zeta_{\mu_{\mathrm{nz}}}(z_1z_2w) \\
            &= \left( \sum_{z_1 \in \z^{-1}} \eta_{\mathrm{inv}}(z_1) \zeta_{\mu_{\mathrm{nz}}}(z_1)\right)\left( \sum_{z_2 \in \z} \eta(z_2) \zeta_{\mu_{\mathrm{nz}}}(z_2)\right) \left( \sum_{w\in \w} q(w) \zeta_{\mu_{\mathrm{nz}}}(w) \right) \\
            &= \left( \sum_{z_1 \in \z^{-1}} \eta_{\mathrm{inv}}(z_1) \zeta_{\mu_{\mathrm{nz}}}(z_1)\right)\left( \sum_{z_2 \in \z} \eta(z_2) \zeta_{\mu_{\mathrm{nz}}}(z_2)\right) \left( \int_{U(\h)_S} du \ \zeta_{\mu_{\mathrm{nz}}}(u)\right) \\
            &= \left(\overline{c}|v_{\mu_{\mathrm{nz}}}\rangle\langle v_{\mu_{\mathrm{nz}}}|\right)\left(c|v_{\mu_{\mathrm{nz}}}\rangle\langle v_{\mu_{\mathrm{nz}}}|\right) \Pi_{\mu_{\mathrm{nz}}}^{U(\h)_S} \\
            &= |c|^2 |v_{\mu_{\mathrm{nz}}}\rangle\langle v_{\mu_{\mathrm{nz}}}| \\
            &\neq 0,
        \end{align}
        where the third equality is due to Part 1 of Lemma~\ref{lemma:t-des-equiv-properties} and the last equality is due to the fact that $U(\h)_S \subset K$, so that $\Pi_{\mu_{\mathrm{nz}}}^{U(\h)_S}|v_{\mu_{\mathrm{nz}}}\rangle = |v_{\mu_{\mathrm{nz}}}\rangle$. Therefore, $(\x, p)$ is not a $(t+1)$-design in $U(\h)$ by Lemma~\ref{lemma:t-des-equiv-properties}.       
    \end{proof}

    The construction in Proposition~\ref{thm:zonal-spher-fncs-zeros} for designs whose strength is enhanced by a symmetry is based on understanding the zeros of zonal spherical functions. Using Lemma~\ref{lemma:common-zeros-1} below, we can study the zeros of zonal spherical functions in a special case.  
    \begin{lemma}\label{lemma:common-zeros-1}
        Let $t\geq 1$. Let $W$ be an arbitrary \textit{one-dimensional} subspace of $\h$. Let $K = U(W) \times U(W^\perp) \leq U(\h)$. For $t\geq 0$, let $Z_t$ denote the zonal spherical function $Z_\mu$ of the Gelfand pair $(U(\h), K)$ corresponding to $\mu = (t, 0, \dots, 0, -t) \in \msph(d, 1)$. Then the following hold.
        \begin{enumerate}
            \item There exist at most finitely many values of $d\coloneqq \dim \h \geq 1$ for which there exist a unitary $u\in U(\h)$ and an integer $1\leq s \leq t$ such that $Z_s(u) = 0 = Z_{t+1}(u)$.
            \item In particular, when $1 \leq t \leq 3$ and $d > 2$, there do not exist a unitary $u\in U(\h)$ and an integer $1 \leq s \leq t$ for which $Z_s(u) = 0 = Z_{t+1}(u)$. 
        \end{enumerate}  
    \end{lemma}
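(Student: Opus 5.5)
The plan is to reduce everything to zeros of explicit Jacobi polynomials. For the Gelfand pair $(U(\h), U(W)\times U(W^\perp))$ with $\dim W = 1$, the zonal spherical functions are bi-$K$-invariant. Hence $Z_s(u)$ depends only on $x \coloneqq |\langle w|u|w\rangle|^2 \in [0,1]$, where $|w\rangle$ spans $W$. Concretely, $(U(\h),K)$ realizes complex projective space $\mathbb{CP}^{d-1}$, and the classical formula (as used in \cite{bannai2020}) gives
\begin{equation}
    Z_s(u) = \frac{P_s^{(d-2,0)}(2x-1)}{P_s^{(d-2,0)}(1)},
\end{equation}
with $P_s^{(\alpha,\beta)}$ the Jacobi polynomial. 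One can verify this by realizing $V_{(s,0,\dots,0,-s)}$ inside $\mathrm{Sym}^s(\h)\otimes \mathrm{Sym}^s(\overline{\h})$: the $K$-invariant vector is the traceless part of $|w\rangle^{\otimes s}\otimes|\overline{w}\rangle^{\otimes s}$. Every $x\in[0,1]$ is attained by some $u$, so the statement becomes one about polynomials. Part 1 says that for all but finitely many $d$, the polynomials $P_s^{(d-2,0)}$ for $1\le s\le t$ have no common root with $P_{t+1}^{(d-2,0)}$ in $[-1,1]$. Part 2 says this holds for every $d>2$ when $t\le 3$.

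For part 1, I would set $\alpha = d-2$ and view the coefficients of $P_n^{(\alpha,0)}(y)$ as polynomials in $\alpha$. For each pair $(s, t+1)$, the resultant $R_s(\alpha) = \mathrm{Res}_y\big(P_s^{(\alpha,0)}, P_{t+1}^{(\alpha,0)}\big)$ is then a polynomial in $\alpha$. Common roots exist only when $R_s(\alpha)=0$, so provided $R_s\not\equiv 0$, only finitely many $d$ are bad for each $s$, and hence finitely many in total.

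The main obstacle is proving $R_s\not\equiv 0$. My first attempt is an asymptotic argument. After the rescaling $y = 1 - 2z/\alpha$, the Jacobi polynomials $P_n^{(\alpha,0)}$ converge as $\alpha\to\infty$ to Laguerre-type polynomials $L_n^{(\cdot)}$, or alternatively to $((1+y)/2)^n$ type behavior. I would then show that the limiting families for $s$ and $t+1$ have no common root. Another route is to use the three-term recurrence: consecutive polynomials $P_n$ and $P_{n+1}$ never share roots, which handles $s=t$ directly. For $s<t$, I would fall back on the leading-order behavior in $\alpha$, where the roots of $P_n^{(\alpha,0)}$ approach $-1$ with distinct rescaled positions given by the zeros of Laguerre polynomials $L_n^{(0)}(z)$. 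Laguerre polynomials $L_s^{(0)}$ and $L_{t+1}^{(0)}$ share no zero, as can be checked via the interlacing and Christoffel–Darboux structure or by a direct argument. This gives a nonvanishing leading coefficient of $R_s$.

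For part 2, I would compute explicitly. I would write $P_1, \dots, P_4$ for $\alpha = d-2 \ge 1$ as polynomials in $x$, and check for $s\le t\le 3$ that the pairs $(P_s,P_{t+1})$ have no common root in $[0,1]$. The pair $(1,2)$ and other consecutive pairs are immediate from the recurrence. The nonconsecutive pairs are $(1,3)$, $(1,4)$, $(2,4)$. For these, I would substitute the single root of $P_1$, namely $x = 1/d$, into $P_3$ and $P_4$, and verify nonvanishing as a rational function of $d$ for integers $d\ge3$. For $(2,4)$, I would compute the resultant in $\alpha$ and confirm that it has no integer roots $\alpha\ge1$. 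This last check is a routine but tedious symbolic calculation.
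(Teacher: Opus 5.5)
Your reduction matches the paper's. Your $P_n^{(d-2,0)}(2x-1)$ coincides exactly with the paper's $P_n(x)=\sum_{j}(-1)^{n-j}\binom{n}{j}\binom{d+n+j-2}{j}x^j$, and your part~1 is the same resultant-in-$d$ argument. Your part~2 is a mildly streamlined version of the paper's symbolic check; for instance $6d^3P_3(1/d)=4(d-1)(d-2)(d+3)$, consistent with the paper's $R_{3,1}$.

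The gap is exactly the step you call the main obstacle: showing $R_s\not\equiv 0$ for $s<t$. Interlacing, the three-term recurrence and Christoffel--Darboux hold for \emph{every} orthogonal family, and they only exclude common zeros of consecutive degrees. For non-consecutive degrees they give nothing, and common zeros really do occur. Your own family shows it: at $\alpha=0$ ($d=2$) the $P_n^{(0,0)}$ are Legendre polynomials, and $P_1^{(0,0)},P_3^{(0,0)}$ both vanish at $y=0$. That is exactly why $R_{3,1}$ has the root $d=2$. So coprimality of $L_s^{(0)}$ and $L_{t+1}^{(0)}$ cannot be derived from that general structure, and the ``direct argument'' you mention is not supplied.

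The Laguerre claim is true, but it needs arithmetic input. By Schur's theorem, $L_n^{(0)}$ is irreducible over $\mathbb{Q}$, so two of them of different degrees share no root. With that citation, your route closes, provided you also put the rescaling at the endpoint where the zeros accumulate: $y=-1+2z/\alpha$ (i.e.\ $x=z/\alpha$), not $y=1-2z/\alpha$. Then $\alpha^{-s(t+1)}R_s(\alpha)$ tends to a nonzero multiple of $\mathrm{Res}(L_s^{(0)},L_{t+1}^{(0)})$.

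The paper sidesteps all of this. Since $R_{t+1,s}$ is a polynomial in $d$, a single value where it is nonzero suffices, even an unphysical one. At $d=1-s$, every coefficient $\binom{d+s+j-2}{j}=\binom{j-1}{j}$ with $j\geq 1$ vanishes, so $P_s$ collapses to the constant $(-1)^s$. The Sylvester matrix is then lower triangular, with determinant $\pm\binom{2t+1-s}{t+1}^{s}$. This is nonzero because $s\leq t$.
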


    We provide a more detailed exposition on the structure of zonal spherical functions in Section~\ref{sec:more-zonal-spher-func} and defer a proof of Lemma~\ref{lemma:common-zeros-1} until Section~\ref{sec:common-zeros}.
    We are now ready to prove the following main result of this work:

    \begin{theorem}[Theorem~\ref{thm:main-thm-1} restated]\label{thm:main-thm-1-restated}
    Let $t' \geq t> 0$ be arbitrary. There exists a $d_0>0$ such that for any Hilbert space $\h$ of dimension $\dim \h \geq d_0$ and closed subgroups $S \leq U(\h)$ for which the natural representation of $S$ on $\h$ has a multiplicity-free irreducible subrepresentation, there exists a weighted $U(\h)$-subset $(\x, p)$ satisfying
    \begin{align}
        t_{\max}^{(S)}(\x, p) \geq t' > t = t_{\max}(\x, p). 
    \end{align}
    In particular, if $1\leq t \leq 3$, then we may take $d_0 = 3$ in the preceding statement.
    \end{theorem}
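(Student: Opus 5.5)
The plan is to reduce Theorem~\ref{thm:main-thm-1-restated} to Proposition~\ref{thm:zonal-spher-fncs-zeros} with $m = 1$. By hypothesis, the natural representation of $S$ has a multiplicity-free irrep $\lambda$, so $\dim W_\lambda = 1$.

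First I handle the degenerate case $U(\h)_S = U(\h)$. Then $S$ consists of scalars, and the single irrep has multiplicity $\dim \h \geq d_0 \geq 2$. This contradicts the existence of a multiplicity-free irrep, so $U(\h)_S \neq U(\h)$. The index $\lambda_0$ minimizing $\dim W_{\lambda_0}$ then has $\dim W_{\lambda_0} = 1$. Set $W = W_{\lambda_0,1}$ and $K = U(W)\times U(W^\perp)$, so $m=1 \leq \lfloor d/2 \rfloor$ once $d\geq 2$. Since $W_{\lambda_0,1}$ is invariant under $U(\h)_S$, we have $U(\h)_S \leq K$, as used in the proof of the proposition.

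Next I verify the zero hypothesis of Proposition~\ref{thm:zonal-spher-fncs-zeros}. For $m=1$, $\msph(d,1,t)\setminus\{\mathbf{0}\} = \{(s,0,\dots,0,-s): 1\leq s\leq t\}$, whose zonal functions are the $Z_s$ of Lemma~\ref{lemma:common-zeros-1}. I take $\mu_{\mathrm{nz}} = (t+1,0,\dots,0,-t-1)\in \msph(d,1,t+1)$. For each $1\leq s\leq t$ I need $u_s$ with $Z_s(u_s)=0\neq Z_{t+1}(u_s)$. By part 1 of Lemma~\ref{lemma:common-zeros-1} there is $d_0$ such that for $d\geq d_0$ no $u$ is a common zero of $Z_s$ and $Z_{t+1}$; by part 2, $d_0=3$ works when $t\leq 3$. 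So it suffices that each $Z_s$ has a zero in $U(\h)$, and any such zero automatically satisfies $Z_{t+1}(u_s)\neq 0$.

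To show $Z_s$ has a zero, I would use the bi-$K$-invariance of $Z_s$. It depends only on $x = |\langle w|u|w\rangle|^2 \in [0,1]$, where $w$ spans $W$, and equals a Jacobi-type polynomial $P_s(x)$ with $P_s(1)=1$ (the value at $u=I$). The cleanest argument avoids explicit formulas via orthogonality: $\int_{U(\h)} Z_s(u)\,du = \langle v|\Pi^{U(\h)}|v\rangle = 0$, because $\zeta_\mu$ is a nontrivial irrep for $\mu\neq\mathbf{0}$. Since $Z_s$ is real-valued (it is a function of $x$ with real coefficients; alternatively I would check $\overline{Z_s(u)} = Z_s(u^{-1})$ together with symmetry under $x$) and continuous on the connected group $U(\h)$, takes value $1$ at $I$, and has mean zero, it must vanish somewhere by the intermediate value theorem. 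This gives $u_s$.

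With all hypotheses satisfied, Proposition~\ref{thm:zonal-spher-fncs-zeros} yields $(\x,p)$ with $t^{(S)}_{\max}\geq t'$ and $t_{\max}=t$. The theorem's strict inequality $t'>t$ is handled by applying the proposition with $t'$ replaced by $\max(t',t+1)$. I expect the main obstacle to be rigorously justifying real-valuedness of $Z_s$, or else directly exhibiting a sign change of $P_s$ on $[0,1]$. If the mean-zero argument proves delicate, I would fall back on the explicit hypergeometric form of $P_s$ developed in Section~\ref{sec:more-zonal-spher-func}.
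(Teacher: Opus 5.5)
Your proposal is correct and follows the paper's proof step for step. You reduce to Proposition~\ref{thm:zonal-spher-fncs-zeros} with $m=1$ using the multiplicity-free irrep, obtain a zero of each $Z_s$ from $\int_{U(\h)} Z_s = 0$ together with the intermediate value theorem, and invoke Lemma~\ref{lemma:common-zeros-1} so that $Z_{t+1}$ cannot vanish at that zero. Your extra care adds three small things. You rule out the case $U(\h)_S = U(\h)$, and you handle the edge case $t'=t$ by applying the proposition with $\max(t',t+1)$; the paper leaves both implicit. The real-valuedness you were worried about is immediate from Lemmas~\ref{lemma:polynomial-zonal-orthogonal-fnc} and~\ref{lemma:zonal-fnc-equivalence}, which write $Z_s(u)$ as a real polynomial evaluated at the real number $|\langle w|u|w\rangle|^2$.
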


    \begin{proof}
        Let $\lambda_0 \in [n]$ be such that $W_{\lambda_0}$ has dimension one; such a $\lambda_0$ must exist by hypothesis. Now let 
        \begin{equation}
            K = U(W_{\lambda_0, 1}) \times U(W_{\lambda_0, 1}^\perp) \coloneqq \left\{ u \oplus v: u \in U(W_{\lambda_0, 1}), v\in U(W_{\lambda_0, 1}^\perp) \right\},
        \end{equation}
        where $W_{\lambda_0, 1}$ is defined by~\eqref{eq:multiplicity-copies}. For $\mu \in \msph(d, 1)$, let $Z_\mu$ denote the zonal spherical function for the irrep $(V_\mu, \zeta_\mu)$ of the Gelfand pair $(U(\h), K)$ and let $|v_\mu\rangle$ denote a unit vector in $V_\mu^K$. Note that for all $\mu \in \msph(d, 1)\setminus\{\mathbf{0}\}$, we have
        \begin{equation}
            \int_{U(\h)} du \ Z_\mu(u) = \langle v_\mu|\left(\int_{U(\h)} du \ \zeta_\mu(u)\right)|v_\mu\rangle = 0,
        \end{equation}
        because $\int_{U(\h)} du \ \zeta_\mu(u)$ is the orthogonal projector onto the $U(\h)$-invariant subspace off $V_\mu$, which contains only the zero vector. As pointed out in~\cite{bannai2020}, it follows from the intermediate value theorem that for all $\mu \in \msph(d, 1)\setminus\{\mathbf{0}\}$, the zonal spherical function $Z_\mu$ has a zero in $U(\h)$. It then follows from the first part of Lemma~\ref{lemma:common-zeros-1} that if $d$ is sufficiently large, any zero of $Z_\mu$ for $\mu \in \msph(d, 1, t)\setminus \{\mathbf{0}\}$ cannot be a zero of $Z_{\mu_{\mathrm{nz}}}$, where $\mu_{\mathrm{nz}} \coloneqq (t+1, 0, \dots, 0, -t-1) \in \msph(d, 1, t+1)\setminus\msph(d, 1, t)$. It also follows from the second part of Lemma~\ref{lemma:common-zeros-1} that the same conclusion holds if $d>2$ and $1\leq t \leq 3$. The statement of the theorem then follows from Proposition~\ref{thm:zonal-spher-fncs-zeros}.
    \end{proof}

    Theorem~\ref{thm:main-thm-1-restated} applies to certain important physical symmetry groups. We list a few examples. To do so, let us introduce the operator $P_d(\pi)$ for permutation $\pi \in \mathfrak{S}_N$, which commutes the $N$ tensor factors in $(\mathbb{C}^d)^{\otimes N}$: 
    \begin{equation}
        P_d(\pi)|v_1\rangle \otimes \dots \otimes |V_N\rangle = |v_{\pi^{-1}(1)}\rangle \otimes \dots \otimes |v_{\pi^{-1}(N)}\rangle. 
    \end{equation}
    \begin{corollary}
        Let $N\geq 1$, $d\geq 2$. Let $\h_1 = (\mathbb{C}^d)^{\otimes N} = \h_3$, and $\h_2 = (\mathbb{C}^2)^{\otimes N}$. Define subgroups $S_i \leq U(\h_i), i = 1,2,3$ by 
        \begin{align}
            S_1 &\coloneqq \left\{u^{\otimes N}: u \in \mathrm{SU}(d)\right\} \cong \mathrm{SU}(d)  \\
            S_2 &\coloneqq \left\{(e^{i\theta Z})^{\otimes N}: \theta \in \mathbb{R}\right\} \cong \mathrm{U}(1)\\
            S_3 &\coloneqq \left\{P_d(\pi): \pi \in \mathfrak{S}_N\right\} \cong \mathfrak{S}_N, \\
        \end{align}
        where $Z$ denotes the Pauli $Z$ operator on $\mathbb{C}^2$. 
        Let $t' \geq t \geq 1$ be arbitrary integers. For $i = 1,2$ and all $N$  sufficiently large, there exists a weighted $U(\h_i)$-subset $(\x^{(i)}, p^{(i)})$ such that 
        \begin{equation}\label{eq:t-max-ineqs-2}
            t_{\max}^{(S_i)}(\x^{(i)}_{S_i}, p^{(i)}_{S_i}) \geq t' \geq t = t_{\max}(\x^{(i)}, p^{(i)}).  
        \end{equation}
        In particular, for $i = 1,2$, $N\geq 2$, and $1\leq t \leq 3$, there exists a weighted $U(\h_i)$-subset $(\x, p)$ for which~\eqref{eq:t-max-ineqs-2} holds. For $i= 3$ and all $N$ that are sufficiently large multiples of $d$, there exists a weighted $U(\h_i)$-subset $(\x, p)$ such that~\eqref{eq:t-max-ineqs-2} holds. In particular, this inequality is achieved if $1\leq t\leq 3$ and $N$ is any positive multiple of $d$. 
    \end{corollary}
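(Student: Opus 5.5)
The plan is to derive the corollary directly from Theorem~\ref{thm:main-thm-1-restated}. For each $i$, the only hypothesis to check is that the natural representation of $S_i$ on $\h_i$ contains at least one irreducible subrepresentation of multiplicity one, i.e.\ some $\lambda_0$ with $\dim W_{\lambda_0}=1$. The dimension requirement is handled separately: $\dim\h_i\to\infty$ as $N\to\infty$, and for $1\le t\le 3$ we only need $\dim \h_i\geq 3$, which holds once $N\geq 2$ (since $d\ge 2$, we get $d^N\ge 4$). The theorem gives $t_{\max}^{(S)}(\x,p)\ge t'$, which is the symmetric design strength of $(\x_S,p_S)$ by definition, so \eqref{eq:t-max-ineqs-2} follows.

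\emph{Case $S_2$ ($\mathrm{U}(1)$).} The operator $(e^{i\theta Z})^{\otimes N}$ acts on the computational basis state $|x\rangle$ by the phase $e^{i\theta(N-2|x|)}$, where $|x|$ is the Hamming weight. The irreps of $S_2$ are therefore one-dimensional, labelled by the charge $N-2w$, and the multiplicity of charge $N-2w$ is $\binom{N}{w}$. Distinct $w$ give distinct characters, so the multiplicities are exactly these binomial coefficients. Taking $w=0$ gives multiplicity $1$, spanned by $|0\cdots0\rangle$.

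\emph{Case $S_1$ ($\mathrm{SU}(d)$).} By Schur--Weyl duality,
\[
(\mathbb{C}^d)^{\otimes N}\cong\bigoplus_{\lambda\vdash N,\ \ell(\lambda)\le d} V_\lambda^{\mathrm{SU}(d)}\otimes W_\lambda^{\mathfrak{S}_N},
\]
and the $\mathrm{SU}(d)$-irreps appearing here are pairwise inequivalent. One subtlety is that partitions differing by full columns of length $d$ give the same $\mathrm{SU}(d)$-irrep, but two partitions of the same $N$ cannot differ in that way. The multiplicity of $V_\lambda$ is therefore $\dim W_\lambda$, the dimension of the Specht module. Taking $\lambda=(N)$, the symmetric subspace, gives the trivial $\mathfrak{S}_N$-irrep with multiplicity one.

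\emph{Case $S_3$ ($\mathfrak{S}_N$).} Now the roles are swapped: the $\mathfrak{S}_N$-irreps are the $W_\lambda$, with multiplicity $\dim V_\lambda^{U(d)}$, and again distinct $\lambda$ give inequivalent $\mathfrak{S}_N$-irreps. We need a $\lambda$ with $\ell(\lambda)\le d$ and $\dim V^{\mathrm{U}(d)}_\lambda=1$. When $d\mid N$, the rectangle $\lambda=(N/d)^d$ works, since it corresponds to $\det^{N/d}$ and is one-dimensional. This explains the multiple-of-$d$ hypothesis. (The trivial irrep $\lambda=(N)$ instead has multiplicity $\binom{N+d-1}{d-1}$, and other shapes have larger multiplicity in general.) With this $\lambda_0$ in hand, Theorem~\ref{thm:main-thm-1-restated} applies for $N$ a large enough multiple of $d$. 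For $1\le t\le3$ we need $d^N\ge3$, which holds for every positive multiple $N$ of $d$ since $N\ge d\ge2$.

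The main obstacle is the bookkeeping in the Schur--Weyl cases: ensuring that each multiplicity is computed against pairwise inequivalent irreps of the symmetry group, so that no two Schur--Weyl summands merge into a single isotypic component. This is the reason for the remark about column removal for $\mathrm{SU}(d)$. For $\mathfrak{S}_N$ it follows from the Specht modules being pairwise inequivalent.
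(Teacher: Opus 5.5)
Your proposal is correct and follows essentially the same route as the paper: exhibit a multiplicity-one irrep for each $S_i$ and invoke Theorem~\ref{thm:main-thm-1-restated}. The witnesses are the same as the paper's: the symmetric subspace $(N)$ for $\mathrm{SU}(d)$, the Hamming-weight-zero sector for $\mathrm{U}(1)$, and the rectangular shape $(N/d,\dots,N/d)$ with its one-dimensional $\mathrm{U}(d)$-partner for $\mathfrak{S}_N$. Your check that the Schur--Weyl summands are pairwise inequivalent as $\mathrm{SU}(d)$- and $\mathfrak{S}_N$-representations, and your explicit dimension thresholds, fill in details the paper leaves implicit.
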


    \begin{proof}
        By the Schur-Weyl duality, the space $(\mathbb{C}^d)^{\otimes N}$ decomposes under the action of $S_1$ and $S_3$ as
        \begin{equation}\label{eq:schur-weyl-decomp}
            (\mathbb{C}^d)^{\otimes N} \cong \bigoplus_{\lambda \vdash_d N} V_\lambda \otimes W_\lambda,
        \end{equation}
        where the direct sum runs over all partitions of $N$ into at most $d$ parts, the $V_\lambda$ are the irreps of $S_1$, and the $W_\lambda$ are the irreps of $S_3$. By the hook length formula, the $S_1$-irrep $V_{(N)}$ occurs with multiplicity $\dim W_{(N)} = 1$. Furthermore, if $N$ is a multiple of $d$, then the $S_3$-irrep $W_{(N/d, \dots, N/d)}$ occurs with multiplicity $\dim V_{(N/d, \dots, N/d)} = 1$ by the Weyl dimension formula. Finally, the space $(\mathbb{C}^2)^{\otimes N}$ decomposes into irreps under the action of $S_2$ as
        \begin{equation}
            (\mathbb{C}^2)^{\otimes N} \cong \bigoplus_{w = 0}^N V_w \otimes W_w,
        \end{equation}
        where $V_w$ is a one-dimensional irrep of $S_2$ spanned by a computational basis vector with Hamming weight $w$ and occurs with multiplicity $\dim W_w = {N \choose w}$. Thus, the irrep $V_0$ has multiplicity one. 
        The statement of the corollary now follows from Theorem~\ref{thm:main-thm-1-restated}.
    \end{proof}

    \subsection{More on zonal spherical functions}\label{sec:more-zonal-spher-func}

    We now study the zeros of the zonal spherical functions. The following exposition closely follows that of Kurihara and Okuda~\cite{kurihara2013} and Roy~\cite{roy2008}.

    \subsubsection{Principal angles}

    For $1 \leq m \leq \lfloor d/2\rfloor$, let $G_{m,d}$ denote the Grassmannian of $m$-dimensional subspaces in $\h$. Let $W$ be a fixed $m$-dimensional subspace of $\h$ and note that there is a bijective correspondence between the coset space $U(\h)/(U(W) \times U(W^\perp))$ and the Grassmannian $G_{m, d}$ via the map $u(U(W) \times U(W^\perp)) \mapsto uW$. 

    Given two subspaces $a, b \in G_{m, d}$, we define a sequence of parameters known as the \textit{principal angles} $\theta_1, \dots, \theta_m$, which capture a notion of the distance between $a$ and $b$. The first angle $\theta_1$ is defined as
    \begin{equation}
        \theta_1(a, b) \coloneqq \min\left\{\arccos |\langle x_1|y_1\rangle|: |x_1\rangle \in a, |y_1\rangle \in b, \langle x_1|x_1\rangle = 1 = \langle y_1|y_1\rangle\right\};
    \end{equation}
    let $|x_1\rangle \in a, |y_1\rangle \in b$ be unit vectors that achieve the minimal value. 
    The second angle $\theta_2(a, b)$ is defined recursively in terms of $|x_1\rangle, |y_1\rangle$ as
    \begin{equation}
        \theta_2(a,b) = \min\left\{\arccos |\langle x_2|y_2\rangle|: |x_2\rangle \in a\cap \Span\{|x_1\rangle\}^\perp, |y_2\rangle \in b\cap \Span\{|y_1\rangle\}^\perp, \langle x_2|x_2\rangle = 1 = \langle y_2|y_2\rangle\right\}.
    \end{equation}
    By continuing the process of finding the minimal angle between vectors in increasingly restricted subspaces of $a$ and $b$, we arrive at the sequence of \textit{principal angles} $0 \leq \theta_1 \leq \theta_2 \leq \dots \leq \theta_m \leq \pi/2$. 

    For any subspace $a\in G_{m, d}$, let $P_a$ denote the orthogonal projector onto $a$. For the sake of convenience, we define 
    \begin{equation}
        y_i^2(a, b) \coloneqq \cos^2 \theta_i(a, b)
    \end{equation}
    for all $a, b \in G_{m, d}$ and $1\leq i \leq m$. Given any two subspaces $a,b \in G_{m, d}$, the values $y_1(a, b), \dots, y_m(a,b)$ are all eigenvalues of $P_a P_b$ and all other eigenvalues are zero.  

    \subsubsection{Harmonic analysis on $G_{m, d}$}

    The unitary group $U(\h)$ naturally acts on $G_{m, d}$ by unitarily rotating subspaces. In this way, we may define the action of the unitary group $U(\h)$ on $G_{m, d}\times G_{m, d}$ by $u\cdot(a, b) = (ua, ub)$. The principal angles between two subspaces fully characterize the orbits of $U(\h)$ in $G_{m, d} \times G_{m, d}$. More precisely, subspaces $a, b \in G_{m, d}$ have the same principal angles as subspaces $p, q \in G_{m, d}$ \textit{if and only if} there exists a unitary $u \in U(\h)$ such that $p = ua$ and $q = ub$~\cite{roy2008, kurihara2013}

    Since $U(\h)$ acts on $G_{m, d}$ by unitarily rotating subspaces, the Grassmannian inherits a unitarily invariant measure; integration with respect to this measure is given in terms of the Haar measure over $U(\h)$ and the fixed subspace $W$ by 
    \begin{equation}
        \int_{G_{m,d}} dp \ f(p) = \int_{U(\mathcal{H})} du \ f(uW)
    \end{equation}
    for any integrable function $f$. Let $C^0(G_{m, d})$ denote the space of all continuous complex-valued functions over $G_{m, d}$. For any two $f, g \in C^0(G_{m, d})$, we may define the inner product by
    \begin{equation}\label{eq:grassmann-inner-product}
        \langle f, g\rangle \coloneqq \int_{G_{m, d}} dp \ f(p)^\star g(p).
    \end{equation}
    The space $C^0(G_{m, d})$ also carries a representation $\alpha$ of $U(\h)$ given by
    \begin{equation}\label{eq:alpha-rep}
        (\alpha(u)f)(p) = f(u^\dagger p).
    \end{equation}

    Now let us write 
    \begin{equation}
        K \coloneqq U(W)\times U(W^\perp)
    \end{equation}
    and fix $\mu \in \msph(d, m)$. Let us define the map 
    \begin{equation}
        \Phi_\mu: V_\mu \otimes V_\mu^K \to C^0(G_{m, d})
    \end{equation}
    by 
    \begin{equation}
        \Phi_\mu(|v\rangle\otimes |w\rangle)(uW) = \langle v| \zeta_\mu(u) |w\rangle
    \end{equation}
    and observe that this definition is independent of the choice of $u$ defining the subspace $uW$; indeed, if $uW = vW$, then $u^{-1}v \in K$, so that 
    \begin{equation}
        \langle v|\zeta_\mu(u)|w\rangle = \langle v|\zeta_\mu(u)\zeta_\mu(u^{-1}v)|w\rangle =  \langle v|\zeta_\mu(uu^{-1})\zeta_\mu(v)|w\rangle = \langle v|\zeta_\mu(v)|w\rangle
    \end{equation}
    by the $K$-invariance of $V_\mu$. It may be verified that 
    \begin{equation}\label{eq:H-mu-definition}
        H_\mu \coloneqq \varphi_\mu(V_\mu \otimes V_\mu^K)
    \end{equation}
    is a finite-dimensional subspace of $C^0(G_{m, d})$.

    The Peter-Weyl theorem for compact symmetric spaces then yields the following:

    \begin{lemma}[\cite{kurihara2013}, Fact 3.4]
        The following hold:
        \begin{enumerate}
            \item For all $\mu \in \msph(d, m)$, the space $H_\mu$ carries the unique irreducible $U(\h)$-subrepresentation of $C^0(G_{m, d})$ that is  isomorphic to $V_\mu$. 
            \item If $\mu \neq \mu'$, then $H_\mu$ and $H_{\mu'}$ are orthogonal subspaces with respect to~\eqref{eq:grassmann-inner-product}.
            \item The subspace $\bigoplus_{\mu \in \msph(d, m)} H_\mu$ is dense in $C^0(G_{m, d})$. 
        \end{enumerate}
    \end{lemma}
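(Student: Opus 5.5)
The plan is to realize $C^0(G_{m,d})$ inside $C^0(U(\h))$ and then transport the ordinary Peter--Weyl theorem for the compact group $U(\h)$ to this setting. Via the bijection $uK\mapsto uW$, continuous functions on $G_{m,d}$ are the same as continuous functions $F$ on $U(\h)$ that are right $K$-invariant, i.e.\ $F(uk)=F(u)$ for all $k\in K$. Under this identification the representation $\alpha$ of~\eqref{eq:alpha-rep} becomes left translation, $(\alpha(u)F)(x)=F(u^\dagger x)$. The inner product~\eqref{eq:grassmann-inner-product} becomes the $L^2$ inner product with respect to Haar measure on $U(\h)$. I would first check that $\Phi_\mu$ is $U(\h)$-equivariant, with $U(\h)$ acting on the first tensor factor $V_\mu$:
\begin{equation}
\Phi_\mu(\zeta_\mu(u)|v\rangle\otimes|w\rangle)(xW)=\langle v|\zeta_\mu(u^\dagger x)|w\rangle=\big(\alpha(u)\Phi_\mu(|v\rangle\otimes|w\rangle)\big)(xW).
\end{equation}
Strictly, $v\mapsto\langle v|\cdots$ is antilinear, so I would make the variable factor $\overline{V_\mu}$, or equivalently use the dual. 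This only relabels $\mu$ by $\mu^*=(-\mu_d,\dots,-\mu_1)$, and $\mu^*=\mu$ for $\mu\in\msph(d,m)$, so it does not affect the argument.

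For part 1, I would first show that $\Phi_\mu$ is injective for $\mu\in\msph(d,m)$. Since $V_\mu^K$ is one-dimensional, this reduces to showing that $\Phi_\mu(|v\rangle\otimes|v_\mu\rangle)\neq 0$ for every $v\neq 0$. This follows from the Schur orthogonality relation
\begin{equation}
\int_{U(\h)}|\langle v|\zeta_\mu(u)|v_\mu\rangle|^2\,du=\frac{\|v\|^2}{\dim V_\mu}>0 .
\end{equation}
Hence $H_\mu$ is an irreducible subrepresentation isomorphic to $V_\mu$ (up to the conjugation convention above). For uniqueness, take any subrepresentation $E\subset C^0(G_{m,d})$ isomorphic to $V_\mu$. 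By Peter--Weyl, $E$ lies in the $\mu$-isotypic component of $L^2(U(\h))$ under left translation, which is spanned by the matrix coefficients $u\mapsto\langle v|\zeta_\mu(u)|w\rangle$ for $v,w\in V_\mu$. Right $K$-invariance forces the right argument $w$ to lie in $V_\mu^K$, because right translation acts on $w$. So the right-$K$-invariant part of this component is exactly $H_\mu$, and it is a single copy of $V_\mu$ since $\dim V_\mu^K=1$ by the Gelfand property. Therefore $E=H_\mu$.

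For part 2, orthogonality of $H_\mu$ and $H_{\mu'}$ for $\mu\neq\mu'$ is the Schur orthogonality of matrix coefficients of inequivalent irreducible representations, $\int\overline{\langle v|\zeta_\mu(u)|w\rangle}\langle v'|\zeta_{\mu'}(u)|w'\rangle\,du=0$, applied with $w\in V_\mu^K$ and $w'\in V_{\mu'}^K$. For part 3, let $f\in C^0(G_{m,d})$, lifted to $F\in C^0(U(\h))$, and let $\varepsilon>0$. By the uniform Peter--Weyl theorem there is a finite combination $P$ of matrix coefficients with $\|F-P\|_\infty<\varepsilon$. I would then apply the right-averaging operator $(RP)(x)=\int_K P(xk)\,dk$. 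This operator is a sup-norm contraction and satisfies $RF=F$, so $\|F-RP\|_\infty<\varepsilon$. It also sends a coefficient $\langle v|\zeta_\nu(\cdot)|w\rangle$ to $\langle v|\zeta_\nu(\cdot)\Pi^K_\nu|w\rangle$, where $\Pi^K_\nu$ is the projector onto $V_\nu^K$. This vanishes unless $\nu$ is $K$-spherical, in which case it lies in $H_\nu$. Since the $K$-spherical irreps are exactly those indexed by $\msph(d,m)$, as recalled earlier in the paper, $RP\in\bigoplus_{\mu\in\msph(d,m)}H_\mu$. This gives density in the uniform norm.

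The main obstacle is the uniqueness claim in part 1. It needs a careful identification of the $\mu$-isotypic component of $C^0(G_{m,d})$ with $V_\mu\otimes V_\mu^K$, and a consistent treatment of the dual/conjugate representation conventions. I would handle it through the biregular decomposition $L^2(U(\h))\cong\bigoplus_\nu V_\nu\otimes V_\nu^*$ and then take right $K$-invariants factorwise.
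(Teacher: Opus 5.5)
Your argument is correct. The paper gives no proof of this lemma. It simply invokes the Peter--Weyl theorem for compact symmetric spaces in the form of Fact 3.4 of Kurihara and Okuda. You instead derive that statement from the ordinary Peter--Weyl theorem for $U(\h)$ by identifying $C^0(G_{m,d})$ with the right-$K$-invariant continuous functions on $U(\h)$. Your derivation has four pieces:
\begin{itemize}
\item Schur orthogonality gives injectivity of $\Phi_\mu$, hence irreducibility of $H_\mu$.
\item Schur orthogonality for inequivalent irreps gives the orthogonality in part 2.
\item The Gelfand property $\dim V_\mu^K=1$, applied to the right-$K$-invariant part of the isotypic component, gives uniqueness in part 1.
\item Right $K$-averaging of a uniform Peter--Weyl approximant, which kills the non-spherical coefficients, gives density in part 3.
\end{itemize}

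What this buys is a self-contained argument. It uses only facts the paper already takes as given: the Gelfand property, and the identification of the $K$-spherical irreps with $\msph(d,m)$.

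It also repairs a convention the paper glosses over. As you note, $\Phi_\mu$ is antilinear in its first slot, so $H_\mu$ is really a copy of $\overline{V_\mu}\cong V_{\mu^*}$. The statement holds as written only because $\mu^*=\mu$ for every $\mu\in\msph(d,m)$. The same self-duality is what lets you say that the $\mu$-isotypic component under left translation is spanned by matrix coefficients of $\zeta_\mu$ rather than $\zeta_{\mu^*}$.

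One step is compressed: ``right translation acts on $w$.'' It becomes rigorous once you note that $A\mapsto \mathrm{tr}(A\,\zeta_\mu(\cdot))$ is injective on $\End(V_\mu)$, again by Schur orthogonality. Then right $K$-invariance is equivalent to $\zeta_\mu(k)A=A$ for all $k\in K$, i.e.\ $\mathrm{range}(A)\subseteq V_\mu^K$.
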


    The \textit{zonal orthogonal polynomial $Z_{\mu, \alpha}$ of $H_\mu$ at the point $a\in G_{m, d}$} is defined to be the unique function in $H_\mu$ such that
    \begin{equation}
        \langle Z_{\mu, a}, p\rangle = p(a)
    \end{equation}
    for any $p\in H_\mu$; the existence of such a function is guaranteed by the Riesz representation theorem. Note that for any $p \in H_\mu$, $u\in U(\h)$, $a\in G_{m, d}$, we have
    \begin{align}
        \langle \alpha(u^\dagger) Z_{\mu, ua}, p\rangle = \langle \alpha(u^\dagger) Z_{\mu, ua}, \alpha(u^\dagger) \alpha(u)p\rangle = \langle Z_{\mu, ua}, \alpha(u)p\rangle = p(a) = \langle Z_{\mu, a}, p\rangle,
    \end{align}
    where we used the invariance under unitary translations of the integral defining the inner product in the second equality. 
    Since $p$ was arbitrary, it follows that $\alpha(u^\dagger) Z_{\mu, ua} = Z_{\mu, a}$, or that 
    \begin{equation}
        Z_{\mu, ua}(ub) = Z_{\mu, a}(b)
    \end{equation}for any $b\in G_{m, d}$. Hence, $Z_{\mu, a}(b)$ is constant along the orbit of $(a, b)$ in $G_{m, d} \times G_{m, d}$ under $U(\h)$. Therefore, $Z_{\mu, a}(b)$ may be written solely in terms of the principal angles between $a$ and $b$. 
    
    A general formula for the zonal orthogonal polynomials was, to the best of our knowledge, first identified in~\cite{James1974}, wherein $Z_{\mu, a}(b)$ is expressed as a real symmetric polynomial in $y_1(a, b), \dots, y_m(a, b)$. The authors of~\cite{kurihara2013} explicitly evaluated this formula for particular families of $\mu\in \msph(d, m)$, and we quote their calculation below for $\mu = (t, 0, \dots, 0, -t), t\geq 0$. To state the form of this polynomial, we recall that the \textit{Schur polynomial $X_\nu$} for any partition $\nu = (\nu_1, \dots, \nu_m)$, $\nu_1 \geq \dots \geq \nu_m \geq 0$ in variables $x_1, \dots, x_m$ is formally defined by 
    \begin{equation}
        X_\nu(x_1, \dots, x_m) \coloneqq \frac{\det(x_i^{\nu_j + m- j})_{i,j = 1}^m}{\det(x_i^{m-j})_{i,j=1}^m}
    \end{equation}
    and the \textit{normalized Schur polynomial $X_\nu$} for partition $\nu$ is defined as 
    \begin{equation}
        X_\nu^\star = \frac{1}{X_\nu(1, \dots, 1)} X_\nu.
    \end{equation}
    For the sake of notational convenience, let $(j)$ denote the $m$-tuple $(j, 0, \dots, 0)$ for any $j\geq 0$.

    \begin{lemma}\label{lemma:polynomial-zonal-orthogonal-fnc}
        Let $a, b\in G_{m, d}$. For any $\mu \in \msph(d, m)$, there exists a real symmetric polynomial $P_\mu$ in $m$ variables such that 
        \begin{equation}\label{eq:sym-poly-eq}
            Z_{\mu, a}(b) = P_\mu(y_1(a,b), \dots, y_m(a,b)).
        \end{equation}        
        Let $t\geq 0$ be arbitrary and let $\mu = (t, 0, \dots, 0, -t) \in \msph(d, 1)$. Then~\eqref{eq:sym-poly-eq} holds when $P_\mu$ is defined as
        \begin{equation}\label{eq:P-mu-quote}
            P_{\mu} \coloneqq \frac{(d+ 2t - 1){d + t - 2\choose t}^2}{(d - 1){d - m + t - 1 \choose t}} \sum_{j=0}^t (-1)^{t - j} {d + t + j - 2 \choose j}{m + t - 1 \choose t - j} X_{(j)}^\star.
        \end{equation}
    \end{lemma}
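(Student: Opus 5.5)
The plan has two parts. The first establishes the existence of a real symmetric polynomial $P_\mu$ for every $\mu\in\msph(d,m)$. The second identifies $P_\mu$ explicitly for $\mu=(t,0,\dots,0,-t)$ by an orthogonality argument on the space of principal angles.

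\emph{Existence.} By the discussion preceding the lemma, $Z_{\mu,a}(b)$ is constant on $U(\h)$-orbits in $G_{m,d}\times G_{m,d}$, so it depends only on the principal angles between $a$ and $b$. To see that this dependence is polynomial, realize $H_\mu$ inside the algebra of polynomial functions in the matrix entries of the projector $P_b$. This works because $V_\mu$ for $\mu\in\msph(d,m,t)$ appears in $\pi_t$, and $K$-invariant vectors of $\pi_t$ correspond to $\mathrm{Tr}$-type contractions with copies of $P_W$. Fix $a=W$. Then $b\mapsto Z_{\mu,W}(b)$ is a $K$-invariant polynomial in the entries of $P_b$. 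Invariance under $K$ reduces such a polynomial to a polynomial in the invariants $\mathrm{Tr}\big((P_WP_b)^k\big)=\sum_i y_i^{2k}$. This is a symmetric polynomial in $y_1^2,\dots,y_m^2$, and I will take the variables $y_i$ of the lemma to denote these squared cosines, consistent with the convention that the $y_i$ are the eigenvalues of $P_aP_b$. Realness follows from $\overline{Z_{\mu,a}(b)}=Z_{\mu,b}(a)$, which holds because $Z_{\mu,a}$ is a reproducing kernel, combined with the symmetry of the principal angles in $a$ and $b$.

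\emph{Explicit formula.} Let $F_s$ be the span of polynomial functions of degree at most $s$ in the entries of $P_b$. By the Peter--Weyl decomposition quoted above and the description of $\pi_s$, we have $F_s=\bigoplus_{\nu\in\msph(d,m,s)}H_\nu$. Set $\mu=(t,0,\dots,0,-t)$. I claim that the zonal function of $H_\mu$ lies in the span of $X_{(0)},\dots,X_{(t)}$ in the variables $y_i$. The reason is that the highest-weight $K$-spherical vector of $\zeta_\mu$ can be represented by powers of the single invariant $\mathrm{Tr}(P_WP_b)=X_{(1)}$ and its symmetric-power analogues $X_{(j)}$, namely complete homogeneous symmetric functions. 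The irreducible components $H_\nu$ with $\nu$ having more than one row contribute orthogonal pieces, which are not needed here. Write
\begin{equation}
Z_{\mu,W}=\sum_{j=0}^t c_j X^\star_{(j)} .
\end{equation}
The coefficients are then fixed by two conditions. First, $Z_{\mu,W}$ must be orthogonal to every zonal function of $H_{(s,0,\dots,0,-s)}$ with $s<t$, and in particular to $X^\star_{(s)}$ for $s<t$. Second, the normalization $Z_{\mu,W}(W)=\dim H_\mu=\dim V_\mu$ must hold; at $a=b$ all $y_i=1$ and every $X^\star_{(j)}$ equals $1$.

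\emph{Main obstacle.} The hardest step is evaluating the moment integrals $\int_{G_{m,d}}X^\star_{(j)}X^\star_{(s)}\,db$. The pushforward of the invariant measure to the $y_i$ is the Jacobi ensemble with density proportional to $\prod_i(1-y_i)^{d-2m}\prod_{i<j}(y_i-y_j)^2$. I would evaluate these integrals via Kadell's Schur-function extension of Selberg's integral, which expresses each integral as a ratio of products of Pochhammer symbols. Solving the resulting triangular system should give $c_j\propto(-1)^{t-j}\binom{d+t+j-2}{j}\binom{m+t-1}{t-j}$, which is a hypergeometric ${}_2F_1$-type pattern. The overall prefactor in \eqref{eq:P-mu-quote} is then recovered by imposing the normalization with the Weyl dimension formula for $\dim V_\mu$. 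As a cross-check, for $m=1$ the result should reduce to the classical Jacobi polynomial $P^{(d-2,0)}_t$.
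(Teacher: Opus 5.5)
The paper gives no derivation of this lemma: polynomiality is attributed to James, and the explicit formula is quoted from Remark~4.10 of Kurihara--Okuda. A self-contained derivation is therefore a different route. Your existence and realness argument is essentially fine, and your scheme for the explicit part is sound: write $Z_{\mu,W}=\sum_{j\le t}c_jX^\star_{(j)}$, impose orthogonality to all polynomials of degree $<t$ in the entries of $P_b$, and normalize by $Z_{\mu,W}(W)=\dim V_\mu$. For the orthogonality step, the inclusion $F_{t-1}\subseteq\bigoplus_{\nu\in\msph(d,m,t-1)}H_\nu$, which is all Peter--Weyl gives you, is enough.

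The first gap is the ansatz itself. It is what makes $t+1$ unknowns suffice; a general $K$-invariant element of $F_t$ has one unknown per partition of size at most $t$ with at most $m$ parts. Your justification does not establish it. The appeal to powers of $\mathrm{Tr}(P_WP_b)=X_{(1)}$ even points the wrong way for $m\ge 2$: $X_{(1)}^2=h_2+e_2$ has a nonzero component in $H_{(1,1,0,\dots,0,-1,-1)}$. The missing observation is the following. $X_{(j)}(y(W,b))=\mathrm{Tr}[\Pi^{(j)}_{\mathrm{sym}}(P_WP_b)^{\otimes j}]$ is the diagonal matrix coefficient, at any $u$ with $uW=b$, of the $K$-invariant vector $\Pi^{(j)}_{\mathrm{sym}}P_W^{\otimes j}$ in $\End(\mathrm{Sym}^j\h)\cong\bigoplus_{s\le j}V_{(s,0,\dots,0,-s)}$. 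Hence $X_{(j)}\in\bigoplus_{s\le j}H_{(s,0,\dots,0,-s)}$. Since each $H_\nu$ has a one-dimensional $K$-invariant part, the $(t+1)$-dimensional span of $X_{(0)},\dots,X_{(t)}$ is exactly the $K$-invariant part of $\bigoplus_{s\le t}H_{(s,0,\dots,0,-s)}$, which contains $Z_{\mu,W}$.

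The second gap is that the coefficients are only promised (``should give''), yet this computation is the entire content of the formula for $P_\mu$. The system $\sum_j c_j\langle X_{(s)},X_{(j)}\rangle=0$, $s<t$, is not triangular. By Pieri, each Gram entry is a sum of Kadell integrals of $s_{(s+j-k,k)}$, and you would still have to prove a nontrivial hypergeometric identity.

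The same averaging idea avoids all of this. Let $\Pi$ be the projector of $\End(\mathrm{Sym}^t\h)$ onto $V_\mu$, and let $S(a)$ be the unit sphere of $a$. Then $F_a\coloneqq\Pi(\Pi_{\mathrm{sym}}P_a^{\otimes t})=\binom{m+t-1}{t}\int_{S(a)}\Pi(|x\rangle\langle x|^{\otimes t})\,dx$ is $U(a)\times U(a^\perp)$-invariant and satisfies $\zeta_\mu(u)F_W=F_{uW}$. So $Z_{\mu,W}(b)\propto\langle F_W,F_b\rangle\propto\int_{S(W)}\int_{S(b)}J_t(|\langle x|x'\rangle|^2)\,dx\,dx'$, where $J_t(z)=\sum_j(-1)^{t-j}\binom{t}{j}\binom{d+t+j-2}{j}z^j$ is the classical rank-one (Jacobi) zonal polynomial, which needs only Beta integrals. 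Since $\int_{S(W)}\int_{S(b)}|\langle x|x'\rangle|^{2j}\,dx\,dx'=X^\star_{(j)}(y(W,b))/\binom{m+j-1}{j}$ and $\binom{t}{j}/\binom{m+j-1}{j}=\binom{m+t-1}{t-j}/\binom{m+t-1}{t}$, this gives exactly the coefficients of $P_\mu$. The prefactor then follows from $Z_{\mu,W}(W)=\dim V_\mu=\frac{d+2t-1}{d-1}\binom{d+t-2}{t}^2$ and Chu--Vandermonde, $\sum_j(-1)^{t-j}\binom{d+t+j-2}{j}\binom{m+t-1}{t-j}=\binom{d-m+t-1}{t}$, which also shows $F_W\neq 0$.
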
 
    The formula~\eqref{eq:P-mu-quote} may be found in Remark 4.10 of~\cite{kurihara2013}.

    Finally, we concretely establish the connection between zonal orthogonal polynomials developed for $G_{m, d}$ and the zonal spherical functions of the Gelfand pair $(U(\h), K)$. Let $\mu \in \msph(d, m)$ be arbitrary, let $d_\mu \coloneqq \dim V_\mu$, and let $|v_\mu\rangle$ be a unit vector in $V_\mu^K$. Recall that the zonal spherical function $Z_\mu \colon U(\h) \to \mathbb{C}$ for the Gelfand pair $(U(\h), K)$ is defined by $Z_{\mu}(u) = \langle v_\mu|\zeta_\mu(u)|v_\mu\rangle$.
    
    \begin{lemma}\label{lemma:zonal-fnc-equivalence}
        We have 
        \begin{equation}
            Z_\mu(u) = d_\mu^{-1} Z_{\mu, W}(uW)
        \end{equation}
        for all $u\in U(\h)$.
    \end{lemma}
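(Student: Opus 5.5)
The plan is to identify both sides as the same element of $H_\mu$ evaluated at the same point, using the reproducing property together with Schur orthogonality. Fix a unit vector $|v_\mu\rangle\in V_\mu^K$; since the pair is Gelfand, $V_\mu^K=\mathbb{C}|v_\mu\rangle$. Consider the candidate function
\begin{equation*}
F(uW)\coloneqq \Phi_\mu(|v_\mu\rangle\otimes|v_\mu\rangle)(uW)=\langle v_\mu|\zeta_\mu(u)|v_\mu\rangle=Z_\mu(u),
\end{equation*}
which lies in $H_\mu$ and is well defined on $G_{m,d}$ by the computation following the definition of $\Phi_\mu$. Since $Z_{\mu,W}$ is the unique element of $H_\mu$ reproducing evaluation at $W$, it suffices to show that $d_\mu F$ has this reproducing property, i.e. $\langle d_\mu F,p\rangle=p(W)$ for every $p\in H_\mu$.

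Every $p\in H_\mu$ is of the form $p(uW)=\langle v|\zeta_\mu(u)|v_\mu\rangle$ for some $|v\rangle\in V_\mu$. This holds because $V_\mu^K$ is one-dimensional, so $V_\mu\otimes V_\mu^K\cong V_\mu$ and $\Phi_\mu$ is linear. I read $\Phi_\mu$ as antilinear in the first slot; if the convention is linear, replace $|v\rangle$ by its conjugate, which makes no difference below. Then
\begin{equation*}
\langle F,p\rangle=\int_{U(\h)}du\,\overline{\langle v_\mu|\zeta_\mu(u)|v_\mu\rangle}\,\langle v|\zeta_\mu(u)|v_\mu\rangle .
\end{equation*}
Applying the Schur orthogonality relations for the irreducible unitary representation $\zeta_\mu$, this integral equals $d_\mu^{-1}\langle v|v_\mu\rangle\overline{\langle v_\mu|v_\mu\rangle}=d_\mu^{-1}\langle v|v_\mu\rangle$. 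On the other hand, $p(W)=p(IW)=\langle v|v_\mu\rangle$. Hence $\langle d_\mu F,p\rangle=p(W)$ for all $p\in H_\mu$, and uniqueness in the Riesz representation gives $Z_{\mu,W}=d_\mu F$.

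Evaluating at $uW$ then yields $Z_\mu(u)=F(uW)=d_\mu^{-1}Z_{\mu,W}(uW)$. The main care point is bookkeeping rather than substance. One must make sure the inner product~\eqref{eq:grassmann-inner-product} on $G_{m,d}$ pulls back to the Haar integral on $U(\h)$, which is exactly the stated definition of the invariant measure. One must also check that the conjugation conventions in $\Phi_\mu$ and in the Schur orthogonality relation line up, so that the factor is $d_\mu^{-1}$ and not its conjugate; this is harmless because $d_\mu$ is real. Finally, the fact that $H_\mu$ is exhausted by functions of the form $u\mapsto\langle v|\zeta_\mu(u)|v_\mu\rangle$ uses one-dimensionality of $V_\mu^K$, which comes from the Gelfand property.
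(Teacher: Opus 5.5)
Your proof is correct and is essentially the paper's argument. Both show, via Schur orthogonality, that $d_\mu Z_\mu$ descended to $G_{m,d}$ reproduces evaluation at $W$ on $H_\mu$, and then conclude by uniqueness in the Riesz representation; the paper phrases the Schur step as orthonormality of $e_i(uW)=d_\mu^{1/2}\langle e_i|\zeta_\mu(u)|v_\mu\rangle$ and expands $p$ in that basis, whereas you compute $\langle F,p\rangle$ directly for a general $p(uW)=\langle v|\zeta_\mu(u)|v_\mu\rangle$.
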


    \begin{proof}
        Let $\{|e_i\rangle\}_{i=1}^{d_\mu}$ be an orthonormal basis for $|e_1\rangle \coloneqq |v_\mu\rangle \in V_\mu^K$. Let $e_i\colon G_{m, d} \to \mathbb{C}$ be defined by 
        \begin{equation}
            e_i(uW) = d_\mu^{1/2} \langle e_i|\zeta_\mu(u)|v_\mu\rangle;
        \end{equation} 
        the definition of $e_i$ is independent of the choice of $u$ because $|v_\mu\rangle$ is invariant under $K$.
        By the construction of $H_\mu$ in~\eqref{eq:H-mu-definition}, the set $\{e_i\}_{i=1}^{d_\mu}$ spans $H_\mu$. Furthermore, the Schur orthogonality imply that $\{e_i\}_{i=1}^{d_\mu}$  forms an \textit{orthonormal} basis for $H_\mu$. The zonal spherical function $Z_\mu$ descends to a map $\tilde{Z}_\mu \in H_\mu$ defined by $\tilde{Z}_\mu(uW) = Z_\mu(u)$; the $K$-invariance of $|v_\mu\rangle$ again ensures that $\tilde{Z}_\mu$ is well-defined. Since $|e_1\rangle \coloneqq |v_\mu\rangle$, we have 
        \begin{equation}
            \tilde{Z}_\mu = d_\mu^{-1/2}e_1
        \end{equation}
        Letting $p\in H_\mu$ be arbitrary, the fact that $\{e_i\}_{i=1}^{d_\mu}$ is an orthonormal basis for $H_\mu$ lets us write
        \begin{equation}
            p = \sum_{i=1}^{d_\mu} \langle e_i, p\rangle e_i,
        \end{equation}
        so that
        \begin{equation}
            \langle \tilde{Z}_\mu, p\rangle = d_\mu^{-1/2} \langle e_1, p\rangle = d_\mu^{-1/2} \sum_{i=1}^{d_\mu} \delta_{i1} \langle e_i, p\rangle =  d_\mu^{-1} 
            \sum_{i=1}^{d_\mu} e_i(W) \langle e_i, p\rangle = d_\mu^{-1} p(W). 
        \end{equation}
        We then conclude that $d_\mu \tilde{Z}_\mu = Z_{\mu, W}$ by the definition of $Z_{\mu, W}$. 
    \end{proof}

    \subsubsection{Common zeros of zonal spherical functions}\label{sec:common-zeros}

    Let us briefly review the \textit{resultant} of two polynomials, following the exposition of~\cite{cox2005, cox2025}. Let $K$ be a field and suppose that $p, q$ are polynomials of positive degree in the polynomial ring $K[x]$. We may write 
    \begin{align}
        p &= a_lx^l + \dots + a_0, \ l>0 \\
        q &= b_mx^m + \dots + b_0, \ m>0
    \end{align}
    for some coefficients $a_0, \dots, a_l, b_0, \dots, b_m \in K$. The \textit{resultant of $p$ and $q$, denoted $R_{p,q}$,} is defined to be the determinant of the following $(l+m) \times (l+m)$ matrix, where blank spaces denote zeros:
    \begin{equation}
        R_{p,q} = \det 
        \begin{pmatrix}
            a_l & &  & & b_m &  &  &  \\
            a_{l-1} & a_l &  &  & b_{m-1} & b_{m} &  &  \\
            a_{l-2} & a_{l-1} & \ddots &  & b_{m-2} & b_{m-1} & \ddots &   \\
            \vdots & a_{l-2} & \ddots & a_l & \vdots & b_{m-2} & \ddots & b_m \\
             & \vdots & \ddots & a_{l-1} & & \vdots & \ddots & b_{m-1} \\
             a_0 & & & \vdots & b_0 & & & \vdots \\
             & a_0 & & & &  b_0 & & \\
             & & \ddots & & & & \ddots & & \\
             & & & a_0 & & & & b_0 
        \end{pmatrix} 
    \end{equation}
    The matrix defining the determinant has exactly $m$ columns consisting of the coefficients $a_0, \dots, l$ and exactly $l$ columns consisting of the coefficients $b_0, \dots, b_m$. As a simple example, the resultant $R_{p,q}$ of the polynomials $p = 3x^3 + 2x + 1$ and $q = x^2 + 2x + 1$ in $\mathbb{R}[x]$ is 
    \begin{equation}
        R_{p,q} = \det\begin{pmatrix}
            3 & 0 & 1 & 0 & 0 \\
            0 & 3 & 2 & 1 & 0 \\
            2 & 0 & 1 & 2 & 1 \\
            1 & 2 & 0 & 1 & 2 \\
            0 & 1 & 0 & 0 & 1 
        \end{pmatrix} = 16.
    \end{equation}
    Resultants make themselves useful for our purposes via the following lemma (see Ch.~3, $\S$6 and Prop.~3 in~\cite{cox2025}). 
    \begin{lemma}\label{lemma:common-factor-property}
        The resultant of polynomials $p, q \in K[x]$ is zero if and only if $p$ and $q$ share a common factor in $K[x]$.
    \end{lemma}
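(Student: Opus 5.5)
The plan is to use the standard linear-algebra interpretation of the Sylvester matrix, as in Ch.~3, \S6 of~\cite{cox2025}. Throughout we assume $a_l\neq 0$ and $b_m\neq 0$, so that $\deg p = l>0$ and $\deg q = m>0$. By a \emph{common factor} we mean a common divisor of positive degree.

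\textbf{Step 1 (the matrix as a linear map).} Let $K[x]_{<k}$ denote the $k$-dimensional space of polynomials of degree less than $k$. Consider the $K$-linear map
\begin{equation}
    \Psi\colon K[x]_{<m}\times K[x]_{<l} \to K[x]_{<l+m}, \qquad (A,B)\mapsto Ap+Bq .
\end{equation}
The image lands in $K[x]_{<l+m}$ because $\deg(Ap)\le l+m-1$ and $\deg(Bq)\le l+m-1$. Take the basis $x^{m-1},\dots,1$ for the $A$-slot, the basis $x^{l-1},\dots,1$ for the $B$-slot, and the basis $x^{l+m-1},\dots,1$ for the target. In these bases, the images $x^{j}p$ and $x^{j}q$ are exactly the shifted coefficient columns in the displayed matrix. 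Hence the matrix of $\Psi$ is the Sylvester matrix, and $R_{p,q}=\det\Psi$ up to the chosen ordering of bases. Since $\Psi$ is a square map between spaces of equal dimension $l+m$, we get
\begin{equation}
    R_{p,q}=0 \iff \ker\Psi\neq 0 .
\end{equation}
Equivalently, $R_{p,q}=0$ if and only if there exist $A,B$, not both zero, with $\deg A<m$, $\deg B<l$, and $Ap=-Bq$.

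\textbf{Step 2 ($\Leftarrow$).} Suppose $h$ is a common factor with $\deg h>0$. Write $p=hp_1$ and $q=hq_1$, so that $\deg p_1<l$ and $\deg q_1<m$. Then the pair $(A,B)=(q_1,-p_1)$ is nonzero, since $p_1\neq 0$. It lies in the domain of $\Psi$ and satisfies
\begin{equation}
    q_1 p - p_1 q = q_1hp_1 - p_1hq_1 = 0,
\end{equation}
so it is a nonzero element of $\ker\Psi$.

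\textbf{Step 3 ($\Rightarrow$).} Suppose $(A,B)\neq 0$ lies in the kernel, i.e.\ $Ap=-Bq$.

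First, $B\neq 0$: if $B=0$, then $Ap=0$ with $p\neq 0$, which forces $A=0$, a contradiction.

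Now assume, for contradiction, that $p$ and $q$ have no common factor of positive degree, so $\gcd(p,q)=1$. We have $p\mid Bq$, and since $K[x]$ is a principal ideal domain with unique factorization, $\gcd(p,q)=1$ gives $p\mid B$. But $0\neq B$ and $\deg B<l=\deg p$, which is impossible. Hence $\gcd(p,q)$ has positive degree, i.e.\ $p$ and $q$ share a common factor.

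\textbf{Main obstacle.} The only delicate point is bookkeeping in Step 1: one must check that the column layout of the displayed matrix (with $m$ columns of $a$'s and $l$ columns of $b$'s) matches the chosen bases. Any reordering of bases only changes the determinant by a sign, so it does not affect whether it vanishes. The hypothesis $a_l,b_m\neq 0$ is what makes the degree bounds in Steps 2 and 3 valid.
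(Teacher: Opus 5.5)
Your argument is correct: with the ordered bases you chose, the columns $x^{m-1}p,\dots,p,x^{l-1}q,\dots,q$ reproduce the displayed Sylvester matrix exactly, and Steps 2 and 3 are the usual degree-counting and Euclid's-lemma arguments. The paper does not prove this lemma itself but cites Ch.~3, \S6 of Cox--Little--O'Shea, whose proof takes essentially your route: a common factor is characterized by a nonzero pair $(A,B)$ with $\deg A<m$, $\deg B<l$ and $Ap+Bq=0$, which is read as a nontrivial kernel of the linear map represented by the Sylvester matrix.
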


    Using Lemma~\ref{lemma:common-factor-property}, we can study the zeros of zonal spherical functions in a special case.  Let $\h$ be a $d$-dimensional Hilbert space with subspace $W$ of dimension 1. Let $K = U(W) \times U(W^\perp) \leq U(\h)$. For $t\geq 0$, let $Z_t$ denote the zonal spherical function $Z_\mu$ of the Gelfand pair $(U(\h), K)$ for $\mu = (t, 0, \dots, 0, -t) \in \msph(d, 1)$. 
    \begin{lemma}[Lemma~\ref{lemma:common-zeros-1} restated]
        Let $t\geq 1$. Then the following hold.
        \begin{enumerate}
            \item There exist at most finitely many values of $d\coloneqq \dim \h \geq 1$ for which there exist a unitary $u\in U(\h)$ and an integer $1\leq s \leq t$ such that $Z_s(u) = 0 = Z_{t+1}(u)$.
            \item In particular, when $1 \leq t \leq 3$ and $d > 2$, there do not exist a unitary $u\in U(\h)$ and an integer $1 \leq s \leq t$ for which $Z_s(u) = 0 = Z_{t+1}(u)$. 
        \end{enumerate}  
    \end{lemma}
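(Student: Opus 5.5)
The plan is to reduce everything to one-variable polynomials and apply the resultant criterion of Lemma~\ref{lemma:common-factor-property}. With $m=1$, Lemma~\ref{lemma:zonal-fnc-equivalence} gives $Z_s(u) = d_\mu^{-1} Z_{\mu,W}(uW)$. By Lemma~\ref{lemma:polynomial-zonal-orthogonal-fnc}, $Z_{\mu,W}(uW)$ is the polynomial $P_{(s,0,\dots,0,-s)}$ evaluated at the single variable. Since $X^\star_{(j)}(x)=x^j$ for $m=1$, the relevant variable is $x = y_1^2 = |\langle w|u|w\rangle|^2 \in [0,1]$, where $W=\Span\{|w\rangle\}$. (The paper's $y_i$ is to be read as the eigenvalue $\cos^2\theta_i$ of $P_aP_b$.) As $u$ ranges over $U(\h)$, $x$ ranges over all of $[0,1]$. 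Hence $Z_s(u)=0=Z_{t+1}(u)$ for some $u$ if and only if the polynomials
\begin{equation}
f_{s,d}(x) \coloneqq \sum_{j=0}^s (-1)^{s-j}\binom{d+s+j-2}{j}\binom{s}{s-j} x^j, \qquad 1\le s\le t+1,
\end{equation}
have a common root in $[0,1]$. Here the nonzero prefactor of~\eqref{eq:P-mu-quote} has been dropped. Up to normalization, $f_{s,d}$ is a Jacobi polynomial $P^{(d-2,0)}_s(1-2x)$.

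For part 1, view the coefficients of $f_{s,d}$ and $f_{t+1,d}$ as polynomials in $d$. Their resultant $R_s(d) \coloneqq R_{f_{s,d},f_{t+1,d}}$ is therefore a polynomial in $d$; the leading coefficients $\binom{d+2s-2}{s}$ are nonzero for $d\ge 1$, so the degrees do not drop. If $R_s$ is not identically zero, it has finitely many roots. For every other $d$, Lemma~\ref{lemma:common-factor-property} shows that $f_{s,d}$ and $f_{t+1,d}$ share no common factor, hence no common root at all. Taking the union over $1\le s\le t$ gives finitely many exceptional $d$.

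The main obstacle is showing $R_s\not\equiv 0$ as a polynomial in $d$. My first attempt would be to specialize $d$ to a convenient value where the roots are known and simple. At $d=2$ the polynomials become Legendre polynomials $P_s(1-2x)$, and consecutive ones are known to have interlacing roots. Non-consecutive Legendre polynomials can, however, share roots; for example $0$ is a root of every odd-degree $P_n$. So I would instead send $d\to\infty$ after rescaling $x = y/d$. Then $f_{s,d}(y/d)$ tends to $\sum_j (-1)^{s-j}\binom{s}{j} y^j/j!$, which is $(-1)^s L_s(y)$ with $L_s$ the Laguerre polynomial. It would then suffice that $L_s$ and $L_{t+1}$ have no common root, which gives a nonzero leading coefficient of $R_s(d)$ in $d$, so $R_s\not\equiv0$. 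Coprimality of distinct Laguerre polynomials is plausibly known but may itself be hard. As a fallback, I would compute a single nonzero evaluation of $R_s$, or argue via irreducibility or Newton-polygon results for Laguerre polynomials (Schur's theorem on $L_n$).

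For part 2, with $t\le3$, there are only the finitely many pairs $(s,t+1)$ with $1\le s\le t\le 3$. For each pair I would compute the resultant explicitly as a polynomial in $d$ and verify that it has no integer roots $d>2$, presumably by factoring it into linear factors of small roots. For instance, $f_{1,d}=(d)x-1$ has root $x=1/d$, and substituting this into $f_{2,d}$ is a direct check. Alternatively, I could verify directly that the real roots in $[0,1]$ of the two polynomials never coincide.
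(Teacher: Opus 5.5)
Your proof is correct, and its overall structure matches the paper's. You reduce to the one-variable polynomials in $x=|\langle e|u|e\rangle|^2$, treat the resultant as a polynomial in $d$, get finiteness from its non-vanishing, and settle $t\le 3$ by explicit resultants. Your check $d^2 f_{2,d}(1/d)=\tfrac12(2-d-d^2)$ reproduces the table entry. You also note that the leading coefficients $\binom{d+2s-2}{s}$ never vanish for $d\ge 1$, so the formal resultant specializes correctly; the paper leaves this implicit.

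The genuine difference is in the one nontrivial step, showing $R_s\not\equiv 0$. The paper specializes to the non-physical value $d=1-s$. There every coefficient $\binom{d+s+j-2}{j}$ with $j\ge 1$ contains the factor $d+s-1=0$, so $P_s$ collapses to the constant $(-1)^s$. The Sylvester matrix becomes lower triangular, giving $R_s(1-s)=(-1)^{s(t+1)}\binom{2t+1-s}{t+1}^s\neq 0$ because $s\le t$. This is the ``single nonzero evaluation'' you had in mind but did not locate.

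Your $d\to\infty$ route is also valid. By homogeneity of the resultant, $R_s(d)=d^{s(t+1)}\,\mathrm{Res}_y\bigl(f_{s,d}(y/d),f_{t+1,d}(y/d)\bigr)$. The second factor is a polynomial in $1/d$ whose constant term is $\pm\mathrm{Res}(L_s,L_{t+1})$. You should commit to the Schur argument rather than leave it as a fallback. Schur's classical theorem that $L_n$ is irreducible over $\mathbb{Q}$ for every $n$ gives coprimality of $L_s$ and $L_{t+1}$ at once, since they are irreducible of different degrees. That completes your part~1.

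Your route pins down the exact degree $s(t+1)$ and the leading coefficient of $R_s$. This is consistent with the table: the degrees are $2,3,6,4,8,12$, and for $s=1$ the leading coefficient is $L_{t+1}(1)$. The cost is reliance on a nontrivial arithmetic theorem, whereas the paper's specialization is a self-contained two-line computation.

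A minor slip that does not affect the argument: $f_{s,d}(x)$ equals $P_s^{(d-2,0)}(2x-1)$, not $P_s^{(d-2,0)}(1-2x)$.
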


    \begin{proof}
        By Lemmas~\ref{lemma:polynomial-zonal-orthogonal-fnc} and~\ref{lemma:zonal-fnc-equivalence}, we may write the zonal spherical function $Z_q$ for integer $q\geq 1$ as
        \begin{equation}\label{eq:m-1-zonal-spher-func}
            Z_q(u) = C_q P_{q}(|\langle e|u|e\rangle|^2),
        \end{equation}
        where $C_q$ is an irrelevant nonzero constant and $P_q$ is a real polynomial given by
        \begin{align}\label{eq:P-t-poly-def}
            P_q &\coloneqq \sum_{j=0}^q a^{(q)}_j x^j \\
            a^{(q)}_j &\coloneqq (-1)^{q-j} { d + q + j - 2 \choose j} {q \choose q - j}.       
        \end{align}
        Note that for integer $q\geq 1$, we may also write
        \begin{equation}
            a^{(q)}_j = \begin{cases}
                (-1)^{q -j}, & j = 0\\
                \frac{(-1)^{q-j}}{j!}{q\choose q-j}(d + q -1)(d + q) \dots (d+q + j - 2), & 1 \leq j \leq s.
            \end{cases}
        \end{equation} 
        Let us denote the resultant of polynomials $P_{p}, P_q$ by $R_{p,q}$ for integers $p, q\geq 1$.

        Now fix $1 \leq s \leq t$. 
        Due to the fact that the polynomials $P_{t+1}$ and $P_s$ have coefficients that have a polynomial dependence on $d$, the resultant $R_{t+1,s}$ also has a polynomial dependence on $d$. Hence, we may view $d$ as an indeterminate and write $a_j^{(s)}, a_j^{(t+1)}, R_{t+1, s} \in \mathbb{R}[d]$. Let us establish that $R_{t+1, s}$ is not the zero polynomial in $\mathbb{R}[d]$. Note that the evaluations of $a_j^{(s)}$ and $a_j^{(t+1)}$ at $1-s$ are given by 
        \begin{align}
            a_j^{(s)}(1-s) = \begin{cases}
                (-1)^{s-j}, &j = 0 \\
                0, &1\leq j \leq s
            \end{cases}
        \end{align}
        and
        \begin{align}
            a_{t+1}^{(t+1)}(1-s) = \frac{(t-s+1)(t-s + 2)\dots (2t -s + 1)}{(t+1)!},
        \end{align}
        so that the evaluation of $R_{t, s}$ at $1-s$ is 
        \begin{align}
            &R_{t+1, s}(1-s) \\
            &= 
            \det
            \begin{pmatrix}
            a_{t+1}^{(t+1)}(1-s) & &  & & &  &  &  \\
            a_{t}^{(t+1)}(1-s) & a_{t+1}^{(t+1)}(1-s) &  &  &  &  &  &  \\
            a_{t-1}^{(t+1)}(1-s) & a_{t}^{(t+1)}(1-s) & \ddots &  & &  &  &   \\
            \vdots & a_{t-1}^{(t+1)}(1-s) & \ddots & a_{t+1}^{(t+1)}(1-s) & & &  & \\
             & \vdots & \ddots & a_{t}^{(t+1)}(1-s) & a_0^{(s)}(1-s) & & & &\\
             a_0^{(t+1)}(1-s) & & & \vdots & & \ddots & & \\
             & a_0^{(t+1)}(1-s) & & & & & & \\
             & & \ddots & & & &  &   \\
             & & & a_0^{(t+1)}(1-s) & & & &  a_0^{(s)}(1-s) 
        \end{pmatrix} \\
        &= (a_{t+1}^{(t+1)}(1-s))^s(a_0^{(s)}(1-s))^{t+1} \\
        &= (-1)^{s(t+1)}\frac{(t-s+1)^s(t-s + 2)^s\dots (2t -s + 1)^s}{[(t+1)!]^s} \\
        &\neq 0.
        \end{align}
        The third line above follows from the fact that the relevant matrix is lower triangular.
        Since the evaluation of $R_{t+1, s}$ at $1-s$ is nonzero, $R_{t+1, s}$ cannot be the zero polynomial. 
        
        Consequently, $R_{t+1, s}$ has at most finitely many roots. It then follows from Lemma~\ref{lemma:common-factor-property} that there at most finitely many values of $d\coloneqq \dim \h$ for which $P_{t+1}$ and $P_s$ may share a common root. By~\eqref{eq:m-1-zonal-spher-func}, it follows there are at most finitely many values of $d\coloneqq \dim \h$ for which there exists a unitary $u\in U(\h)$ such that $Z_{t+1}(u) = 0 = Z_s(u)$. Since $1\leq s \leq t$ was arbitrary, the first claim of the lemma holds.        
        
        Finally, we compute $R_{p, q}$ in terms of $d$ using computer algebra software and summarize the resultants in Table \ref{table:resultants} for $(p,q) \in \mathcal{C} \coloneqq \{(a, b) \in \mathbb{Z}^2: 1\leq a \leq 4, 1\leq b < a\}$. It may also be verified using computer algebra software that for $(p,q) \in \mathcal{C}$, the resultant $R_{p,q}$ has no integer roots in $[3, \infty)$.
        Therefore, $P_p$ and $P_q$ have no common roots for $p,q \in \mathcal{C}$ whenever $d \coloneqq \dim \h > 2$. The second claim of the lemma then follows from~\eqref{eq:m-1-zonal-spher-func}.
        
        \begin{table}[H]
        \renewcommand{\arraystretch}{1.5}
        \centering
        \begin{tabular}{||c | c||}
         \hline
         $(p,q)$ & $R_{p,q}$  \\ [0.5ex] 
         \hline\hline
         (2, 1) & $\frac{1}{2} (2 - d - d^2)$  \\ 
         \hline
         (3, 2) & $\frac{1}{288} (512 d^2 - 352 d^4 - 144 d^5 - 16 d^6)$  \\
         \hline
         (3, 1) & $\frac{1}{6}(-24 + 28 d - 4 d^3)$  \\
         \hline
         (4, 3) & $\frac{1}{17915904}(-90699264 d^2 - 309889152 d^3 - 331304256 d^4 + 
  10287648 d^5 + 299006640 d^6$ \\
  &$+ 269321760 d^7 + 118587888 d^8 + 
  29929824 d^9 + 4397328 d^{10} + 349920 d^{11} + 11664 d^{12})$ \\
         \hline
         (4, 2) & $\frac{1}{9216}(78400 d^2 - 184640 d^3 + 27536 d^4 + 71680 d^5 + 9632 d^6 - 
 2240 d^7 - 368 d^8)$  \\ 
         \hline
         (4, 1) & $\frac{1}{24}(360 - 618 d + 231 d^2 + 42 d^3 - 15 d^4)$  \\ 
         \hline

        \end{tabular}
        \caption{Resultants $R_{p,q}$ between polynomials $P_p$ and $P_q$ given by~\eqref{eq:P-t-poly-def}.}
        \label{table:resultants}
        \end{table}
        
    \end{proof}

\section{Proofs of Theorems~\ref{thm:main-thm-2} and~\ref{thm:main-thm-3}}\label{sec:group-thms}

As before, let $\h$ be a finite-dimensional Hilbert space, let $S\leq U(\h)$ be closed, and assume the notation of Section~\ref{sec:defs-and-prelims}, in particular the decompositions~\eqref{eq:decomposition}-\eqref{eq:u-h-s-action}. We will also use the commutant machinery introduced in~\eqref{eq:comm-def}-\eqref{eq:comm-equality}.

We now state the key observation needed for the proof of Theorems~\ref{thm:main-thm-2} and~\ref{thm:main-thm-3}, which has been noted in a slightly different context in~\cite{liu2024unitarydesignsrandomsymmetric}.
\begin{lemma}\label{thm:ordinary-designs-thm}
    Let $t>0$ and suppose $H$ is a finite subgroup of $U(\h)_S$ such that $(H, \mathrm{unif})$ is an $S$-symmetric $t$-design. Then for all $1\leq \lambda\leq n$, the uniformly weighted $U(\h)$-subsets $( \sigma_\lambda(H), \mathrm{unif})$ are all $t$-designs in $U(W_\lambda)$.
\end{lemma}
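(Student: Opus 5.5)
The plan is to push the moment equality for $H$ down to each block $\lambda$ through the linear map $g\mapsto\sigma_\lambda(g)$. This relies on two facts: the compression $u\mapsto \sigma_\lambda(u)$ from $U(\h)_S$ onto $U(W_\lambda)$ is surjective, and it carries Haar measure to Haar measure.

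First, by \eqref{eq:u-h-s-action} and Schur's Lemma, $U(\h)_S$ is exactly the set of operators $\bigoplus_{\mu} I_{V_\mu}\otimes u_\mu$ with $u_\mu\in U(W_\mu)$ arbitrary. Indeed, every unitary commuting with $S$ has this form, and conversely every such operator commutes with $\xi(s)=\bigoplus_\mu \xi_\mu(s)\otimes I_{W_\mu}$. Hence $U(\h)_S\cong\prod_\mu U(W_\mu)$ as compact groups. The map $\sigma_\lambda\colon U(\h)_S\to U(W_\lambda)$ is the coordinate projection, a surjective continuous homomorphism. The pushforward of Haar measure on $U(\h)_S$ under $\sigma_\lambda$ is invariant under $U(W_\lambda)$ because of surjectivity. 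By uniqueness, it is therefore the Haar measure on $U(W_\lambda)$. In particular,
\[
\int_{U(\h)_S} du\, \sigma_\lambda(u)^{\otimes t}\otimes\overline{\sigma_\lambda(u)}^{\otimes t}=\int_{U(W_\lambda)} dv\, v^{\otimes t}\otimes \overline{v}^{\otimes t}.
\]

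Second, we extract block $\lambda$ from the $t$-th moment on $\h$. Fix a copy $W_{\lambda,1}\subset\h$ as in \eqref{eq:multiplicity-copies}, and let $P$ be the orthogonal projector onto it. For $u\in U(\h)_S$, the subspace $W_{\lambda,1}$ is invariant under $u$, and in the basis $\beta$ the restriction $u|_{W_{\lambda,1}}$ equals $\sigma_\lambda(u)$. Since $\beta$ is orthonormal and the restriction is a compression, we have $PuP=\iota\,\sigma_\lambda(u)\,\iota^\dagger$, where $\iota$ is the isometric inclusion $W_\lambda\cong W_{\lambda,1}\hookrightarrow\h$.

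The one subtlety is complex conjugation. The paper's $\overline{g}$ is conjugation in the computational basis, which need not coincide with conjugation in $\beta$. To avoid this issue, I would use the equivalent form \eqref{eq:comm-equality}, or equivalently compare $u^{\otimes t}\otimes \overline{u}^{\otimes t}$ viewed as the superoperator $X\mapsto u^{\otimes t}X u^{\dagger\otimes t}$, which is basis independent. The design property of $H$ then reads
\[
\frac{1}{|H|}\sum_{h\in H} h^{\otimes t} X h^{\dagger\otimes t}=\int_{U(\h)_S}du\, u^{\otimes t}Xu^{\dagger\otimes t}\quad\text{for all }X.
\]
Apply this to $X=\iota^{\otimes t}Y\iota^{\dagger\otimes t}$ with $Y\in\End(W_\lambda^{\otimes t})$, and compress both sides by $\iota^{\dagger\otimes t}(\cdot)\iota^{\otimes t}$. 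Because $u$ preserves $W_{\lambda,1}$, we get $\iota^\dagger u\iota=\sigma_\lambda(u)$, and the identity becomes the same statement with $h,u$ replaced by $\sigma_\lambda(h),\sigma_\lambda(u)$ and $Y$ arbitrary. The Haar pushforward fact above then identifies the right-hand side with the $U(W_\lambda)$ Haar average.

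Third, the left-hand side is an average over $H$ rather than over the set $\sigma_\lambda(H)$ with uniform weights. But $\sigma_\lambda|_H$ is a group homomorphism, so every fibre is a coset of its kernel and all fibres have equal size $|\ker|$. The uniform distribution on $H$ therefore pushes forward to the uniform distribution on $\sigma_\lambda(H)$, and $(\sigma_\lambda(H),\unif)$ is a $t$-design in $U(W_\lambda)$.

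The main obstacle is bookkeeping rather than substance: verifying the direct-product description of $U(\h)_S$ and keeping track of basis and conjugation conventions. Both are handled by working with the basis-free adjoint-action form of the moment operator.
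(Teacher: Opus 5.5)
Your proof is correct, but it takes a different route from the paper's.

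The paper works only with the commutant criterion \eqref{eq:comm-equality}. It takes $X\in\comm(\{\sigma_\lambda(h)^{\otimes t}: h\in H\})$ and lifts it to $\tilde X = I_{V_{(\lambda,\dots,\lambda)}}\otimes X$, supported on the $(\lambda,\dots,\lambda)$ block of $\h^{\otimes t}$. It then uses $\comm(\{h^{\otimes t}: h\in H\})=\comm(\{u^{\otimes t}: u\in U(\h)_S\})$ and tests $\tilde X$ against $t$-th tensor powers of the block-embedded unitaries $[I_{V_\lambda}\otimes u]\oplus I$. This gives $X\in\comm(\{u^{\otimes t}: u\in U(W_\lambda)\})$, and the reverse inclusion is automatic.

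You instead work with the moment operator itself, in its basis-free twirl form. You compress onto a single copy $W_{\lambda,1}$, push the Haar measure of $U(\h)_S$ forward along the surjective projection $\sigma_\lambda$, and push the uniform measure on $H$ forward along the homomorphism $\sigma_\lambda|_H$. Both arguments rest on the same structural input: Schur's lemma gives $U(\h)_S\cong\prod_\mu U(W_\mu)$, so every $u\in U(W_\lambda)$ is realized inside $U(\h)_S$.

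The paper's version is shorter and has no measure-theoretic step. It also never meets the conjugation issue, because commutants of $h^{\otimes t}$ do not involve $\overline{h}$. However, it uses the group structure essentially: \eqref{eq:comm-equality} compares compact groups, and the uniform weight on the finite group $\sigma_\lambda(H)$ is playing the role of its Haar measure.

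Your version proves slightly more. For any weighted $U(\h)_S$-subset that is a $t$-design in $U(\h)_S$, its image under $\sigma_\lambda$ with the pushforward weights is a $t$-design in $U(W_\lambda)$. The group structure of $H$ enters only at the end, to see that the pushforward of the uniform distribution is uniform.
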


\begin{proof}
    Suppose $(H, \mathrm{unif})$ is an $S$-symmetric $t$-design. Then we have $\comm(\{h^{\otimes t}: h\in H\}) = \comm(\{u^{\otimes t}: u\in U(\mathcal{H})_S\})$. For all $h\in H$, we have 
    \begin{equation}\label{eq:g-t-decomp-form}
        h^{\otimes t} \cong \left(\bigoplus_{\lambda = 1}^n I_{V_\lambda} \otimes \sigma_\lambda(h) \right)^{\otimes t} \cong \bigoplus_{\mathbf{p} \in [n]^t} I_{V_{\mathbf{p}}} \otimes \sigma_{\mathbf{p}}(h),
    \end{equation}
    where $V_{\mathbf{p}} \cong V_{p_1} \otimes \dots \otimes V_{p_t}$ and $\sigma_{\mathbf{p}}(h) \cong \sigma_{p_1}(h) \otimes \dots \otimes \sigma_{p_t}(h)$. 
    Suppose $X$ lies in $\comm(\{\sigma_\lambda(h)^{\otimes t}: h\in H\}).$
    It follows from~\eqref{eq:g-t-decomp-form} that
    \begin{equation}
        \tilde{X} \coloneqq I_{V_{(\lambda, \dots, \lambda})}\otimes X \in \comm(\{h^{\otimes t}: h\in H\}) = \comm(\{u^{\otimes t}: u\in U(\mathcal{H})_S\}).
    \end{equation}
    Let $u \in U(W_\lambda)$ be arbitrary. We have that
    \begin{equation}
         \tilde{u} \coloneqq \left( [I_{V_{\lambda}} \otimes u] \oplus \bigoplus_{\stackrel{\lambda' = 1}{\lambda' \neq \lambda}}^n I_{V_{\lambda'}} \otimes I_{W_{\lambda'}} \right)^{\otimes t} \cong  (I_{V_{(\lambda, \dots, \lambda)}} \otimes u^{\otimes t}) \oplus  \bigoplus_{\stackrel{\mathbf{p}\in [n]^t}{\mathbf{p} \neq (\lambda, \dots, \lambda)}} I_{V_{\mathbf{p}}} \otimes O_{\mathbf{p}},
    \end{equation}

    where $O_{\mathbf{p}} \in U(W_{p_1} \otimes \dots \otimes W_{p_n})$ is an operator whose exact form is irrelevant for us. Since $\tilde{u}$ is the $t$-th tensor power of an element of $U(\mathcal{H})_S$, it commutes with $\tilde{X}$. Since $\tilde{X}$ and $\tilde{u}$ commute, their blocks corresponding to the index $(\lambda, \dots, \lambda)$ must commute. Hence, $X \in \comm(\{u^{\otimes t}: u \in U(W_\lambda)\})$. Therefore, 
    \begin{equation}
        \comm(\{\sigma_\lambda(h)^{\otimes t}: h\in H\}) \subseteq \comm(\{u^{\otimes t}: u\in U(W_\lambda)\}).
    \end{equation}
    Since $\sigma_\lambda(H) \leq U(W_\lambda)$, the set inclusion is in fact an equality. 
\end{proof}

A technical fact needed for the proofs of Theorems~\ref{thm:main-thm-2} and~\ref{thm:main-thm-3} is provided by the following lemma, which is essentially a restatement of Theorem 11 from Appendix F in~\cite{Mitsuhashi_2023} adapted to the language of this paper. 

\begin{lemma}\label{lemma:weighted-t-des}
    Let $K\leq U(\h)$ be a compact group and let $G\leq K$ be a finite group. If $p\colon  G \to [0,1]$ is a pmf such that $(G, p)$ is a $t$-design in $K$, then $(G, \mathrm{unif})$ is a $t$-design in $K$ as well. 
\end{lemma}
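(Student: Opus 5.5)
The plan is to turn the $t$-design condition into a statement about the commutant of $\{g^{\otimes t}\}$, and to compare the weighted and uniform averages through their images. Write
\[
M_p \coloneqq \sum_{g\in G} p(g)\, g^{\otimes t}\otimes \overline{g}^{\otimes t},\qquad M_{\unif}\coloneqq \frac{1}{|G|}\sum_{g\in G} g^{\otimes t}\otimes\overline{g}^{\otimes t},\qquad M_K\coloneqq \int_K dk\, k^{\otimes t}\otimes\overline{k}^{\otimes t}.
\]
Set $\pi_t(u)=u^{\otimes t}\otimes\overline{u}^{\otimes t}$. Since $G$ is a finite group, $M_{\unif}$ is the orthogonal projector onto the $\pi_t(G)$-invariant vectors. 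Likewise $M_K$ is the orthogonal projector onto the $\pi_t(K)$-invariant vectors.

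Because $G\leq K$, every $\pi_t(K)$-invariant vector is $\pi_t(G)$-invariant, so $\mathrm{ran}\, M_K\subseteq \mathrm{ran}\, M_{\unif}$. It therefore suffices to prove the reverse inclusion: two orthogonal projectors with equal ranges are equal, which gives $M_{\unif}=M_K$.

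For the reverse inclusion, take $v$ that is $\pi_t(G)$-invariant. Then
\[
M_p v=\sum_g p(g)\pi_t(g)v=v.
\]
By hypothesis $M_p=M_K$, so $M_K v=v$ and hence $v\in\mathrm{ran}\, M_K$. Thus $\mathrm{ran}\, M_{\unif}\subseteq\mathrm{ran}\, M_K$, the ranges coincide, and $(G,\unif)$ is a $t$-design in $K$.

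The only points needing care are standard facts. First, $M_{\unif}$ is an orthogonal projector: $\pi_t$ is a unitary representation, $M_{\unif}$ is idempotent by the group structure, it is self-adjoint because $g\mapsto g^{-1}$ is a bijection of $G$, and its range is exactly the invariant vectors. Second, $M_K$ is the corresponding projector for $K$, by invariance of the Haar measure. I do not expect a real obstacle. The key observation is that invariant vectors are fixed by any convex combination of $\pi_t(g)$, so no positivity assumption on $p$ is needed beyond $p$ being a pmf.
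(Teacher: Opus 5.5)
Your proof is correct and runs on the same mechanism as the paper's, which writes it as the chain $M_{\unif}=M_pM_{\unif}=M_KM_{\unif}=M_K$: the first step is your observation that $M_p$ fixes $\pi_t(G)$-invariant vectors, and the last step absorbs $\pi_t(g)$, $g\in G\subseteq K$, into the Haar average, which is the adjoint form of your inclusion $\mathrm{ran}\,M_K\subseteq\mathrm{ran}\,M_{\unif}$. The difference is only packaging: the paper's chain never needs the averages to be self-adjoint projectors, while your range comparison makes the role of invariant vectors more explicit.
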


\begin{proof}
    Observe that 
    \begin{align}
        \sum_{g\in G} \frac{1}{|G|} g^{\otimes t} \otimes \overline{g}^{\otimes t} &=
        \sum_{g'\in G} p(g') \sum_{g\in G} \frac{1}{|G|} (g'g)^{\otimes t} \otimes (\overline{g'g})^{\otimes t} \\
        &=
        \sum_{g\in G} \frac{1}{|G|}\left(\sum_{g'\in G} p(g') (g')^{\otimes t} \otimes (\overline{g'})^{\otimes t}\right)(g^{\otimes t} \otimes \overline{g}^{\otimes t}) \\
        &= 
        \sum_{g\in G} \frac{1}{|G|} \left( \int_{K} dk \ k^{\otimes t} \otimes \overline{k}^{\otimes t}\right) (g^{\otimes t} \otimes \overline{g}^{\otimes t}) \\
        &=  \sum_{g\in G} \frac{1}{|G|} \left( \int_K dk \ (kg)^{\otimes t} \otimes (\overline{kg})^{\otimes t} \right) \\
        &= \int_K dk \ k^{\otimes t} \otimes \overline{k}^{\otimes t},
    \end{align}
    where the integration is with respect to the normalized Haar measure over $G$. The first and fifth equalities follow from the invariance of the Haar measure. 
\end{proof}

Let us prove the following main result.
\begin{theorem}[Theorem~\ref{thm:main-thm-2} restated]\label{thm:main-thm-2-restated}
    The following hold:
    \begin{enumerate}
        \item If the natural representation of $S$ on $\h$ is multiplicity free, then for any $t>0$, there exists a finite group $H \leq U(\h)_S$ such that $t_{\max}^{(S)}(H, \mathrm{unif}) \geq t$. Consequently, for any non-trivial finite-dimensional Hilbert space $\h'$ and any $t>0$, there exists a symmetry group $S'\leq U(\h')$ and finite group $H\leq U(\h')_{S'}$ for which $t_{\max}^{(S')}(H, \mathrm{unif}) \geq t$ and $U(\h')_{S'} \neq \left\{e^{i\theta} I: \theta \in \mathbb{R} \right\}$.

        \item Let $G\leq U(\h)$ be finite. If $t_{\max}^{(S)}(G, p) > 5$ for some pmf $p$ over $G$, then every element of $U(\h)_{S}$ is diagonal in the irrep basis of $S$ and, hence, $U(\h)_{S}$ is abelian.

        \item For any finite-dimensional Hilbert space $\h'$, there exists a symmetry group $S'\leq U(\h')$ and finite group $H\leq U(\h')_{S'}$ for which $t_{\max}^{(S')}(H, \mathrm{unif}) = 5$ and $U(\h')_{S'}$ is non-abelian.
    \end{enumerate}
\end{theorem}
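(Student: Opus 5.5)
We prove the three parts separately. Each relies on Lemma~\ref{thm:ordinary-designs-thm}, Lemma~\ref{lemma:weighted-t-des}, Proposition~\ref{prop:symmetric-design-construction}, and the classification of finite group designs~\cite{Bannai,guralnick2005,bannai2020}.

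\textbf{Part 1.} If the natural representation is multiplicity-free, then every $W_\lambda$ is one-dimensional. Hence $U(\h)_S=\{\bigoplus_\lambda e^{i\phi_\lambda}I_{V_\lambda}\}$ is a torus. In Proposition~\ref{prop:symmetric-design-construction}, take each $H_\lambda=\{1\}$, which is trivially a $t$-design in $U(1)$ since all one-dimensional unitary $t$-th moments of the form $g^{\otimes t}\otimes\overline{g}^{\otimes t}$ equal $1$. The resulting set is
\[
H=\Big\{\bigoplus_\lambda I_{V_\lambda}\otimes c_\lambda : c_\lambda\in\langle e^{2\pi i/(t+1)}\rangle\Big\}\cong (\mathbb{Z}_{t+1})^n .
\]
This $H$ is a finite group, and the distribution $p$ of the proposition is uniform on it. Therefore $(H,\unif)$ is a $t$-design in $U(\h)_S$, and since $H\subseteq U(\h)_S$ we have $H_S=H$. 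One can also check this directly: the moment operator involves $e^{i\sum_j(\phi_{a_j}-\phi_{b_j})}$, and its average over the torus and over $H$ agree because every nonzero exponent vector $\mathbf{k}$ with $\sum|k_\lambda|\le 2t$ has some $|k_\lambda|\le t<t+1$.

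For the consequence, let $\h'$ have dimension $d\ge 2$ and take $S'$ to be the diagonal matrices in some basis. Then $S'$ is multiplicity-free with $d$ inequivalent one-dimensional irreps, and $U(\h')_{S'}$ is the diagonal torus, which is not just the scalars. If $d=1$, the nontriviality claim must be read appropriately.

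\textbf{Part 2.} Suppose $(G_S,p_S)$ is a $t$-design in $U(\h)_S$ with $t\ge 6$. By Lemma~\ref{lemma:weighted-t-des}, $(G_S,\unif)$ is also a $t$-design in $U(\h)_S$; note that $G_S$ is a finite group. By Lemma~\ref{thm:ordinary-designs-thm}, each $\sigma_\lambda(G_S)$ is then a finite group $6$-design in $U(W_\lambda)$. The classification says that finite group $t$-designs with $t\ge 4$ exist only in dimension $2$, where the maximum strength is $5$ (attained by the binary icosahedral group $\SL(2,5)$), or in dimension $1$. So every $\dim W_\lambda=1$. In that case $U(\h)_S\cong\bigoplus_\lambda I_{V_\lambda}\otimes U(1)$ is diagonal in the irrep basis and abelian. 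The main care needed here is citing the classification correctly, including its statement for dimension $2$.

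\textbf{Part 3.} For $\dim\h'=d\ge 2$, take $S'$ to be the group of unitaries acting as arbitrary phases on $\mathbb{C}^2$ and arbitrary diagonal phases on the complement, with the phase on the $\mathbb{C}^2$ block kept distinct from the others. Then $\h'\cong(\mathbb{C}\otimes\mathbb{C}^2)\oplus\bigoplus\mathbb{C}$, and $U(\h')_{S'}\cong U(2)\times U(1)^{d-2}$, which is non-abelian.

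Apply Proposition~\ref{prop:symmetric-design-construction} with $t=5$, choosing $H_\lambda$ to be the binary icosahedral group (a uniform unitary $5$-design in $U(2)$) for the two-dimensional block and $\{1\}$ for the others. Include the phase groups $C_\lambda=\langle e^{2\pi i/6}\rangle$. The resulting $H$ is the direct product $\prod_\lambda C_\lambda H_\lambda$, a finite group with uniform weight, so $t_{\max}^{(S')}(H,\unif)\ge 5$.

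Suppose $H$ were a symmetric $6$-design. Then Lemma~\ref{thm:ordinary-designs-thm} would make its projection onto $U(2)$, namely $C\cdot\SL(2,5)$, a finite group $6$-design in $U(2)$, which the classification forbids. Hence $t_{\max}^{(S')}(H,\unif)=5$.

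If $d=1$, non-abelian symmetric unitaries are impossible; I expect the statement implicitly assumes $d\ge 2$ and would note this.
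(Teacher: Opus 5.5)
Your proposal is correct and follows the paper's proof essentially step for step: Part~1 applies Proposition~\ref{prop:symmetric-design-construction} with trivial $H_\lambda$ and $(t+1)$-th root-of-unity phases, and Part~2 chains Lemma~\ref{lemma:weighted-t-des}, Lemma~\ref{thm:ordinary-designs-thm} and the classification of group designs to force $\dim W_\lambda=1$. Part~3 uses a phase symmetry with a two-dimensional multiplicity block carrying $C\cdot\SL(2,5)$; your diagonal-torus $S'$ and single-block $S'$ are special cases of the paper's constructions, and your direct non-$6$-design argument is exactly the Part~2 argument. Your caveat that $\dim\h'\ge 2$ is needed is warranted, since the paper's Part~3 construction also implicitly requires it. In your optional direct check for Part~1, you should invoke $\sum_\lambda k_\lambda=0$, which is what forces every $|k_\lambda|\le t$.
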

\begin{proof}
    To prove the first part of the theorem, note that Proposition~\ref{prop:symmetric-design-construction} implies that whenever the natural representation of $S$ on $\h$ is multiplicity-free, so that $\dim W_\lambda = 1$ for $1\leq \lambda\leq n$, we have that
    \begin{equation}
        H \coloneqq \left\{ \bigoplus_{\lambda = 1}^n \exp\left(\frac{2\pi i j_\lambda}{t+1}\right) I_{V_\lambda} \otimes I_{W_\lambda}: 0\leq j_\lambda \leq t\right\}
    \end{equation}
    is a finite subgroup of $U(\h)_S$ for which $(H, \mathrm{unif})$ is an $S$-symmetric $t$-design. To see that we can indeed find a closed symmetry group in $U(\h')$ whose natural representation on $\h'$ is multiplicity-free for a Hilbert space $\h'$ of arbitary dimension, let $V_1, \dots, V_n$, $n>1$ be a sequence of orthogonal subspaces of $\h'$ for which $\h' = \bigoplus_{\lambda = 1}^n V_\lambda$, let $S_\lambda \leq U(V_\lambda)$ be a non-trivial closed group that acts irreducibly on $V_\lambda$, and define 
    \begin{equation}
        S' \coloneqq \left\{ \bigoplus_{\lambda = 1}^n s_\lambda: s_\lambda \in S_\lambda \right\};
    \end{equation}
    in this case, the subspaces $V_1, \dots, V_n$ are all inequivalent irreps of the group $S'$, so that Schur's lemma implies
    \begin{equation}
        U(\h')_{S'} = \left\{\bigoplus_{\lambda = 1}^n e^{i\theta_\lambda} I_{V_\lambda}: \theta_\lambda \in \mathbb{R}\right\} \neq \{e^{i\theta} I: \theta \in \mathbb{R}\}.
    \end{equation}

    Let us now prove the second part of the theorem. Suppose that $G\leq U(\h)$ is finite and that $t\coloneqq t_{\max}^{(S)}(G, p) > 5$ for some pmf $p$ over $G$. Since $(G_S, p_S)$ is a $6$-design in $U(\h)_S$, so is $(G_S, \mathrm{unif})$ by Lemma~\ref{lemma:weighted-t-des}. Consequently, $(\sigma_\lambda(G_S), \mathrm{unif})$ is a $6$-design in $U(W_\lambda)$ for all $\lambda \in [n]$ by Lemma~\ref{thm:ordinary-designs-thm}. By the classification of finite group designs in~\cite{Bannai}, it follows that $\dim W_\lambda = 1$ for all $\lambda \in [n]$. Therefore, we have 
    \begin{equation}
        U(\h)_S = \left\{ \bigoplus_{\lambda = 1}^n I_{V_\lambda} \otimes u: u\in U(W_\lambda)\right\} = \left\{ \bigoplus_{\lambda = 1}^n e^{i\theta_\lambda} I_{V_\lambda} \otimes I_{W_\lambda}: \theta_\lambda \in \mathbb{R}\right\},
    \end{equation}
    where the first equality follows from Schur's Lemma and the second equality follows directly from the fact that $\dim W_\lambda = 1, \lambda \in [n]$.
    
    Finally, we prove the third part of the theorem. Let $d>0$ be arbitrary and let $\h'$ be a Hilbert space of dimension $d$. Let us assume that $d$ is odd, so that $d = 2k + 1$ for some $k > 0$; our argument requires minimal modifications when $d$ is even. Let $W_1, \dots, W_k$ be a sequence of orthogonal 2-dimensional subspaces of $\h'$ and let $W_{k+1}$ be a 1-dimensional subspace of $\h'$ for which $\h' = \bigoplus_{\lambda =1}^{k+1} W_\lambda$. By~\cite{Bannai, Gross2007}, for each $\lambda \in [n]$, there exists a finite subgroup $H_\lambda \leq U(W_\lambda)$ for which $(H_\lambda, \mathrm{unif})$ is a $5$-design in $U(W_\lambda)$. (For $1\leq \lambda \leq k$, this group $H_\lambda$ is isomorphic to $\SL(2, 5)$, which is defined as the multiplicative group of $2\times 2$ determinant-one matrices with elements from the finite field of order 5.) Then let 
    \begin{equation}
        S' \coloneqq \left\{\bigoplus_{\lambda=1}^{k+1} e^{i\theta_\lambda} I_{W_\lambda}: \theta_\lambda \in \mathbb{R} \right\}.
    \end{equation}
    Let $C_\lambda \coloneqq \langle \exp(2\pi i/6) I_{W_\lambda}\rangle$ for $\lambda \in [n]$.
    Naturally, the set
    \begin{equation}
        H = \left\{ \bigoplus_{\lambda = 1}^n g_\lambda: g_\lambda \in C_\lambda H_\lambda\right\}
    \end{equation}
    is a subgroup of $U(\h')_{S'}$. Define also the pmf $p$ over $H$ by
    \begin{equation}
        p\left(\bigoplus_{\lambda = 1}^n g_\lambda\right) = \prod_{\lambda = 1}^n (\mu_\lambda \star \nu_\lambda)(g_\lambda), 
    \end{equation}
    where $\mu_\lambda$ and $\nu_\lambda$ are the uniform pmfs over $C_\lambda$ and $H_\lambda$. By Proposition~\ref{prop:symmetric-design-construction}, the weighted $U(\h')_{S'}$-subset $(H, p)$ forms a $5$-design in $U(\h')_{S'}$. By Lemma~\ref{lemma:weighted-t-des}, it follows that $(H, \mathrm{unif})$ is a 5-design in $U(\h')_{S'}$. Note that the subspaces $W_1, \dots, W_{k+1}$ are all inequivalent representations of $S'$ (though not necessarily irreducible), so that Schur's Lemma implies
    \begin{equation}
        U(\h')_{S'} = \left\{ \bigoplus_{\lambda = 1}^{k+1} u_\lambda: u_\lambda \in U(W_\lambda) \right\}.
    \end{equation}
    But since $\dim W_\lambda > 1$ for $1\leq \lambda \leq k$, the group $U(\h')_{S'}$ cannot be abelian. By the second part of the statement of the theorem, it follows that $(H, \mathrm{unif})$ cannot be a 6-design in $U(\h')_{S'}$, so that $t_{\max}^{(S')}(H, \mathrm{unif}) = 5$. 

    \end{proof}

    We now prove our next main result. Recall that $\SL(2, 5)$ denotes the multiplicative group of $2\times 2$ determinant-one matrices with elements from the finite field of order 5.

    \begin{theorem}[Theorem~\ref{thm:main-thm-3} restated]\label{thm:main-thm-3-restated}
        The following hold:
        \begin{enumerate}
            \item Let $G\leq U(\h)$ be finite. Let $q\geq 2, N\geq 4$. If $\h = (\mathbb{C}^2)^{\otimes N}$ or $\h = (\mathbb{C}^q)^{\otimes N}$ and 
        \begin{align}
            S = \left\{(e^{i\theta Z})^{\otimes N}: \theta \in \mathbb{R}\right\} \cong \mathrm{U}(1)
        \end{align}
        or 
        \begin{align}
            S = \left\{u^{\otimes N}: u \in \mathrm{SU}(q)\right\} \cong \mathrm{SU}(q),
        \end{align}
        then $t_{\max}^{(S)}(G, p) \leq 2$ for any pmf $p\colon  G\to [0,1]$.
    
        \item  Let $G\leq U(\h)$ be finite and assume that $S\leq U(\h)$ is such that $|S| < \dim \h/2$. Then $t_{\max}^{(S)}(G, p) \leq 3$ for any pmf $p\colon  G\to [0,1]$.
    
        \item Let $G\leq U(\h)_S$ be finite. If $4\leq t_{\max}^{(S)}(G, p) \leq 5$ for some pmf $p$ over $G$, then 
        \begin{equation}
            G \leq \left\{ \bigoplus_{\lambda = 1}^n I_{V_\lambda} \otimes s_\lambda: s_\lambda \in \left\langle \exp\left(\frac{2\pi i}{k_\lambda} \right) I_{W_\lambda} \right\rangle \rho_\lambda(\SL(2, 5)) \right\}
        \end{equation}
        for some integers $k_1,\dots, k_n$ and irreducible representations $\rho_\lambda: \SL(2, 5) \to U(W_\lambda), \lambda \in [n]$ of either degree one or degree two; furthermore, for every $s_\lambda \in \langle \exp(2\pi i/k_\lambda) I_{W_\lambda}\rangle \rho_\lambda(\SL(2, 5))$, there exists an element $g\in G$ such that $\sigma_\lambda(g) = s_\lambda$.
        \end{enumerate}
    \end{theorem}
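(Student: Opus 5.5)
All three parts follow the same reduction. Given $G$ and $p$, set $H \coloneqq G_S$, a finite subgroup of $U(\h)_S$. If $(G_S,p_S)$ is a $t$-design in $U(\h)_S$, then $(H,\mathrm{unif})$ is one too by Lemma~\ref{lemma:weighted-t-des}. Lemma~\ref{thm:ordinary-designs-thm} then makes each $(\sigma_\lambda(H),\mathrm{unif})$ a group $t$-design in $U(W_\lambda)$. The classification of finite group designs~\cite{Bannai,guralnick2005,bannai2020} constrains these component groups:
\begin{itemize}
\item for $\dim W_\lambda\geq 3$, no finite group is a $4$-design;
\item for $\dim W_\lambda = 2$, a finite group is a $4$-design only if, modulo scalars, it is the binary icosahedral group $\SL(2,5)$;
\item for $\dim W_\lambda\geq 3$, restrictions already appear at $t=3$ in certain dimensions (e.g.\ Clifford-type groups exist only in prime-power dimensions, plus a few sporadic cases).
\end{itemize}
Each part then amounts to showing that some multiplicity space is large enough, or small enough, for these restrictions to bite.

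\textbf{Part 1.} Exhibit a multiplicity space $W_\lambda$ where finite groups fail to be $3$-designs. For $\mathrm{U}(1)$ on $(\mathbb{C}^2)^{\otimes N}$, the multiplicity spaces have dimensions $\binom{N}{w}$. For $\mathrm{SU}(q)$, the multiplicity spaces are the $S_N$-irreps, e.g.\ the standard irrep of dimension $N-1$ for $\lambda=(N-1,1)$. Invoke the result of~\cite{bannai2020,guralnick2005} that finite unitary $3$-designs in dimension $D\geq 3$ exist only for $D$ a prime power (Clifford-type) or a few sporadic exceptions in small dimensions. The plan is then to pick, for each $N\geq 4$, a multiplicity dimension that is neither a prime power nor sporadic. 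For $\mathrm{U}(1)$, $\binom{N}{2}=N(N-1)/2$ is never a prime power once $N\geq 4$, apart from finitely many small checks. For $\mathrm{SU}(q)$, choose among the several available Specht-module dimensions, such as $N-1$, $N(N-3)/2$, and $\binom{N-1}{2}$. \textbf{I expect this number-theoretic dimension check to be the main obstacle.} If it fails, the fallback is to use the stronger fact that a finite group $3$-design acting on $W_\lambda$ must be, projectively, a Clifford group or one of the few exceptions, and rule those out case by case. One also needs the chosen $\lambda$ to occur when $q=2$ with at most $q$ rows; the partition $(N-1,1)$ always does.

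\textbf{Part 2.} Since $|S|<\dim\h/2$, we have $\sum_\lambda \dim V_\lambda \dim W_\lambda = \dim\h$ and $\sum_\lambda (\dim V_\lambda)^2 \leq |S|$. These give $\dim\h \leq \max_\lambda \dim W_\lambda \cdot \sum_\lambda \dim V_\lambda \leq \max_\lambda \dim W_\lambda\cdot |S|$, so some $\dim W_\lambda > 2$. Hence $\dim W_\lambda\geq 3$, and a $4$-design there is impossible by~\cite{Bannai}. Thus $t_{\max}^{(S)}\leq 3$.

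\textbf{Part 3.} Take $G\leq U(\h)_S$, so $G_S=G$, and let $G$ be a $4$-design in $U(\h)_S$. Each $\sigma_\lambda(G)$ is then a $4$-design in $U(W_\lambda)$. By~\cite{Bannai,guralnick2005}, $\dim W_\lambda\leq 2$. If $\dim W_\lambda=1$, then $\sigma_\lambda(G)$ is a finite cyclic group of phases $\langle e^{2\pi i/k_\lambda}\rangle$, taking $\rho_\lambda$ trivial. If $\dim W_\lambda = 2$, then $\sigma_\lambda(G)$ is a central (scalar) extension of the binary icosahedral group, i.e.\ $\langle e^{2\pi i/k_\lambda}I\rangle\rho_\lambda(\SL(2,5))$ for a faithful degree-two irrep $\rho_\lambda$. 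The inclusion of $G$ into the displayed product follows componentwise via~\eqref{eq:u-h-s-action}. Surjectivity of $\sigma_\lambda$ onto each factor holds by the definition $\sigma_\lambda(G)=$ that group.
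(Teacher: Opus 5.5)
Your Parts 2 and 3 follow the paper's arguments: the count $\dim\h\le 2|S|$ when every $\dim W_\lambda\le 2$, and Corollary~2 of \cite{Bannai} plus irreducibility to identify each $\sigma_\lambda(G)$. The problem is Part 1. There the decisive step is left open, and it cannot be completed with the classification fact you state.

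Knowing only that group $3$-designs in dimension $D\ge 3$ require $D$ to be a prime power or sporadic is too weak. For $q=2$ and $N=4,5,6$ the multiplicity dimensions are $\{1,3,2\}$, $\{1,4,5\}$ and $\{1,5,9,5\}$, so no choice of $\lambda$ gives a contradiction. Also, $(N-2,1,1)$ is unavailable when $q=2$. What is missing is the sharper list the paper uses. For $d\ge 3$, a finite group $3$-design exists only if $d$ is a power of $2$ or $d\in\{6,12,18\}$, because Clifford groups in odd prime-power dimension are only $2$-designs.

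With that list, the paper pairs the two-row partitions $(N-1,1)$ and $(N-2,2)$, both present for every $q\ge 2$:
\begin{itemize}
\item If $N-1\in\{6,12,18\}$, then $N(N-3)/2\in\{14,65,152\}$.
\item Otherwise $N-1=2^a$ with $a\ge 2$, and $N(N-3)/2=(2^a+1)(2^{a-1}-1)$ is odd and at least $5$.
\end{itemize}
Either way, some $\sigma_\lambda(G_S)$ would be a group $3$-design in a forbidden dimension. Note that $(N-1,1)$ alone can never suffice, since $N-1$ may be a power of two.

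For $\mathrm{U}(1)$ you are more accurate than the paper, whose proof applies the Schur--Weyl decomposition to both symmetries. The multiplicity dimensions really are $\binom{N}{w}$. Your choice $\binom{N}{2}$ works for $N\ge 5$: $N(N-1)/2$ is never a power of two for $N\ge 3$, and it lies in $\{6,12,18\}$ only at $N=4$.

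At $N=4$, however, the dimensions are $1,4,6,4,1$. Dimension $4$ carries the two-qubit Clifford group, which is a $3$-design, and dimension $6$ is on the sporadic list. Feeding these groups into Proposition~\ref{prop:symmetric-design-construction} with $t=3$ therefore yields a finite group $G\le U(\h)_S$ (the product weighting there is uniform) with $t_{\max}^{(S)}(G,\mathrm{unif})\ge 3$. So your case-by-case fallback cannot close $N=4$ for $\mathrm{U}(1)$; that case of the statement needs $N\ge 5$. The paper's proof misses this, partly because its list writes $2^a$ with $a>2$ and so omits $4$.
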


    \begin{proof}
        Let us begin by proving the first statement of the theorem. For the sake of contradiction, suppose that there exists a pmf $p\colon  G \to [0,1]$ for which $G_S$ is an $S$-symmetric 3-design under $p_S$. By Lemma~\ref{lemma:weighted-t-des}, it follows that $(G_S, \mathrm{unif})$ is a 3-design in $U(\h)_S$. By the Schur-Weyl duality, the Hilbert space $\h$ decomposes into $S$-irreps as
        \begin{equation}
            \h \cong \bigoplus_{\lambda \vdash_d n} V_\lambda \otimes W_\lambda, 
        \end{equation}
        where the direct sum runs over all Young diagrams consisting of $N$ boxes and at most $d$ rows, the $V_\lambda$ denote inequivalent $S$-irreps, and their corresponding multiplicity spaces $W_\lambda$ are $\mathfrak{S}_N$-irreps. 
    
        \textbf{Case 1: $N \notin \{7, 13, 19\}.$}
         By the hook length formula, we have
        \begin{align}
            \dim W_{(N-1, 1)} &= N-1 \\
            \dim W_{(N-2, 2)} &= \frac{1}{2}N(N-3). 
        \end{align}
        The weighted subsets $(\sigma_\lambda(G_S), \mathrm{unif})$ must be 3-designs in $U(W_\lambda)$ by Theorem~\ref{thm:ordinary-designs-thm}. Since the sets $\sigma_\lambda(G_S)$ are all finite subgroups of $U(W_\lambda)$, it follows from Corollary 2 and Theorem 4 of~\cite{Bannai} that each $W_\lambda$ must have dimension equal to either 1, 2, 6, 12, 18 or $2^a$ for some $a> 2$. We cannot have $\dim W_{(N-1, 1)} = N-1$ equal to 1, 2, 6, 12, or 18 by hypothesis, so $\dim W_{(N-1, 1)} = 2^a$ for $a>2$. Hence, $\dim W_{(N-2, 2)} = \frac{1}{2}(2^a+1)(2^a - 2)$, which is neither a power of two nor equal to 1, 2, 6, 12, or 18, yielding a contradiction.
    
        \textbf{Case 2: $N \in \{ 7, 13, 19\}$}. In this case, we must have $\dim W_{(N-2, 2)} = N(N-3)/2 \in \{14, 65, 152\}$. Hence, Theorem 4 of~\cite{Bannai} leads to a contradiction again.

        Since each case results in a contradiction, a pmf $p\colon  G\to [0,1]$ such that $(G_S, p_S)$ is a 3-design in $U(\h)_S$ cannot exist. Hence, $t_{\max}^{(S)}(G, p) \leq 2$ for any pmf $p\colon  G\to [0,1]$.

        Next, we prove the second statement of the theorem. Under the generic finite symmetry group $S$, the space $\h$ decomposes as before like
        \begin{equation}
            \h \cong \bigoplus_{\lambda = 1}^n V_\lambda \otimes W_\lambda,
        \end{equation}
        where the $V_\lambda$ are the inequivalent $S$-irreps and the $W_\lambda$ are their corresponding multiplicity spaces. Suppose towards contradiction that there exists a pmf $p\colon  G \to [0,1]$ for which $G_S$ is an $S$-symmetric 4-design under $p_S$. By Lemma~\ref{lemma:weighted-t-des}, it again follows that $(G_S, \mathrm{unif})$ is a 4-design in $U(\h)_S$, so that Lemma~\ref{thm:ordinary-designs-thm} implies the $(\sigma_\lambda(G_S), \mathrm{unif})$ are all 4-designs in $U(W_\lambda)$. Corollary 2 of~\cite{Bannai} then implies that $1\leq \dim W_\lambda \leq 2$ for all $\lambda \in [n]$. Consequently, 
        \begin{equation}\label{eq:irrep-deg-inequality}
            \dim \h = \sum_{\lambda=1}^n \dim V_\lambda \dim W_\lambda \leq 2\sum_{\lambda = 1}^n \dim V_\lambda \leq 2\sum_{\lambda = 1}^n (\dim V_\lambda)^2 \leq 2 |S|,
        \end{equation}
        where the last inequality follows from the general representation-theoretic fact that the order of a finite group is equal to the sum of the squares of the degrees of all of its inequivalent irreps. But~\eqref{eq:irrep-deg-inequality} contradicts our assumption that $|S| < \dim \h/2$. Therefore, $t_{\max}^{(S)}(G, p)\leq 3$ for any pmf $p\colon  G \to [0,1]$. 

        The third statement of the theorem is obtained by again using Lemmas~\ref{lemma:weighted-t-des} and~\ref{thm:ordinary-designs-thm} to argue that $(\sigma_\lambda(G), \mathrm{unif})$ is either a 4-design or 5-design in $U(W_\lambda)$, at which point Corollary 2 of Bannai~\cite{Bannai} implies that either \textit{(i)} $\dim W_\lambda = 1$ and $\sigma_\lambda(G)$ is a cyclic group of phases or \textit{(ii)} $\dim W_\lambda = 2$ and $\sigma_\lambda(G) = Z(\sigma_\lambda(G))H_\lambda$, where $Z(\sigma_\lambda(G))$ denotes the center of $\sigma_\lambda(G)$ and $H_\lambda \leq U(W_\lambda)$ is a group isomorphic to $\SL(2, 5)$. Because $(\sigma_\lambda(G), \mathrm{unif})$ is at least a 1-design in $U(W_\lambda)$, the group $\sigma_\lambda(G)$ must act irreducibly on $W_\lambda$ (see again Proposition 11 in~\cite{zhou2025}, Section VA in~\cite{Gross2007}, or Theorem 1 in~\cite{kaposi2026}). Therefore, in case \textit{(ii)}, the center of $\sigma_\lambda(G)$ must be a cyclic group of phases by Schur's Lemma and $H_\lambda$ must be the image of an irreducible representation of $\SL(2, 5)$. The third statement of the theorem then follows from the definition of the map $\sigma_\lambda$ and the fact that the only degree one irrep of $\SL(2, 5)$ is the trivial representation. 
    \end{proof}

\section{Proof of Theorem~\ref{thm:main-thm-4}}\label{sec:proof-of-grp-des}

We first establish a technical lemma. 
\begin{lemma}\label{lemma:supergroup-design}
    Let $K\leq U(\h)$ be a compact group and let $t>0$. Let $H\leq G\leq K$ be finite subgroups. If $(H, \mathrm{unif})$ is a $t$-design in $K$, then so is $(G, \mathrm{unif})$. 
\end{lemma}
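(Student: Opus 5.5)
The plan is to decompose the uniform average over $G$ into an average over cosets of $H$. We then use the $t$-design property of $H$ and the left-invariance of the Haar measure on $K$, in the same spirit as the proofs of Lemma~\ref{lemma:design-with-non-design} and Lemma~\ref{lemma:weighted-t-des}.

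Since $H\leq G$ is a subgroup of the finite group $G$, we may choose a set $R\subset G$ of representatives of the left cosets $gH$. Then $G=\bigsqcup_{r\in R} rH$ is a disjoint union with $|G| = |R|\,|H|$. We write
\begin{equation}
    \frac{1}{|G|}\sum_{g\in G} g^{\otimes t}\otimes \overline{g}^{\otimes t}
    = \frac{1}{|R|}\sum_{r\in R} \left(r^{\otimes t}\otimes \overline{r}^{\otimes t}\right)\left(\frac{1}{|H|}\sum_{h\in H} h^{\otimes t}\otimes\overline{h}^{\otimes t}\right),
\end{equation}
using $(rh)^{\otimes t}\otimes\overline{rh}^{\otimes t} = (r^{\otimes t}\otimes\overline{r}^{\otimes t})(h^{\otimes t}\otimes\overline{h}^{\otimes t})$.

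By hypothesis, the inner average equals $\int_K dk\, k^{\otimes t}\otimes\overline{k}^{\otimes t}$. Since $r\in G\leq K$, left-invariance of the Haar measure on $K$ gives $(r^{\otimes t}\otimes\overline{r}^{\otimes t})\int_K dk\, k^{\otimes t}\otimes\overline{k}^{\otimes t} = \int_K dk\,(rk)^{\otimes t}\otimes\overline{rk}^{\otimes t} = \int_K dk\, k^{\otimes t}\otimes\overline{k}^{\otimes t}$. Averaging over $r\in R$ then yields the desired moment equality~\eqref{eq:moment-equality} for $(G,\mathrm{unif})$.

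An alternative route is to observe that $(G,\mathrm{unif})$ is exactly the weighted subset $(RH, \mathrm{unif}_R\star\mathrm{unif}_H)$. This holds because each $g\in G$ has a unique factorization $g=rh$, so the convolution equals $1/(|R||H|)=1/|G|$ at every point. The conclusion then follows directly from Lemma~\ref{lemma:design-with-non-design} with $(\y,q)=(R,\mathrm{unif})$ placed on the left. I would present this shorter version.

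The only point requiring care is verifying that the convolution is uniform. This rests on the uniqueness of the coset decomposition, and there is no real obstacle beyond that.
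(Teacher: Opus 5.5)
Your proposal is correct and is the paper's proof: the paper also partitions $G$ into left cosets of $H$, factors out the uniform $H$-average, applies the design hypothesis, and absorbs each coset representative using left-invariance of the Haar measure on $K$. Your shorter variant via Lemma~\ref{lemma:design-with-non-design}, with $(R,\mathrm{unif})$ on the left, repackages the same computation and is also valid, since the unique factorization $g=rh$ makes $\mathrm{unif}_R\star\mathrm{unif}_H$ uniform on $G$.
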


\begin{proof}
    Suppose that $(H, \mathrm{unif})$ is a $t$-design in $K$. Let $G/H$ denote the collection of left cosets of $H$. Note that $G/H$ is well-defined even though it is not necessarily a group. Then
    \begin{align}
        \frac{1}{|G|} \sum_{g\in G} g^{\otimes t} \otimes (\overline{g})^{\otimes t} &= 
        \frac{1}{|G/H||H|} \sum_{gH \in G/H} \sum_{h\in H} (gh)^{\otimes t} \otimes (\overline{gh})^{\otimes t} \\
        &=
        \frac{1}{|G/H|} \sum_{gH \in G/H} g^{\otimes t} \otimes \overline{g}^{\otimes t} \left(\frac{1}{|H|} \sum_{h\in H} h^{\otimes t} \otimes \overline{h}^{\otimes t}\right) \\
        &= 
        \frac{1}{|G/H|} \sum_{gH\in G/H} (g^{\otimes t} \otimes \overline{g}^{\otimes t}) \int_K dk \ k^{\otimes t} \otimes \overline{k}^{\otimes t} \\
        &=
        \frac{1}{|G/H|} \sum_{gH \in G/H} \int_K dk \ (gk)^{\otimes t} \otimes (\overline{gk})^{\otimes t} \\
        &= \int_K dk \ k^{\otimes t} \otimes \overline{k}^{\otimes t}.
    \end{align}
    The first equality follows from the fact that $G/H$ partitions $G$. The last equality follows the left invariance of the Haar measure over $K$. 
\end{proof}

We will use the commutant machinery introduced in~\eqref{eq:comm-def}-\eqref{eq:comm-equality} in our proof of Theorem~\ref{thm:main-thm-4}

\begin{theorem}[Theorem~\ref{thm:main-thm-4} restated]\label{thm:main-thm-4-restated}
    Let $t>1$. Then for any non-trivial finite-dimensional Hilbert space $\h$, there exists a finite group $G\leq U(\h)$ and symmetry group $S\leq U(\h)$ such that $U(\h)_S \neq \{e^{i\theta} I: \theta \in \mathbb{R} \}$ and
    \begin{equation}\label{eq:group-t-max-statement-restated}
        t_{\max}^{(S)}(G, \mathrm{unif})\geq t > t_{\max}(G, \mathrm{unif}) = 1.
    \end{equation}
    Moreover, if $\dim \h$ is even and $t\leq 5$, then there exists a finite group $G\leq U(\h)$ and symmetry group $S\leq U(\h)$ such that~\eqref{eq:group-t-max-statement-restated} holds and $G_S$ is non-abelian.
\end{theorem}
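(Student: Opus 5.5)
We build $G$ as a \emph{monomial} (resp.\ block-monomial) group. Its symmetry-respecting part is exactly one of the finite symmetric designs from Theorem~\ref{thm:main-thm-2}. The off-diagonal permutation part makes $G$ irreducible, hence a $1$-design, while keeping it from being a $2$-design. Throughout, $d\coloneqq\dim\h\geq 2$.

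\textbf{First statement.} Fix the computational basis $\{|i\rangle\}_{i=1}^d$ and let $S$ be the group of all diagonal unitaries. Its natural representation is multiplicity-free: the $d$ lines $\Span\{|i\rangle\}$ carry pairwise inequivalent characters. By Schur's Lemma $U(\h)_S$ is again the diagonal unitaries, which is not $\{e^{i\theta}I\}$ since $d\geq 2$. Let $\omega=\exp(2\pi i/(t+1))$, let $D=\{\mathrm{diag}(\omega^{j_1},\dots,\omega^{j_d})\}$, and let $P$ be the group of permutation matrices. Set $G=DP$, a finite group because $P$ normalizes $D$.

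An element $DQ$ with $Q\in P$ commutes with every diagonal unitary iff $Q=I$. Hence $G_S=D$, and $(D,\mathrm{unif})$ is an $S$-symmetric $t$-design by part 1 of Theorem~\ref{thm:main-thm-2} (equivalently Proposition~\ref{prop:symmetric-design-construction}). This gives $t_{\max}^{(S)}(G,\mathrm{unif})\geq t$.

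For $t_{\max}(G,\mathrm{unif})=1$ we use the commutant criterion~\eqref{eq:comm-equality}.
\begin{itemize}
\item[] $1$-design. An operator commuting with $D$ is diagonal, since the characters of $D$ on the basis lines are distinct when $t\geq 1$. A diagonal operator commuting with all permutations is a scalar. So $\comm(G)=\mathbb{C}I=\comm(U(\h))$.
\item[] Not a $2$-design. The operator $\Pi=\sum_i|ii\rangle\langle ii|$ satisfies $(DQ)^{\otimes 2}\Pi(DQ)^{\dagger\otimes 2}=\Pi$, because $g\otimes g$ maps $|ii\rangle$ to a phase times $|\pi(i)\pi(i)\rangle$ and the phases cancel under conjugation. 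So $\Pi\in\comm(\{g^{\otimes 2}\})$. However, $\Pi\notin\Span\{I,\mathrm{SWAP}\}=\comm(\{u^{\otimes 2}:u\in U(\h)\})$ for $d\geq 2$.
\end{itemize}

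\textbf{Second statement.} Let $d=2k$ and write $\h=\bigoplus_{\lambda=1}^k W_\lambda$ with $\dim W_\lambda=2$. Take $S=\{\bigoplus_\lambda e^{i\theta_\lambda}I_{W_\lambda}\}$, so that $U(\h)_S=\bigoplus_\lambda U(W_\lambda)$ is non-abelian. Inside it take $H=\{\bigoplus_\lambda g_\lambda: g_\lambda\in C_\lambda H_\lambda\}$ with $H_\lambda\cong\SL(2,5)$ and $C_\lambda=\langle e^{2\pi i/6}I_{W_\lambda}\rangle$. As in the proof of part 3 of Theorem~\ref{thm:main-thm-2}, $(H,\mathrm{unif})$ is an $S$-symmetric $5$-design, hence a $t$-design for every $t\leq 5$. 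Identify the $W_\lambda$ with each other and let $G=HP_k$, where $P_k$ permutes the blocks. Exactly as before:
\begin{itemize}
\item[] $G_S=H$, since a nontrivial block permutation fails to commute with block phases that are distinct across blocks. $H$ is non-abelian.
\item[] $G$ is irreducible, since each block is irreducible under $\SL(2,5)$ and $P_k$ is transitive on blocks. So $G$ is a $1$-design.
\item[] The projector onto $\bigoplus_\lambda W_\lambda\otimes W_\lambda$ lies in $\comm(\{g^{\otimes 2}:g\in G\})$ but not in $\Span\{I,\mathrm{SWAP}\}$ once $k\geq 2$. So $G$ is not a $2$-design.
\end{itemize}
Uniform weighting on $G_S$ is automatic, and Lemmas~\ref{lemma:weighted-t-des} and~\ref{lemma:supergroup-design} cover any needed passage between weightings or to supergroups.

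\textbf{Main obstacle.} The delicate point is the not-a-$2$-design step in degenerate dimensions. For $d=2$ in the second statement we have $k=1$, and then $G=H$ is itself a $5$-design in $U(2)$, so this construction breaks. That case needs a separate choice, for instance a different splitting of $S$, or the statement must be read as $d\geq 4$. I would check that edge case first. For $d=1$ the condition $U(\h)_S\neq\{e^{i\theta}I\}$ cannot hold, so "non-trivial" must mean $d\geq 2$. Everything else is routine verification of commutants.
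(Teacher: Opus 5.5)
Your proposal is correct and is essentially the paper's proof. The paper writes $\dim\h=nK$, with $K=1$ for the first claim and $K=2$ (blocks carrying a copy of $\SL(2,5)$) for the second. It takes $S$ to be the block phases and builds $G$ from independent $(t+1)$-th-root block phases, block $t$-designs and block permutations. It rules out the $2$-design property with $\bigoplus_\lambda\mathbb{F}_\lambda$, where you use the projector onto $\bigoplus_\lambda W_\lambda\otimes W_\lambda$; for $K=1$ these are the same operator. You identify $G_S=H$ exactly, while the paper only uses $H\le G_S$ together with Lemma~\ref{lemma:supergroup-design}.

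One small fix. In the second claim, irreducibility of $G$ needs more than transitivity of $P_k$ and irreducibility of each block: the $W_\lambda$ must also be pairwise \emph{inequivalent} $H$-representations. This holds because $H$ acts independently on each block, for instance through the block phases. Without it, ``diagonal'' invariant subspaces are not excluded; the paper states this step explicitly.

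Your concern about $\dim\h=2$ is justified, and the paper's proof has the same blind spot. For $n=1$ its witness $\bigoplus_\lambda\mathbb{F}_\lambda$ is just $\mathbb{F}\in\Span\{I,\mathbb{F}\}$, and $G=C_1G_1$ is itself at least a $2$-design. However, your first suggested remedy, a different choice of $S$, cannot work:
\begin{itemize}
\item[] If $S\le U(\mathbb{C}^2)$ contains a non-scalar element, then $U(\h)_S$ consists of unitaries diagonal in that element's eigenbasis and is abelian.
\item[] So a non-abelian $G_S$ forces $S$ to consist of scalars. Then $G_S=G$ and $U(\h)_S=U(\h)$.
\item[] Hence $t_{\max}^{(S)}(G,\mathrm{unif})=t_{\max}(G,\mathrm{unif})$, which contradicts~\eqref{eq:group-t-max-statement-restated}.
\end{itemize}
The ``moreover'' clause is therefore false for $\dim\h=2$ and must be read with $\dim\h\ge 4$, just as the first clause needs $\dim\h\ge 2$, as you noted. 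That is exactly the range where both your argument and the paper's go through.
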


\begin{proof}
    Let $\h$ be an arbitrary non-trivial finite-dimensional Hilbert space. Let us write $\dim \h = nK$, where $n\geq 1$ and $K$ is a positive integer such for any $K$-dimensional Hilbert space $\h_K$, there exists a finite subgroup $H\leq U(\h_K)$ for which $(H, \mathrm{unif})$ is a $t$-design. Note that we may always take $K = 1$, because there exists a uniformly weighted group $t$-design over any one-dimensional Hilbert space. Furthermore, if $\dim \h$ is even and $t\leq 5$, we may take $K = 2$ because there exists a (non-abelian) uniformly weighted group $5$-design over any two-dimensional Hilbert space~\cite{Bannai, Gross2007}.
    
    Let $W_1, \dots, W_n$ be a collection of pairwise orthogonal $K$-dimensional subspaces of $\h$ for which $\h = \bigoplus_{\lambda = 1}^n W_\lambda$. For $\lambda\in [n]$, let $\{|w_i^{(\lambda)}\rangle\}_{i\in [K]}$ be an orthonormal basis for $W_\lambda$, so that $\{|w_i^{(\lambda)}\rangle\}_{i\in [K], \lambda\in [n]}$ forms an orthonormal basis for $\h$. For $\lambda,\lambda' \in [n]$, define the isomorphism $\Pi_{\lambda \to \lambda'}: W_\lambda \to W_\lambda'$ by
        \begin{equation}
            \Pi_{\lambda\to\lambda'} = \sum_{i=1}^K |w_i^{(\lambda')}\rangle\langle w_i^{(\lambda)}|.
        \end{equation}
        Let $(G_1, \mathrm{unif})$ be a unitary $t$-design in $U(W_1)$, which must exist because $\dim W_1 = K$. For $\lambda \in [n]$, define $G_\lambda = \Pi_{1\to \lambda} G_1 \Pi_{\lambda \to 1}$. With the understanding that $\mathfrak{S}_n$ denotes the symmetric group over $n$ letters, let us define
        \begin{equation}
            G = \left\{\sum_{\lambda = 1}^n \exp\left(\frac{2\pi i}{t+1} j_\lambda\right) \Pi_{\lambda \to \sigma(\lambda)}g_\lambda: 0 \leq j_\lambda \leq t, \sigma \in \mathfrak{S}_n, g_\lambda \in G_\lambda\right\}.
        \end{equation}
        Define also
        \begin{equation}
            S = \left\{\bigoplus_{\lambda = 1}^n e^{i\theta_\lambda} I_{W_\lambda}: \theta_\lambda \in \mathbb{R} \right\}
        \end{equation}       
        and note that $S$ is a subgroup of $U(\h)$.
        We will show that $G$ is a subgroup of $U(\h)$ for which $t_{\max}(G, \mathrm{unif})= 1$ while $t_{\max}^{(S)}(G, \mathrm{unif})=t$. 

        First, note that for $a,b,c \in [n]$, we have \textit{(i)} $\Pi_{a\to a} = \Id_{W_a}$, \textit{(ii)} $\Pi_{a\to b}^\dagger = \Pi_{b\to a}$, \textit{(iii)} $\Pi_{b\to c} \Pi_{a\to b} = \Pi_{a\to c}$. With these identities, let us first verify that the elements of $G$ are unitary operators. We have
        \begin{align}
            &\left(\sum_{\lambda = 1}^n \exp \left( \frac{2\pi i}{t+1} j_\lambda \right) \Pi_{\lambda \to \sigma(\lambda)}g_\lambda \right)\left(\sum_{\lambda' = 1}^n\exp\left(\frac{2\pi i}{t+1} j_{\lambda'} \right)\Pi_{\lambda' \to \sigma(\lambda')} g_{\lambda'} \right)^\dagger \\ &= 
            \sum_{\lambda, \lambda' = 1}^n \exp\left(\frac{2\pi i}{t+1} (j_\lambda - j_{\lambda'}) \right) \Pi_{\lambda \to \sigma(\lambda)}g_\lambda g_{\lambda'}^\dagger \Pi_{\lambda' \to \sigma(\lambda')}^\dagger \\
            &= \sum_{\lambda = 1}^n \Pi_{\lambda \to \sigma(\lambda)}g_\lambda g_{\lambda}^\dagger \Pi_{\lambda \to \sigma(\lambda)}^\dagger \\
            &= \sum_{\lambda = 1}^n \Pi_{\lambda \to \sigma(\lambda)}\Pi_{\sigma(\lambda) \to \lambda} \\
            &= \sum_{\lambda = 1}^n I_{W_\lambda}\\
            &= I_\h,
        \end{align}
        where the third line follows from the fact that $g_\lambda$ and $g_{\lambda'}$ are supported on orthogonal subspaces if $\lambda \neq \lambda'$.

        Next, let us verify that $G$ is indeed a subgroup of $U(\h)$. Fix $\rho, \sigma\in \mathfrak{S}_n$ to be arbitrary permutations and for all $\lambda \in [n]$, let $g_\lambda, h_\lambda \in G_\lambda$ and $0\leq j_\lambda, k_\lambda \leq t$ be arbitrary. Let us write $g_\lambda = \Pi_{1\to \lambda} g^{(\lambda)} \Pi_{\lambda \to 1}$ and $h_\lambda = \Pi_{1\to \lambda} h^{(\lambda)} \Pi_{\lambda \to 1}$, where $g^{(\lambda)}, h^{(\lambda)} \in G_1$. Then
        \begin{align}
            &\left(\sum_{\lambda = 1}^n \exp\left(\frac{2\pi i}{t+1} j_\lambda \right) \Pi_{\lambda \to \sigma(\lambda)} g_\lambda \right)\left(\sum_{\lambda' = 1}^n \exp\left(\frac{2\pi i}{t+1} k_{\lambda'} \right) \Pi_{\lambda' \to \rho(\lambda')}h_{\lambda'} \right)^{\dagger} \\
            &= \sum_{\lambda, \lambda' = 1}^n \exp\left(\frac{2\pi i}{t+1} (j_\lambda - k_{\lambda'}) \right) \Pi_{\lambda \to \sigma(\lambda)} g_\lambda h_{\lambda'}^\dagger \Pi_{\lambda' \to \rho(\lambda')}^\dagger
            \\
            &= 
            \sum_{\lambda = 1}^n \exp\left( \frac{2\pi i}{t+1}(j_\lambda - k_{\lambda}) \right) \Pi_{\lambda \to \sigma(\lambda)} g_\lambda h_{\lambda}^\dagger \Pi_{\lambda \to \rho(\lambda)}^\dagger \\
            &= \sum_{\lambda = 1}^n \exp\left(\frac{2\pi i}{t+1}(j_\lambda - k_\lambda) \right) 
            \Pi_{\lambda \to \sigma(\lambda)}\Pi_{1\to \lambda}g^{(\lambda)}\Pi_{\lambda \to 1} \Pi_{1\to \lambda} h^{(\lambda)\dagger}\Pi_{\lambda \to 1}\Pi_{\rho(\lambda) \to \lambda} \\
            &= \sum_{\lambda = 1}^n \exp\left(\frac{2\pi i}{t+1}(j_\lambda - k_\lambda) \right)
            \Pi_{1\to \sigma(\lambda)} g^{(\lambda)}h^{(\lambda)}\Pi_{\rho(\lambda) \to 1} \\
            &= \sum_{\lambda = 1}^n \exp\left(\frac{2\pi i}{t+1}(j_\lambda - k_\lambda) \right)
            \Pi_{1\to \sigma(\lambda)} 
            \Pi_{\rho(\lambda) \to 1} \Pi_{1\to \rho(\lambda)} g^{(\lambda)}h^{(\lambda)}\Pi_{\rho(\lambda) \to 1} \\
            &= \sum_{\lambda = 1}^n \exp\left(\frac{2\pi i}{t+1}(j_\lambda - k_\lambda) \right)
            \Pi_{\rho(\lambda) \to \sigma(\lambda)} \Pi_{1\to \rho(\lambda)} g^{(\lambda)}h^{(\lambda)}\Pi_{\rho(\lambda) \to 1} \\
            &= 
            \sum_{\mu = 1}^n \exp\left( \frac{2\pi i}{t+1}(j_{\rho^{-1}(\mu)} - k_{\rho^{-1}(\mu)}) \right) \Pi_{\mu\to (\sigma\rho^{-1})(\mu)}\Pi_{1\to \mu} g^{(\rho^{-1}(\mu))}h^{(\rho^{-1}(\mu))}\Pi_{\mu \to 1} 
        \end{align}
        The last equality follows from the variable change $\mu = \rho(\lambda)$. Since $\Pi_{1\to \mu} g^{(\rho^{-1}(\mu))}h^{(\rho^{-1}(\mu))} \Pi_{\mu \to 1}$ lies in $G_\mu$ by its definition, we conclude that the operator in the last line is contained in $G$. Therefore, $G$ is a subgroup of $U(\h)$.

        Observe that since $G_1$ (resp. $U(W_1)$) is isomorphic to $G_\lambda$ (resp. $U(W_\lambda)$) via $g\mapsto \Pi_{1\to \lambda} g \Pi_{\lambda \to 1}$, we have that $(G_\lambda, \mathrm{unif})$ is a $t$-design in $U(W_\lambda)$. Then note that
        \begin{align}
            H \coloneqq \left\{ \bigoplus_{\lambda = 1}^n \exp\left( \frac{2\pi i}{t+1}j_\lambda \right)g_\lambda: 0\leq j_\lambda \leq t, g_\lambda \in G_\lambda \right\} \leq G
        \end{align}
        by construction and also that $[h, s]=0$ for all $h\in H, s\in S$. Consider the pmf $p: H \to [0,1]$ defined by
        \begin{equation}
            p\left(\bigoplus_{\lambda = 1}^n \exp\left(\frac{2\pi i}{t+1} j_\lambda \right) g_\lambda \right) \coloneqq \prod_{\lambda = 1}^n (\mu_\lambda \star \nu_\lambda)\left( \exp\left(\frac{2\pi i}{t+1} j_\lambda \right) g_\lambda \right), 
        \end{equation}
        where $\mu_\lambda$ and $\nu_\lambda$ are the uniform pmfs over $\left\langle \exp\left(2\pi i/(t+1) j_\lambda\right) I_{W_\lambda}\right\rangle$ and $G_\lambda$, respectively. It follows from Proposition~\ref{prop:symmetric-design-construction} of Section~\ref{sec:symmetry-boosted-design-proofs} that $(H, p)$ is a $t$-design in $U(\h)_S$. Then Lemma~\ref{lemma:weighted-t-des} from Section~\ref{sec:group-thms} implies that $(H, \mathrm{unif})$ is also a $t$-design in $U(\h)_S$. Finally, Lemma~\ref{lemma:supergroup-design} of the present section implies that $(G_S, \mathrm{unif})$ is a $t$-design in $U(\h)_S$. 

        Let us now show that $(G, \mathrm{unif})$ is a 1-design in $U(\h)$. Suppose that $L\in \comm(G)$. Naturally, $L$ commutes with every element of $H$. Note that the subspaces $W_\lambda$ are all invariant under $H$. Moreover, if $W_\lambda$ has a non-trivial proper subspace invariant under $H$, then then such a subspace is invariant under $G_\lambda$; since $G_\lambda \leq U(W_\lambda)$ with the uniform distribution is a $t$-design in $U(W_\lambda)$ for $t > 1$, the group $G_\lambda$ must act irreducibly on $W_\lambda$ by Lemma (see, for example, Proposition 11 in~\cite{zhou2025}, Section VA in~\cite{Gross2007}, or Theorem 1 in~\cite{kaposi2026}). Consequently, the subspaces $W_\lambda$ are all irreps of $H$. Moreover, the structure of $H$ implies that the $W_\lambda$ must be inequivalent irreps of $H$. Hence, Schur's Lemma implies that
        \begin{equation}
            L = \bigoplus_{\lambda = 1}^n c_\lambda I_{W_\lambda}
        \end{equation}
        for some $c_\lambda \in \mathbb{C}$. But since $L$ commutes with all of $G$, we must also have that
        \begin{align}
            \sum_{\lambda = 1}^n c_{\rho(\lambda)} \Pi_{\lambda \to \rho(\lambda)} = L\left(\sum_{\lambda = 1}^n \Pi_{\lambda \to \rho(\lambda)}\right) =
            \left(\sum_{\lambda = 1}^n \Pi_{\lambda \to \rho(\lambda)} \right)L =
            \sum_{\lambda = 1}^n c_\lambda \Pi_{\lambda \to \rho(\lambda)}
        \end{align}
        for all $\rho \in \mathfrak{S}_n$. It follows that $c_{\rho(\lambda)} = c_{\lambda}$ for all $\rho \in \mathfrak{S}_n$ and $\lambda \in [n]$. Consequently, $L$ is a scalar multiple of the identity, and so lies in $\comm(U(\h))$. Hence, $(G, p^{\mathrm{unif}})$ is a 1-design in $U(\h)$. 

        It remains to show that $(G, \mathrm{unif})$ is not a 2-design in $U(\h)$. For this purpose, consider the operator
        \begin{equation}
            L \coloneqq \bigoplus_{\lambda = 1}^n \mathbb{F}_{\lambda} \in \End(\h^{\otimes 2}), 
        \end{equation}
        where $\mathbb{F}_{\lambda}$ is the swap operator over the space $W_\lambda \otimes W_\lambda$. Let $g$ be an arbitrary element of $G$, so that we may write
        \begin{equation}
            g = \sum_{\lambda = 1}^n \exp\left(\frac{2\pi i}{t+1} j_\lambda\right) \Pi_{\lambda \to \sigma(\lambda)} g_\lambda.
        \end{equation}
        Then, 
        \begin{align}
            g^{\otimes 2} L &= \sum_{\mu, \lambda, \lambda' = 1}^n \exp\left(\frac{2\pi i}{t+1}(j_\lambda + j_{\lambda'})\right) \Pi_{\lambda \to \sigma(\lambda)} g_\lambda \otimes \Pi_{\lambda' \to \sigma(\lambda')} g_{\lambda'} \mathbb{F}_{\mu} \\
            &= \sum_{\lambda = 1}^n \exp\left(\frac{4\pi i}{t+1}j_\lambda\right) \Pi_{\lambda \to \sigma(\lambda)}g_\lambda \otimes \Pi_{\lambda \to \sigma(\lambda)} g_\lambda \mathbb{F}_\lambda \\
            &= \sum_{\lambda = 1} \exp\left(\frac{4\pi i}{t+1}j_\lambda\right) \mathbb{F}_{\sigma(\lambda)} \Pi_{\lambda \to \sigma(\lambda)} g_\lambda \otimes \Pi_{\lambda \to \sigma(\lambda)} g_\lambda \\
            &= \sum_{\mu, \lambda, \lambda = 1}^n \exp\left(\frac{2\pi i}{t+1}(j_\lambda + j_{\lambda'})\right) \mathbb{F}_{\mu} \Pi_{\lambda \to \sigma(\lambda)}g_\lambda \otimes \Pi_{\lambda' \to \sigma(\lambda')} g_{\lambda'} \\
            &= Lg^{\otimes 2}.
        \end{align}
        Here, the first and fourth equalities follow from the fact that $\mathbb{F}_{\mu}$ is supported exclusively on the subspace $W_\mu \otimes W_\mu \subset \h$, which is orthogonal to $W_\lambda \otimes W_{\lambda'}$ unless $\lambda = \lambda' = \mu$. Therefore, 
        $L \in \comm(\{g^{\otimes 2}: g\in G\})$. Note that because all subspaces of the form $W_{\lambda}\otimes W_{\lambda'}$ for $\lambda \neq \lambda'$ lie in the kernel of $L$, the operator $L$ cannot be written as a linear combination of the swap $\mathbb{F}$ and identity operators over the full Hilbert space $\h \otimes \h$, and hence does not lie in $\comm(\{u^{\otimes 2}: u\in U(\h)\})$. Therefore, $(G, \mathrm{unif})$ is not a 2-design in $U(\h)$.
        
    \end{proof}

\section{Discussion and future directions}

In our work, we construct finite unitary ensembles whose design strength is boosted under a broad family of symmetries. Such ensembles demonstrate an intriguing relationship between symmetry and randomness. However, our construction is limited by the fact that it only applies to symmetries satisfying a technical representation-theoretic condition, as inclusive as that condition may be. To establish that the existence of symmetry-boosted designs is truly a generic property of symmetry in the mathematical framework of designs, it is of interest to show that symmetry-boosted designs exist for all closed symmetry groups. An approach to proving such a result is to use the general condition for symmetry-boosted designs provided in Proposition~\ref{prop:symmetric-design-construction} and study the common zeros of the zonal orthogonal polynomials discussed in Section~\ref{sec:more-zonal-spher-func}. In a separate direction, our work provides an explicit construction of symmetry-boosted unitary designs but does not construct them in terms of concrete local quantum gates. Finding gate compilations is essential for symmetry-boosted designs to be experimentally useful in randomized protocols with symmetry constraints. 

We also discuss the existence and structure of symmetric designs with a group structure and derive various bounds on the symmetric design strength of general unitary ensembles whose operators form finite groups. In particular, we show that any such ensemble must have a symmetric design strength of at most two for global on-site $\mathrm{U}(1)$ and $\mathrm{SU}(2)$ symmetries. It seems plausible that with more combinatorial arguments, the proof of Theorem~\ref{thm:main-thm-3} in Section~\ref{sec:proof-of-grp-des} may be strengthened so that the upper bound on symmetric design strength of general finite group ensembles is indeed one for all $N$ sufficiently large, thus matching the symmetric design strength of the Clifford group under these symmetries~\cite{Mitsuhashi_2023}. Finally, it is interesting to consider whether the property that the symmetric design strength of the uniformly weighted Clifford group is strictly upper bounded by its unitary design strength~\cite{Mitsuhashi_2023} is a consequence of the detailed structure of the Clifford group, or whether this is merely a property of any finite group $t$-design, \textit{i.e.}, a uniformly weighted finite group that is a unitary $t$-design, for $t>1$. We emphasize that this property does not hold for general finite groups of unitaries with a unitary design strength of one, as we have indeed constructed explicit families of such finite group ensembles and symmetry groups in arbitrary dimension that have a symmetric design strength strictly greater than one in Theorem~\ref{thm:main-thm-4}. Nevertheless, it may be possible to generalize the bounding relationship between the symmetric design strength and unitary design strength of the Clifford group to arbitrary group 2-designs using more sophisticated group-theoretic arguments. 

\paragraph*{Acknowledgments.}  C.V. would like to thank Shivan Mittal, Bin Yan, and Benjamin Lovitz for helpful and encouraging discussions.
This work was supported by National Science Foundation Grant No.~2442410.
\printbibliography

\appendix
\section{Convolutions of finitely-supported functions over groups}\label{app:associativity}

Let $K$ be an arbitrary group. For any \textit{finite} $\mathcal{X}, \mathcal{Y} \subset K$ and functions $p\colon  \mathcal{X} \to \mathbb{C}$ and $q: \mathcal{Y} \to \mathbb{C}$, we may define their convolution $p\star q: \x\y \to \mathbb{R}$ by 
\begin{equation}
    (p\star q)(g) = \sum_{\stackrel{(x,y) \in \x\times \y:}{xy = g}} p(x)q(y).
\end{equation}
Here, we establish some elementary properties of convolutions as defined here. 

\begin{proposition}\label{prop:conv-properties}
    The following hold:
    \begin{enumerate}
        \item The convolution of functions over finite subsets of $K$ is associative.
        \item For any sequence of complex-valued functions $p_1: \x_1 \to \mathbb{C}$, \dots, $p_k: \x_k \to \mathbb{C}$ with $\x_1, \dots, \x_k \subset K$ finite, we have
        \begin{equation}
              (p_1\star \dots \star p_k)(g) = \sum_{\stackrel{(x_1, \dots, x_k) \in \x_1 \times \dots \times \x_k:}{x_1\dots x_k = g}} p_1(x_1) \dots p_k(x_k) 
        \end{equation}
        for all $g\in \x_1 \dots \x_k$.
       \item For any sequence of complex-valued functions $p_1: \x_1 \to \mathbb{C}$, \dots, $p_k: \x_k \to \mathbb{C}$ with $\x_1, \dots, \x_k \subset K$ finite and any function $F: \x_1 \dots \x_k \to \mathbb{C}$, we have
       \begin{equation}
              \sum_{g\in \x_1 \dots \x_k}  (p_1\star \dots \star p_k)(g) F(g) = \sum_{x_1 \in \x_1} \dots \sum_{x_k \in \x_k} \left( p_1(x_1) \dots p_k(x_k) \right) F(x_1 \dots x_k).
       \end{equation}
     \end{enumerate}
\end{proposition}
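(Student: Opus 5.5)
We prove the three parts of Proposition~\ref{prop:conv-properties} in order, each by unwinding the definition of the convolution as a finite sum. Throughout we extend every function $p\colon \x\to\mathbb{C}$ by zero to all of $K$. Then $p\star q$ is the usual convolution of finitely supported functions on $K$, and its support lies in $\x\y$. This removes all bookkeeping about domains, since sums over $\x\y$ can be replaced by sums over $K$ whenever the summand vanishes outside $\x\y$.

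\emph{Associativity.} Let $p\colon\x\to\mathbb{C}$, $q\colon\y\to\mathbb{C}$, $r\colon\mathcal{Z}\to\mathbb{C}$, and fix $g\in\x\y\mathcal{Z}$. We expand both sides.
\begin{equation}
((p\star q)\star r)(g)=\sum_{\substack{(h,z)\in \x\y\times\mathcal{Z}:\\ hz=g}}\ \sum_{\substack{(x,y)\in\x\times\y:\\ xy=h}} p(x)q(y)r(z).
\end{equation}
The map $(x,y,z)\mapsto (xy,z)$ groups the triples with $xyz=g$ according to the value of $h=xy$. The double sum is therefore exactly $\sum_{(x,y,z):\,xyz=g}p(x)q(y)r(z)$. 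The same regrouping, now by $h=yz$, shows that $(p\star(q\star r))(g)$ equals this same triple sum. This proves associativity.

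\emph{The $k$-fold formula.} We induct on $k$. The case $k=1$ is trivial and the case $k=2$ is the definition. For the inductive step we write $p_1\star\dots\star p_k=(p_1\star\dots\star p_{k-1})\star p_k$, which is legitimate by part~1. We then apply the definition and the inductive hypothesis, and perform the same regrouping of tuples $(x_1,\dots,x_k)$ by the value of $x_1\cdots x_{k-1}$.

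\emph{The summation identity.} We substitute part~2 into the left-hand side. This gives
\begin{equation}
\sum_{g\in\x_1\cdots\x_k}\ \sum_{\substack{(x_1,\dots,x_k):\\ x_1\cdots x_k=g}}p_1(x_1)\cdots p_k(x_k)F(x_1\cdots x_k).
\end{equation}
The fibers $\{(x_1,\dots,x_k): x_1\cdots x_k=g\}$, for $g\in\x_1\cdots\x_k$, partition $\x_1\times\dots\times\x_k$. Hence the double sum collapses to the unrestricted sum over the product set, which is the right-hand side.

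Nothing in the argument is a genuine obstacle. The only point needing care is the reindexing of sums over fibers of the multiplication map, which is a disjoint-union argument that holds because all sets are finite. This argument is used three times, so we would state it once as a general observation and cite it in each part.
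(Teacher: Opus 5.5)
Your proposal is correct and follows essentially the same route as the paper. Like the paper, you expand both bracketings of the triple convolution into the single sum over triples $(x,y,z)$ with $xyz=g$, extend to $k$ factors by induction, and deduce the summation identity from the fact that the fibers of the multiplication map partition $\x_1\times\dots\times\x_k$. Your extension-by-zero convention is only a bookkeeping convenience; the paper instead tracks the domains $\x\y$ and $\x\y\mathcal{Z}$ directly.
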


\begin{proof}
Let $p,q,r$ be complex-valued functions defined over finite subsets $\x,\y,\z \subset K$, respectively.
\begin{align}
        \left(p\star (q\star r)\right)(g) &= \sum_{\stackrel{(x, a) \in \x \times \y\z:}{xa = g}} p(x)(q\star r)(a) \\ \label{eq:conv-assoc-1}
        &= \sum_{\stackrel{(x, a) \in \x \times \y\z:}{xa = g}}\sum_{\stackrel{(y, z) \in \y \times \z:}{yz = a}} p(x)q(y)r(z) \\
        &= \sum_{\stackrel{(x,y, z) \in \x\times\y\times \z:}{xyz = g}} p(x)q(y)r(z) \\ \label{eq:conv-assoc-3}
        &= \sum_{\stackrel{(a, z) \in \x\y \times \z:}{az = g}} \sum_{\stackrel{(x, y) \in \x\times \y:}{xy = a}} p(x)q(y)r(z) \\
        &= \sum_{\stackrel{(a, z) \in \x\y \times \z:}{az = g}} (p\star q)(a) r(z) \\
        &= \left((p\star q)\star r\right)(g).
\end{align} 
Therefore, the convolution is associative, so that the notation $p_1 \star \dots \star p_k$ is unambiguous for any sequence of complex-valued functions $p_1, \dots, p_k$ over finite subsets of $K$. By combining the computations between~\eqref{eq:conv-assoc-1} and~\eqref{eq:conv-assoc-3} with an inductive argument, the second claim of the proposition follows. The third claim of the proposition immediately follows from the second. 
\end{proof}

\section{Constructing block diagonal designs}\label{app:block-diag-designs}
    We first quote a simple algebraic identity that has previously been used in a designs context~\cite{Iosue_2024, Iosue_2024_PRX, NAKATA_2013}. To provide the statement, we establish the notation that for any integer $m\geq 1$, tuple $\mathbf{p} \in [n]^t$ and $\pi \in S_t$, we set $\pi(\mathbf{p}) \coloneqq (p_{\pi(1)}, \dots, p_{\pi(t)})$. 
    \begin{lemma}\label{lemma:exponentials-app}
        Let $m, t\geq 1$. Then for any $\mathbf{p}, \mathbf{q} \in [m]^t$ we have
        \begin{multline}
        \label{eq:exp-identity}
        \frac{1}{(t+1)^m}\sum_{j_1 = 0}^t \dots \sum_{j_n = 0}^t \exp\left( \frac{2\pi i}{t+1}(j_{p_1} - j_{q_1} + j_{p_2} - j_{q_2} + \dots + j_{p_t} - j_{q_t})\right) \\
        =
        \begin{cases} 
        1, & \textnormal{if there exists } \pi \in S_t\textnormal{ s.t. } \mathbf{p} = \pi(\mathbf{q})\\
        0, & \textnormal{otherwise}. 
        \end{cases}
        \end{multline}
    \end{lemma}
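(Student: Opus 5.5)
We prove the identity by direct computation, factoring the average over the independent indices $j_1,\dots,j_m$ into a product of geometric sums. Note that the sum is over $m$ indices; the upper index $n$ in the display is presumably meant to be $m$. For each letter $a\in[m]$, let $\alpha_a \coloneqq |\{k: p_k = a\}|$ and $\beta_a \coloneqq |\{k : q_k = a\}|$. Then the exponent can be rewritten as
\begin{equation}
\frac{2\pi i}{t+1}\sum_{k=1}^t (j_{p_k}-j_{q_k}) = \frac{2\pi i}{t+1}\sum_{a=1}^m (\alpha_a-\beta_a)\, j_a ,
\end{equation}
so the left-hand side of~\eqref{eq:exp-identity} equals
\begin{equation}
\prod_{a=1}^m \left(\frac{1}{t+1}\sum_{j_a=0}^{t} \exp\left(\frac{2\pi i}{t+1}(\alpha_a-\beta_a) j_a\right)\right).
\end{equation}

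Each factor is an average of powers of a $(t+1)$-th root of unity $\omega^{c_a}$, where $\omega = e^{2\pi i/(t+1)}$ and $c_a = \alpha_a - \beta_a$. The factor equals $1$ if $(t+1)\mid c_a$ and $0$ otherwise. For $c_a \not\equiv 0 \pmod{t+1}$ this follows from the geometric series: $\sum_{j=0}^{t}\omega^{c_a j} = (\omega^{c_a(t+1)}-1)/(\omega^{c_a}-1)=0$.

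The main step, which is the only mildly delicate point, is to use the bounds $0\le \alpha_a,\beta_a\le t$, which give $|c_a|\le t < t+1$. Hence $(t+1)\mid c_a$ holds if and only if $c_a = 0$. The product is therefore $1$ exactly when $\alpha_a=\beta_a$ for every $a\in[m]$, and $0$ otherwise.

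Finally, $\alpha_a = \beta_a$ for all $a$ says precisely that $\mathbf p$ and $\mathbf q$ have the same multiset of entries. This is equivalent to the existence of $\pi\in S_t$ with $\mathbf p = \pi(\mathbf q)$: one direction holds because permuting the entries preserves the letter counts, and the other because one can match the positions carrying each letter bijectively. This gives the claimed case split.
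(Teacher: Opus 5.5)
Your proof is correct and complete. The paper does not prove this lemma itself. It quotes the identity from prior work~\cite{Iosue_2024, Iosue_2024_PRX, NAKATA_2013} and uses it only in the proof of Proposition~\ref{prop:block-diag-designs}, so there is no in-paper argument to compare against. Your argument supplies the justification the paper omits: you rewrite the exponent via the letter counts $\alpha_a,\beta_a$, factor into $m$ independent averages of powers of $\omega=e^{2\pi i/(t+1)}$, and use $|\alpha_a-\beta_a|\le t<t+1$ to reduce $(t+1)\mid c_a$ to $c_a=0$. You are also right that the summation index should read $j_m$.

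A side benefit of your route is that it isolates the only place where the order of the phases matters, and it shows that bound is sharp. Take phases of any order $k\le t$ (with $m\ge 2$). Let $\mathbf p$ have $k$ entries equal to $a$ and $t-k$ equal to $b\neq a$, and let $\mathbf q=(b,\dots,b)$. Then $c_a=k$ and $c_b=-k$, so the average equals $1$ even though the tuples are not permutations of each other. Hence order $t+1$ for the groups $C_\lambda$ is the smallest for which this step of Proposition~\ref{prop:block-diag-designs} goes through.
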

    
  \begin{proposition}\label{prop:block-diag-designs}
        Let $m,t\geq 1$. Suppose that 
        \begin{equation}
            \h = \bigoplus_{\lambda = 1}^m P_\lambda \otimes Q_\lambda
        \end{equation}
        and let
        \begin{equation}
            K \coloneqq \left\{\bigoplus_{\lambda = 1}^m I_\lambda \otimes u_\lambda: u_\lambda \in U(Q_\lambda)\right\}. 
        \end{equation}
        For each $\lambda \in [n]$, let $C_\lambda \coloneqq \langle \exp(2\pi i/(t+1)) I_{Q_\lambda} \rangle$, let $\mu_\lambda$ denote the uniform distribution over $C_\lambda$, let $(H_\lambda, \nu_\lambda)$ be a $t$-design in $U(Q_\lambda)$, and let $\tilde{H}_\lambda \coloneqq C_\lambda H_\lambda$. Define
        \begin{equation}
            H \coloneqq \left\{\bigoplus_{\lambda = 1}^m I_\lambda \otimes g_\lambda: g_\lambda \in \tilde{H}_\lambda \right\} 
        \end{equation}
        and construct a probability distribution $p$ over $H$ by defining
        \begin{equation}
            p\left(\bigoplus_{\lambda = 1}^m I_{P_\lambda} \otimes g_\lambda \right) = \prod_{\lambda = 1}^n (\mu_\lambda \star \nu_\lambda)(g_\lambda). 
        \end{equation}
        Then the the weighted $K$-subset $(H, p)$ is a $t$-design in $K$.  
    \end{proposition}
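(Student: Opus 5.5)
We verify the moment equality~\eqref{eq:moment-equality} for $K$ directly. Both sides are block operators on $\h^{\otimes t}\otimes\h^{\otimes t}$. Decompose
\begin{equation}
    \h^{\otimes t}\otimes \h^{\otimes t}\cong\bigoplus_{\mathbf{p},\mathbf{q}\in[m]^t} P_{\mathbf{p}}\otimes \overline{P_{\mathbf{q}}}\otimes Q_{\mathbf{p}}\otimes\overline{Q_{\mathbf{q}}},
\end{equation}
where $P_{\mathbf{p}}=P_{p_1}\otimes\dots\otimes P_{p_t}$ and similarly for $Q$. An element $k=\bigoplus_\lambda I\otimes u_\lambda$ acts on the $(\mathbf{p},\mathbf{q})$ block as $I\otimes u_{\mathbf{p}}\otimes\overline{u_{\mathbf{q}}}$, with $u_{\mathbf{p}}=u_{p_1}\otimes\dots\otimes u_{p_t}$. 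Up to a fixed permutation of tensor factors, this is $\bigotimes_\lambda u_\lambda^{\otimes a_\lambda(\mathbf{p})}\otimes\overline{u_\lambda}^{\otimes a_\lambda(\mathbf{q})}$, where $a_\lambda(\mathbf{p})$ is the number of occurrences of $\lambda$ in $\mathbf{p}$. Since $K\cong\prod_\lambda U(Q_\lambda)$ with Haar measure the product measure, and $p$ is a product distribution over the blocks, both sides factorize blockwise over $\lambda$. It therefore suffices to compare, for each $\lambda$ and each pair of exponents $(a,b)=(a_\lambda(\mathbf{p}),a_\lambda(\mathbf{q}))$ with $a,b\le t$, the averages
\begin{equation}
    \mathbb{E}_{\mu_\lambda\star\nu_\lambda}\,g^{\otimes a}\otimes\overline{g}^{\otimes b}
    \quad\text{and}\quad
    \int_{U(Q_\lambda)}du\, u^{\otimes a}\otimes\overline{u}^{\otimes b}.
\end{equation}

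Write $g=\omega^{j}h$ with $\omega=e^{2\pi i/(t+1)}$, $j$ uniform in $\{0,\dots,t\}$, and $h\sim\nu_\lambda$ independent; this is exactly sampling from $\mu_\lambda\star\nu_\lambda$. The phase average contributes $\frac{1}{t+1}\sum_j\omega^{j(a-b)}$. Since $|a-b|\le t$, this equals $\delta_{ab}$. This is the content of Lemma~\ref{lemma:exponentials-app}, and globally the lemma handles all $\lambda$ at once: it kills every block with $\mathbf{p}$ not a permutation of $\mathbf{q}$. On the Haar side the same vanishing holds, because $U(Q_\lambda)$ contains all phases $e^{i\theta}I$ and averaging over them gives $\delta_{ab}$. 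When $a=b$, the average $\mathbb{E}_{\nu_\lambda}h^{\otimes a}\otimes\overline{h}^{\otimes a}$ equals the Haar integral, because $(H_\lambda,\nu_\lambda)$ is a $t$-design and hence an $a$-design for every $a\le t$. The case $a=0$ is trivial.

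Assembling these observations, every $(\mathbf{p},\mathbf{q})$ block of the left side of~\eqref{eq:moment-equality} matches the corresponding block of the Haar integral over $K$. The only care needed, and the step most prone to bookkeeping error, is making the reordering of tensor factors from $(\mathbf{p},\mathbf{q})$-ordering to $\lambda$-grouped ordering precise. This reordering is a fixed unitary that is the same on both sides, so it commutes with the averaging and the equality is unaffected. There is no genuine analytic obstacle. The key point is that the phase group $C_\lambda$ enforces the balance condition $a_\lambda(\mathbf{p})=a_\lambda(\mathbf{q})$ for every $\lambda$, which the $t$-designs $H_\lambda$ alone, possibly closed under global phases only jointly, would not guarantee.
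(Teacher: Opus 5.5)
Your proposal is correct and is essentially the paper's proof: decompose into $(\mathbf{p},\mathbf{q})\in[m]^t\times[m]^t$ blocks, then use the phases in $C_\lambda$ to kill every block where $\mathbf{q}$ is not a permutation of $\mathbf{p}$. The paper does this with Lemma~\ref{lemma:exponentials-app}, which you apply in its factorized per-$\lambda$ form $\frac{1}{t+1}\sum_j\omega^{j(a-b)}=\delta_{ab}$; you then invoke the design property of each $(H_\lambda,\nu_\lambda)$ on the balanced blocks, and use the phase invariance of the Haar measure to see that the same blocks vanish on the $K$ side.
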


    \begin{proof}
         We first recall that for any function $F$ over $\tilde{H}_\lambda$, we have
         \begin{equation}\label{eq:block-diagonal-des-sum-decomp}
             \sum_{g \in \tilde{H}_\lambda} (\mu_\lambda \star \nu_\lambda)(g)F(g) = \frac{1}{t+1}\sum_{j = 0}^t \sum_{h \in H_\lambda} \nu_\lambda(h) F\left(\exp(2\pi i j/(t+1))h\right)
         \end{equation}
         by the definition of the $\star$-product (see \Cref{app:associativity}).
         For $\mathbf{p} \in [n]^t$, let us write $S_t(\mathbf{p}) \coloneqq\{\pi(\mathbf{p}): \pi \in S_t\}$. We then compute: 
        \begin{align}
            &\sum_{h\in H} \nu(h)h^{\otimes t} \otimes (h^\dagger)^{\otimes t}\notag\\
            &= 
            \sum_{g_1 \in \tilde{H}_1} \dots \sum_{g_m \in \tilde{H}_m} \prod_{\lambda =1}^m (\mu_\lambda \star \nu_\lambda)(g_\lambda) \left(\bigoplus_{\lambda = 1}^m I_{P_\lambda} \otimes g_\lambda \right)^{\otimes t} \otimes \left(\bigoplus_{\lambda = 1}^m I_{P_\lambda} \otimes g_\lambda^\dagger \right)^{\otimes t} \\
            &\cong \sum_{g_1 \in \tilde{H}_1} \dots \sum_{g_m \in \tilde{H}_m} \prod_{\lambda = 1}^m (\mu_\lambda \star \nu_\lambda)(g_\lambda)
            \bigoplus_{\mathbf{p} \in [m]^t}  
            \bigoplus_{\mathbf{q} \in [m]^t} 
            \bigotimes_{i=1}^t 
            \left(I_{P_{p_i}} \otimes I_{Q_{q_i}} \otimes g_{p_i} \otimes g_{q_i}^\dagger\right) \\
            &=  
            \bigoplus_{\mathbf{p} \in [m]^t}  
            \bigoplus_{\mathbf{q} \in [m]^t} 
            \sum_{g_1 \in \tilde{H}_1} \dots \sum_{g_m \in \tilde{H}_m} 
            \prod_{\lambda = 1}^m (\mu_\lambda \star \nu_\lambda)(g_\lambda)
            \bigotimes_{i=1}^t 
            \left(I_{P_{p_i}} \otimes I_{Q_{q_i}} \otimes  g_{p_i} \otimes g_{q_i}^\dagger\right) \\
            &= \bigoplus_{\mathbf{p} \in [m]^t}  
            \bigoplus_{\mathbf{q} \in [m]^t}
            \frac{1}{(t+1)^m} \sum_{j_1=0}^t \sum_{h_1\in H_1} \nu_1(h_1) \dots \sum_{j_m = 0}^t \sum_{h_m\in H_m} \nu_m(h_m)
            \bigotimes_{i=1}^t 
            \left(I_{P_{p_i}} \otimes I_{Q_{q_i}} \otimes  h_{p_i} \otimes h_{q_i}^\dagger \right) \notag\\
            &\ \ \ \ \ \ \ \ \ \ \ \ \ \ \ \ \ \ \ \ \ \ \ \ \ \times \exp\left( \frac{2\pi i}{t+1}(j_{p_1} - j_{q_1} + j_{p_2} - j_{q_2} + \dots + j_{p_t} - j_{q_t})\right)  \label{eq:sums-inside-direct-sums}\\
            &= \bigoplus_{\mathbf{p} \in [m]^t} \bigoplus_{\mathbf{q} \in S_t(\mathbf{p})} \sum_{h_1\in H_1} \nu_1(h_1) \dots 
            \sum_{h_m\in H_m} \nu_m(h_m)
            \bigotimes_{i=1}^t 
            \left(I_{P_{p_i}} \otimes I_{Q_{q_i}} \otimes h_{p_i} \otimes h_{q_i}^\dagger\right)  \label{eq:apply-lemma}\\
            &= \bigoplus_{\mathbf{p} \in [m]^t} \bigoplus_{\mathbf{q} \in S_t(\mathbf{p})} 
            \int_{U(Q_1)} du_1 \dots
            \int_{U(Q_m)} du_m 
            \bigotimes_{i=1}^t 
            \left(I_{P_{p_i}} \otimes I_{Q_{q_i}} \otimes u_{p_i} \otimes u_{q_i}^\dagger\right) \label{eq:H_nu-design-property} \\
            &= \int_{U(Q_1)} du_1 \dots
            \int_{U(Q_m)} du_m 
            \bigoplus_{\mathbf{p} \in [m]^t} \bigoplus_{\mathbf{q} \in S_t(\mathbf{p})} \bigotimes_{i=1}^t \left(I_{P_{p_i}} \otimes I_{Q_{q_i}} \otimes u_{p_i} \otimes u_{q_i}^\dagger\right) \\
            &=\int_{U(Q_1)} du_1 \dots
            \int_{U(Q_m)} du_m 
            \bigoplus_{\mathbf{p} \in [m]^t} \bigoplus_{\mathbf{q} \in [m]^t} \bigotimes_{i=1}^t \left(I_{P_{p_i}} \otimes I_{Q_{q_i}} \otimes u_{p_i} \otimes u_{q_i}^\dagger\right) \label{eq:permutations-go-away}\\
            &\cong \int_{U(Q_1)} du_1 \dots
            \int_{U(Q_m)} du_m \left(\bigoplus_{\lambda = 1}^m I_{P_\lambda} \otimes u_\lambda \right)^{\otimes t} \otimes \left(\bigoplus_{\lambda = 1}^m I_{P_\lambda} \otimes u_\lambda^\dagger \right)^{\otimes t} \\
            &= \int_{K} du \ u^{\otimes t} \otimes (u^\dagger)^{\otimes t}. \label{eq:K-as-product-fubini}
        \end{align}
    The first equality follows from the assumed form of $H$. In line \eqref{eq:sums-inside-direct-sums} we bring the sums over $\tilde{H}_\lambda$ inside the direct sum and use~\eqref{eq:block-diagonal-des-sum-decomp} to decompose each expectation value over $\tilde{H}_\lambda$ into an expectation value over phases and an expectation value over $H_\lambda$. Line \eqref{eq:apply-lemma} follows from Lemma~\ref{lemma:exponentials-app}. Line \eqref{eq:H_nu-design-property} follows from the $t$-design property of each $(H_\lambda, \nu_\lambda)$, which applies here because the number of appearances of $g_\lambda$ in the tensor product is equal to the number of appearances of $g_\lambda^\dagger$ since $\mathbf{q}$ is a permutation of $\mathbf{p}$. Line \eqref{eq:permutations-go-away} follows from the fact that integrals of the form $\int_{U(Q_\lambda)} du \ u^{\otimes r} \otimes (u^\dagger)^{\otimes s}$ generically vanish when $r\neq s$. 
    Finally, in \label{eq:K-as-product-fubini} we use the fact that $K\cong U(Q_1) \times \dots \times U(Q_m)$ and Fubini's theorem for the Haar measure. 
\end{proof}

\end{document}